\newtheorem{theorem}{Theorem}[section]
\newtheorem{lemma}[theorem]{Lemma}
\newtheorem{corollary}[theorem]{Corollary}
\theoremstyle{definition}
\newtheorem{definition}[theorem]{Definition}
\theoremstyle{remark}
\newtheorem{remark}[theorem]{Remark}
\def\BState{\State\hskip-\ALG@thistlm}
\numberwithin{equation}{section}
\newcommand{\R}{\mathbb{R}}
\newcommand{\states}{S}
\newcommand{\actions}{A}
\newcommand{\valuespace}{\R^{\states}}
\newcommand{\policyspace}{\actions^\states}
\newcommand{\offsetspace}{\R^{\states \times \actions}}
\newcommand{\defeq}{\stackrel{\mathrm{\scriptscriptstyle def}}{=}}
\newcommand{\valop}{T}
\newcommand{\Prob}{\mathbb{P}}
\newcommand{\E}{\mathbb{E}}
\newcommand{\apxvalcode}{\texttt{ApxVal}}
\newcommand{\apxtranscode}{\texttt{ApxTrans}}
\newcommand{\randomVI}{\texttt{RandomizedVI}}
\newcommand{\sampledRandomVI}{\texttt{SampledRandomizedVI}}
\newcommand{\sampledRandomMonVI}{\texttt{SampledRandomizedMonVI}}
\newcommand{\highprecisionVI}{\texttt{HighPrecisionRandomVI}}
\newcommand{\sublinearRandomVI}{\texttt{SublinearRandomVI}}
\newcommand{\sublinearRandomMonVI}{\texttt{SublinearRandomMonVI}}
\newcommand{\finiteHVI}{\texttt{RandomizedFiniteHorizonVI}}
\newcommand{\varianceReducedFiniteHVI}{\texttt{VarianceReducedFiniteHorizonVI}}
\newcommand{\apxmonvalcode}{\texttt{ApxMonVal}}
\newcommand{\otilde}{\tilde{O}}
\newcommand{\vopt}{v^{*}}
\newcommand{\vones}{\vec{1}}
\newcommand{\failprob}{\delta}
\newcommand{\runtimenearline}{ \otilde \left( \left(|S|^2 |A|  + \frac{|S| |A|}{(1 - \gamma)^3} \right)
 \log\left( \frac{M}{\epsilon} \right) \log\left( \frac{1}{\delta} \right) \right)}
\newcommand{\runtimesublin}{\otilde \left(\frac{|S| |A| M^2}{(1 - \gamma)^4 \epsilon^2} \log \left(\frac{1}{\delta}\right) \right)}
\newcommand{\runtimeip}{\otilde(|S|^{2.5} |A|  \log(M/((1-\gamma)\epsilon)))}
\newcommand{\runtimeipnice}{\otilde\left(|S|^{2.5} |A| \log \left( \frac{M}{(1 - \gamma)\epsilon}\right)\right)}
\newcommand{\innerNumIter}{L}
\newcommand{\innerIter}{l}
\DeclareMathOperator*{\argmax}{argmax}
\def\S{|S|}
\def\A{|A|}
\newcommand{\cO}{O}
\newcommand{\tO}{\otilde}	
\begin{document}

\title{Variance Reduced Value Iteration and\\ Faster Algorithms for Solving Markov Decision Processes}

\author{
	Aaron Sidford \\
	Stanford University \\ 
	\texttt{sidford@stanford.edu}
	\and
	Mengdi Wang \\
	Princeton University \\
	\texttt{mengdiw@princeton.edu}
	\and
	Xian Wu \\
	Stanford University \\
	\texttt{xwu20@stanford.edu}
	\and
	Yinyu Ye \\
	Stanford University \\
	\texttt{yyye@stanford.edu}
	}

\date{}

\maketitle

\begin{abstract}
In this paper we provide faster algorithms for approximately solving discounted Markov Decision Processes in multiple parameter regimes. Given a discounted Markov Decision Process (DMDP) with $|\states|$ states, $|\actions|$ actions, discount factor $\gamma\in(0,1)$, and rewards in the range $[-M, M]$,
we show how to compute an $\epsilon$-optimal policy, with probability $1 - \delta$ in time\footnote{We use $\otilde$ to hide polylogarithmic factors in the input parameters, i.e. $\otilde(f(x)) = O(f(x) \cdot \log(f(x))^{O(1)})$.}
\[
\runtimenearline
~ .
\]
This contribution reflects the first nearly linear time, nearly linearly convergent algorithm for solving DMDPs for intermediate values of $\gamma$.  

We also show how to obtain improved sublinear time algorithms provided we can sample from the transition function in $O(1)$ time. Under this assumption we provide an algorithm which computes an $\epsilon$-optimal policy for $\epsilon \in (0, \frac{M}{\sqrt{1-\gamma}}]$ with probability $1 - \delta$ in time 
 \[
\runtimesublin ~.
 \]
 Further, we extend both these algorithms to solve finite horizon MDPs. Our algorithms improve upon the previous best for approximately computing optimal policies for fixed-horizon MDPs in multiple parameter regimes.

Interestingly, we obtain our results by a careful modification of approximate value iteration. We show how to combine classic approximate value iteration analysis with new techniques in variance reduction. Our fastest algorithms leverage further insights to ensure that our algorithms make monotonic progress towards the optimal value. This paper is one of few instances in using sampling to obtain a linearly convergent linear programming algorithm and we hope that the analysis may be useful more broadly. 
\end{abstract} 

\newpage

\section{Introduction}

Markov decision processes (MDPs) are a popular mathematical framework used to encapsulate sequential decision-making problems under uncertainty. MDPs are widely used to formulate problems that require planning to maximize rewards aggregated over a long time horizon. Rooted from classical physics, MDPs are the discrete analog of variational problems and continuous-state stochastic optimal control problems and are a fundamental computational model used in the study of control, dynamical systems, and artificial intelligence, with applications in many industries, including health care, finance and engineering. Moreover, recently MDPs have become increasingly important in reinforcement learning, a rapidly developing area of artificial intelligence that studies how an agent interacting with a poorly understood environment can learn over time to make optimal decisions. 

Although MDPs have been extensively studied across multiple disciplines since the 1950s, the development of efficient algorithms for large-scale MDPs remains challenging. More precisely, the best known algorithms in many parameter regimes scale super-linearly with the size of the input which can be prohibitively difficult, especially in large-scale applications. In this work, we provide the first nearly linear convergent, nearly linear time algorithms that can solve MDPs with high precision even when the discount factor depends polynomially (with a small exponent) on the number of states and actions. We also give sublinear algorithms with the fastest runtime dependencies on the parameters of the discounted MDP -- the state and action space, and the discount factor. Our algorithms combine the classic value iteration algorithm with novel sampling and variance reduction techniques and our results provide new complexity benchmarks for solving discounted infinite-horizon MDPs.

In this work, we focus on the discounted infinite-horizon Markov decision problem (DMDP). DMDPs are described by the tuple $(S, A, P, r, \gamma)$, where $S$ is the finite state space, $A$ is the finite action space, and $\gamma \in(0,1)$ is the discount factor. $P$ is the collection of state-action-state transition probabilities, with each $p_{a}(i,j)$ specifying the probability of going to state $j$ from state $i$ when taking action $a$, $r$ is the collection of rewards at different state-action pairs, i.e., we collect $r_{a}(i)$ if we are currently in state $i$ and take action $a$. 

In a Markov Decision Process, at each time step $t$, a controller takes an available action $a\in A$ from their current state $i$ and reaches the next state $j$ with probability $p_{a}(i,j)$. For each action $a$ taken, the decision maker earns an immediate reward, $r_{a}(i)$. A vector $\pi\in \policyspace$ that tells the actor which action to take from any state is called a stationary policy. 
 
The main goal in solving a DMDP is to find 
a stationary, deterministic 
policy $\pi^*$ that maximizes the expected discounted cumulative reward. A stationary policy specifies
actions to follow at each state, irregardless of time. A deterministic policy gives a single fixed prescribed action for each state. For all results and analyses regarding DMDPs
 in the rest of this paper, all our policies will be stationary and deterministic. The goal of maximizing the expected discounted cumulative reward can be formulated as
$$ \max_{\pi \in A^S} v_\pi(i) := \E_{\pi}\left[ \sum^{\infty}_{t=1} \gamma^t r_{a_{t} }(i_t) \mid i_0 = i\right],$$
where $\{i_0, a_0, i_1,a_1, \ldots,i_t,a_t,\ldots\}$ are state-action transitions generated by the MDP under the fixed policy $\pi$, i.e. $a_t = \pi_{i_t}$,
 and the expectation $ \E_{\pi}[\cdot]$ is over the set of $(i_t,a_t)$ trajectories. 

We refer to $v_{\pi}$ as the expected value vector of policy $\pi$, which describes the expected discounted cumulative rewards corresponding to all possible initial states. We refer to $\vopt$ value vector associated with $\pi^*$, the policy that maximizes expected discounted cumulative reward for each initial state.
We are interested in finding an $\epsilon$-optimal policy $\pi$, under which the expected cumulative reward is $\epsilon$-close to the maximal expected cumulative reward regardless of the initial state. More precisely, we say  $\pi$ is $\epsilon$-optimal if $\| \vopt - v_{\pi}\|_{\infty} \leq \epsilon$. 

\subsection{Our Results}

In this paper we provide several randomized algorithms for approximately solving DMDPs, i.e. computing approximately optimal policies. To achieve our fastest running times we combine classic sampling techniques for
 value iteration with
new methods associated with variance reduction to obtain faster running times. Our fastest algorithms leverage further insights to ensure that our algorithms make monotonic progress towards the optimal value. (See Section~\ref{approach_overview} for a more detailed overview of our approach.)

Our main results are an algorithm that computes an $\epsilon$-approximate policy with probability $1 - \delta$ in time\footnote{We use $\otilde$ to hide polylogarithmic factors in the input parameters, i.e. $\otilde(f(x)) = O(f(x) \cdot \log(f(x))^{O(1)})$.} 
\[
\runtimenearline ~,
\]
and another algorithm which given a data structure from which we can sample state-action-state tuple according to its transition probability in expected $O(1)$ time,\footnote{The actual time it takes to sample by transition probabilities depends on the precise arithmetic model, assumptions about how the input is given, and what preprocessing is performed. For example with $\otilde(|S|^2 |A|)$ preprocessing, transitions can be stored in binary trees that allow sampling to be performed in $O(\log(|S|))$ time. However by using arrays and assuming indexing takes $O(1)$ time, faster sampling times can be achieved. We assume the sampling time is $O(1)$ to simplify notation. If sampling instead takes time $O(\alpha)$ this increases our sublinear running times by at most a multiplicative $O(\alpha)$.} 
computes an $\epsilon$-approximate policy for $\epsilon \in (0, \frac{M}{\sqrt{1-\gamma}}]$  in time
\[
\runtimesublin ~.
\]
The first algorithm is nearly linearly convergent, i.e., has an $\otilde(\log(1/\epsilon))$ dependence on $\epsilon$, and runs in nearly-linear time whenever $1/(1 - \gamma) = O(|S|^{1 / 3})$.
 This is the first algorithm for solving DMDPs that provably runs in nearly linear time even when the discount factor can be polynomial in the number of states and actions (albeit with a small exponent). Notably, it either matches or improves upon the performance of various forms of value iteration \cite{tseng1990solving, littman1995complexity} in terms of the dependence on $|S|$, $|A|$, and $\gamma$.\footnote{This algorithm also uses an oracle for sampling transition probabilities in expected $O(1)$ time. Since this can be implemented with $\otilde(|S|^2|A|)$ preprocessing time, it does not affect the algorithm's asymptotic runtime and no sampling assumption is needed.}

The second algorithm is sublinear in the $\Omega(|S|^2 |A|)$ input size of the DMDP
 as long as $\frac{1}{\epsilon^2} = o(\frac{|S|(1-\gamma)^4}{M^2})$.
 It improves upon existing results on sampling-based value iteration (Q-learning) in its dependence on $1/(1-\gamma)$ \cite{kearns1999finite}. The algorithm matches the running time of \cite{wang2017randomized} under fewer assumptions (i.e. we make no assumptions regarding ergodicity). 

We also extend these two algorithms to solve finite horizon MDPs. To the best of our knowledge, our algorithms are the first application of variance reduction for approximate value iteration to solve for approximately optimal policies for finite horizon MDPs. In Table~\ref{tab:literature_runtime_exact} and Table~\ref{tab:literature_runtime_apx} we provide a more comprehensive comparison of running time and in Section~\ref{sec:prev_work} we provide a more comprehensive discussion of previous work.

\subsection{Approach Overview}\label{approach_overview}

We achieve our results by building on the classic \emph{value iteration} algorithm. This algorithm simply computes a sequence of values $v_k \in \valuespace$, $k=0,1,\ldots,$ by applying the rule
\[
[v_{k + 1}]_i =  \max_{a \in \actions} \left[ r_{a}(i) + \gamma \cdot p_a(i)^\top v_{k} \right]
\]
for all $i \in \states$ and $k \geq 0$, and using the maximizing action as the current policy, where we denote by $p_a(i)\in\valuespace$ the vector of transition probabilities given by $p_a(i) = [p_a(i,j)]_{j\in S}$. In our adaption, instead of exactly computing $p_a(i)^\top v_{k}$, we approximate it by sampling. We analyze the performance using Hoeffding's inequality and well-known contraction properties of value iteration. While this is a fairly classic idea, we show how to improve it by using a type of \emph{variance reduction}.

More formally, instead of sampling to compute $p_a(i)^\top v$ for current values $v \in \valuespace$ and all actions $a \in \actions$ and $i \in \states$ from scratch in every iteration, we show that we can decrease the number of samples needed by instead estimating how these quantities change over time. Our faster variance reduced algorithms first compute fairly precise estimates for $p_a(i)^\top v_0$ for initial values $v_0 \in \valuespace$ and all actions $a \in \actions$ and $i \in \states$ and then, when after $k$ iterations the current value vector is $v_k$, sampling is used to estimate $p_a(i)^\top (v_k - v_0)$ for all actions $a \in \actions$ and $i \in \states$. Adding these estimated changes to the previously estimated values of $p_a(i)^\top v$ is our approximate value iterate. 

By sampling the change in $p_a(i)^\top v_k$ from some fixed $p_a(i)^\top v_0$, fewer samples are needed, compared to the sampling complexity of simply estimating $p_a(i)^\top v_k$, due to lower variance. We show that as with other applications of variance reduction, this scheme ties the incurred error to the current quality of the $v_k$. Exploiting this fact and carefully trading off how often the $p_a(i)^\top v_0$ are estimated and how the differences $p_a(i)^\top (v_k - v_0)$ are estimated, is what leads to our reductions in running time. 
Variance reduction has recently become a popular technique for obtaining faster algorithms for a wide range of optimization problems, and our application is the first for Markov Decision Processes \cite{Johnson013}.

To further accelerate the sampling algorithm for computing $\epsilon$-optimal policies, we modify our algorithm to start with an underestimate of the true values and increase them monotonically towards the optimum. We show that this allows us to tighten our runtime bounds and analyses in two ways. First, the fact that values are always increasing directly reduces the number of samples required. Secondly, it helps us maintain the invariant that the current proposed stationary policy $\pi$ and values $v$ satisfy the inequality $v \leq v_{\pi}$ entry-wise. This ensures that the induced policy has true values greater then $v$ and thus allows us to convert from approximate values to approximate policies without a loss in approximation quality. 

We hope these techniques will be useful in the development of even faster MDP algorithms in theory and practice. We are unaware of variance reduction as employed by this algorithm having been used previously to obtain linearly convergent algorithms for linear programming and thus we hope that the analysis of this paper may open the door for faster algorithms for a broader set of convex optimization problems.

\section{Previous Work}
\label{sec:prev_work}

The design and analysis of algorithms for solving MDPs have interested researchers across multiple disciplines since the 1950s. There are three major deterministic approaches: value iteration, policy iteration method, and linear programming, for finding the exact optimal stationary policy of an MDP \cite{bertsekas1995dynamic}.  Later in 1990s, sampling-based methods for MDP began to gain traction, and lay the foundation for reinforcement learning algorithms today. Despite years of study, the complexity of MDP remains an open question and the best known running times are far from linear. 

\paragraph{Deterministic Methods for MDP}
Bellman \cite{bellman1957dynamic} developed value iteration as a successive approximation method to solve nonlinear fixed-point equations. Its convergence and complexity have been thoroughly analyzed; see e.g. \cite{tseng1990solving, littman1995complexity}.  It is known that value iteration can compute an exact solution to a DMDP in time $\cO(\S^2\A L \frac{\log(1/(1-\gamma))}{1-\gamma})$, where $L$ is a measure of complexity of the associated linear program that is at most the number of bits needed to represent the input. It is also well known that value iteration can find an approximate $\epsilon$-approximate solution in time $\cO(\S^2\A \frac{\log(1/\epsilon(1-\gamma))}{1-\gamma})$.

Howard introduced policy iteration shortly thereafter \cite{howard1960dynamic}, and its complexity has also been analyzed extensively; see e.g. \cite{mansour1999complexity,ye2011simplex,scherrer2013improved}. Not long after the development of value iteration and policy iteration, \cite{d1963probabilistic} and \cite{de1960problemes} discovered that the Bellman equation can be formulated into an equivalent linear program, allowing a rich suite of tools developed for LP, such as interior point methods and the simplex method by Dantzig \cite{dantzig2016linear}, to solve MDPs exactly. This connection also led to the insight that the simplex method, when applied to solving DMDPs, is the simple policy iteration method. 

\cite{feinberg2014value} showed that value iteration is not strongly polynomial for DMDP. However, Ye \cite{ye2011simplex} showed that policy iteration (which is a variant of the general simplex method for linear programming) and the simplex method are strongly polynomial for DMDP and terminates in $\cO(\frac{|S|^2|A|}{1-\gamma}\log(\frac{\S}{1-\gamma}))$ number of iterations. 
\cite{hansen13} and \cite{scherrer2013improved} improved the iteration bound to $O(\frac{|S||A|}{(1-\gamma)}\log(\frac{1}{(1-\gamma)}))$ for Howard's policy iteration method.
\cite{ye2005new} also designed a combinatorial interior-point algorithm (CIPA) that solves the DMDP in strongly polynomial time. 

Recent developments \cite{lee2014path, lee2015efficient}
showed that linear programs can be solved in $\tO(\sqrt{\hbox{rank}(A)} )$ number of linear system solves, which, applied to DMDP, leads to a running time of $\tO( \S^{2.5}\A L )$ and $\runtimeip$ (see Appendix~\ref{sec:dmdp_with_ip} for a derivation).

We also note that while there are many methods for approximate linear programming, the error they incur for solving DMDPs is unclear as care needs to be taken in converting between $\epsilon$-approximate linear programming solutions and $\epsilon$-approximate values and policies (note that $\epsilon$-approximate values does not necessarily lead to $\epsilon$-approximate policies). For an illustrative example, see Appendix~\ref{sec:dmdp_with_ip} where we show how convert a particular type of approximate linear programming solution to an approximate policy and use interior point methods to compute approximate policies efficiently.

For more detailed surveys on MDP and its solution methods, we refer the readers to the textbooks \cite{bertsekas1995dynamic, bertsekas1995neuro, puterman2014markov, bertsekas2013abstract} and the references therein. In particular, our treatment of the Bellman operator and its properties follows the style set by \cite{bertsekas1995dynamic, bertsekas2013abstract} for deterministic value iteration.

\begin{table}[h!]
\begin{center}
	\begin{tabular}{|c|c|c|}
		\hline
		Value Iteration & $\S^2\A L \frac{\log(1/(1-\gamma))}{1-\gamma}$ & \cite{tseng1990solving, littman1995complexity}
		\\ \hline
		Policy Iteration (Block Simplex) &  $\frac{\S^4\A^2}{1-\gamma} \log(\frac1{1-\gamma})$ & \cite{ye2011simplex},\cite{scherrer2013improved}
		\\  \hline
		Recent Interior Point Methods  &   $\tO( \S^{2.5}\A L )$ &\cite{lee2014path}
		\\  \hline
		Combinatorial Interior Point Algorithm   &    $\S^4\A^4\log\frac{\S}{1-\gamma}$ & \cite{ye2005new}
		\\  \hline
	\end{tabular}
\end{center}
\caption{\textbf{Running Times to Solve DMDPs Exactly}: In this table, $\S$ is the number of states, $\A$ is the number of actions per state, $\gamma \in(0,1)$ is the discount factor, and $L$ is a complexity measure of the linear program formulation that is an upper bound on the total bit size to present the DMDP input.}
\label{tab:literature_runtime_exact}
\end{table}

\begin{table}[h!]
\begin{center}
	\begin{tabular}{|c|c|c|}
		\hline
		Value Iteration & $\S^2\A \frac{\log(1/(1-\gamma)\epsilon)}{1-\gamma}$ & \cite{tseng1990solving, littman1995complexity}
		\\ \hline
		Recent Interior Point Methods  &   $\runtimeipnice$ &\cite{lee2014path} (Appendix~\ref{sec:dmdp_with_ip})
		\\  \hline
		{Randomized Primal-Dual Method}    & $\tO (C\frac{\S \A M^2}{(1-\gamma)^4\epsilon^2}) $  & \cite{wang2017randomized}
		\\  \hline
		{Empirical QVI}    & $\tO \left(\frac{\S \A M^2}{(1-\gamma)^3\epsilon^2}\right) $  if $\epsilon \in (0, \frac{M}{\sqrt{(1-\gamma)|S|}}]$& \cite{azar2013minimax}
		\\  \hline
		High Precision Randomized Value Iteration & $ O \left( \left(|S|^2 |A|  +\frac{|S| |A|}{(1 - \gamma)^3} \right) \log\left(\frac{M}{\epsilon\delta}\right) \right)$ & This Paper
		\\  \hline
		Sublinear Randomized Value Iteration & $\otilde \left( \frac{|S| |A| M^2}{(1-\gamma)^4 \epsilon^2}\right) $ if $\epsilon \in (0, \frac{M}{\sqrt{(1-\gamma)}}]$ & This Paper
		\\ \hline
	\end{tabular}
\end{center}
\caption{\textbf{Running Times to Compute $\epsilon$-Approximate Policies in DMDPs with High Probability}: In this table, $\S$ is the number of states, $\A$ is the number of actions per state, $\gamma \in(0,1)$ is the discount factor, $M$ is an upper bound on the absolute value of any reward, and $C$ is an upper bound on the ergodicity.}
	\label{tab:literature_runtime_apx}
\end{table}

\paragraph{Sampling-Based Methods for MDP} Q-learning was the first sampling-based method for MDP, initially developed for reinforcement learning. Q-learning methods are essentially sampling-based variants of value iteration. \cite{kearns1999finite} proved that phased Q-learning takes $\tO(\frac{\S\A}{\epsilon^2})$ sample transitions to compute an $\epsilon$-optimal policy, where the dependence on $\gamma$ is left unspecified. No runtime analysis is given explicitly for phased Q-learning in \cite{kearns1999finite}, because in reinforcement learning settings, samples are observed experiences, and so the cost of sampling is not easily quantified. 

There is a large body of work on sampling-methods for MDPs in the literature of reinforcement learning, see e.g., \cite{strehl2006pac,strehl2009reinforcement,lattimore2012pac,azar2012sample,azar2013minimax} and many others.
These works study learning algorithms that update parameters by drawing information from some oracle, where the sampling oracles and modeling assumptions vary. The focus of this research is usually the sample complexity of learning; they are typically not concerned with the explicit runtime complexity, since the cost of sampling and processing a sample transition is nebulous in applied reinforcement learning settings. Our randomized algorithm is related to reinforcement learning under a generative model that allows the agent to sample transitions conditioned on specified state-action pairs, for examples, see \cite{azar2012sample, kearns2002sparse, azar2013minimax}. 

One notable contribution is the work of \cite{azar2013minimax}, which introduces policy iteration and value iteration algorithms for generative models that achieve the optimal sample complexity for finding $\epsilon$-optimal value functions, rather than $\epsilon$-optimal policies, as well as the matching lower bound. Their algorithms also achieve optimal sample complexity and optimal runtime for finding $\epsilon$-optimal policies in a restricted $\epsilon$ regime.

A recent related work \cite{wang2017randomized} proposed a randomized mirror-prox method with adaptive transition sampling, which applies to a special saddle point formulation of the Bellman equation. They achieve a total runtime of $\tO ( \frac{\S^3 \A M^2}{(1-\gamma)^6\epsilon^2} )$ for the general DMDP and $\tO (C\frac{\S \A}{(1-\gamma)^4\epsilon^2}) $ for DMDPs that are ergodic under all possible policies, where $C$ is a DMDP-specific ergodicity measure. 

\paragraph{Summary} Table~\ref{tab:literature_runtime_exact} and Table~\ref{tab:literature_runtime_apx} summarize the best-known running-time complexity of solution methods for DMDP. The running-time complexity is in terms of the total number of arithmetic operations.  Compared to deterministic methods and sampling-based methods, our main result has the sharpest dependence on the input dimensions $\S$ and $\A$ and discount factor $\gamma$. In terms of lower bounds for the DMDP, \cite{chen2017lower} recently showed that the runtime complexity for any randomized algorithm is $\Omega(\S^2\A)$. In the case where each transition can be sampled in $\tO(1)$ time, \cite{chen2017lower} showed that any randomized algorithm needs $\Omega(\frac{\S\A}{\epsilon})$ runtime to produce an $\epsilon$-optimal policy with high probability. \cite{azar2013minimax} also shows a sample complexity lower bound of $\Omega(\frac{\S \A}{(1-\gamma)^3\epsilon^2})$ for finding optimal policies under the generative model. In both the general and restricted cases, our main result nearly matches the lower bounds in its dependence on $\S$ and $\A$.

\section{Preliminaries}
\label{sec:prelim}

We describe a DMDP by the tuple $(S, A, P, r, \gamma)$, where $S$ is the finite state space, $A$ is the finite action space, $P$ is the collection of state-action-state transition probabilities, $r$ is the collection of state-action rewards, and $\gamma \in (0, 1)$ is a discount factor. We use $p_{a}(i,j)$ to denote the probability of going to state $j$ from state $i$ when taking action $a$ and define $p_a(i) \in \R^\states$ with $p_a(i)_j \defeq p_{a}(i,j)$ for all $j \in \states$. We use $r_a(i)$ to denote the reward obtained from taking action $a \in \actions$ at state $i \in \states$ and assume that for some known $M > 0$ it is the case that all $r_a(i) \in [-M, M]$.

We make the assumption throughout that for any state $i \in \states$ and action $a \in \actions$ we can sample $j \in \states$ independently at random so that  $\Pr[j = k] = p_a(i,k)$ in expected $O(1)$ time.  
This is a natural assumption under standard arithmetic models of computation that can be satisfied by preprocessing the DMDP in linear, i.e. $O(|S|^2 |A|)$ time.\footnote{As discussed in the introduction if instead sampling required $O(\log |S|)$ time, which is easily achieved with $\otilde(|S|^2|A|)$ preprocessing, this would increase our running times by only a multiplicative $O(\log |S|)$ which would be hidden by the $\otilde(\cdot)$ notation, leaving running times unaffected. 
}
For further discussion of sampling schemes, see \cite{wang2017randomized}.

We also use $\vec{1}$ and $\vec{0}$ to denote the all ones and all zeros vectors, respectively. 

In the remainder of this section we give definitions for several fundamental concepts in DMDP analysis that we use throughout the paper.

\begin{definition}[Value Operator]
For a given DMDP the \emph{value operator} $T : \valuespace \mapsto \valuespace$ is defined for all $u \in \valuespace$ and $i \in \states$ by
\begin{equation}
T(u)_{i} = \max_{a \in \actions} \left[ r_{a}(i) + \gamma \cdot p_a(i)^\top u \right] ~, \label{eq:value_operator}
\end{equation}
and we let $\vopt$ denote the \emph{value of the optimal policy $\pi^*$}, which is the unique vector such that $T(\vopt) = \vopt$. 
\end{definition}

\begin{definition}[Policy] We call any vector $\pi \in \policyspace$ a \emph{policy} and say that the action prescribed by policy $\pi$ to be taken at state $i \in \states$ is $\pi_i$. We let $T_\pi$ denote the \emph{value operator associated with $\pi$} defined for all $u \in \valuespace$ and $i \in \states$ by
\[
[T_\pi(u)]_i = r_{\pi_i}(i) + \gamma \cdot p_{\pi_i}(i)^\top u ~,  
\] 
and we let $v_\pi$ denote the \emph{values of policy $\pi$}, which is the unique vector such that $T_\pi(v_\pi) = v_\pi$. 
\end{definition}

Note that $T_\pi$ can be viewed as the value operator for the modified MDP where the only available action from each state is given by the policy $\pi$. Note that this modified MDP is essentially just an uncontrolled Markov Chain.

\begin{definition}[$\epsilon$-optimality] We say values $u \in \valuespace$ are \emph{$\epsilon$-optimal} if $\| \vopt - u\|_{\infty} \leq \epsilon$ and we say a policy $\pi \in \policyspace$ is \emph{$\epsilon$-optimal} if $\| \vopt - v_{\pi}\|_{\infty} \leq \epsilon$, i.e. the values of the policy are $\epsilon$-optimal.
\end{definition}

\subsection{Value Iteration Facts}

We review basic facts regarding value iteration that we use in our analysis. These are well established in the literature and for more details, please see Appendix~\ref{sec:value_facts}.

The first fact is that the value operator $T$ is a contraction mapping, meaning that the operator brings value vectors closer together. This is key to establishing the convergence and correctness of value iteration:

\begin{restatable}[Contraction Mapping]{lemma}{lemcontract}
\label{lem:contract} For all values $u,v \in \valuespace$ we have that
$
\|T(u) - T(v)\|_\infty \leq \gamma \|u - v\|_\infty 
$ 
and consequently $\|T(u) - \vopt\|_\infty \leq \gamma \|u - \vopt\|_\infty$, where $v^*$ is the optimal value vector.
\end{restatable}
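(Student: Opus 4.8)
The plan is to reduce the $\infty$-norm statement to a per-coordinate estimate: I will show that $|T(u)_i - T(v)_i| \le \gamma\|u-v\|_\infty$ holds for every state $i \in \states$, and then take the maximum over $i$. The central device is the classical "use the maximizer of one operand to bound the other" trick, together with the fact that each transition vector $p_a(i)$ is a probability distribution.

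First I would fix a state $i \in \states$ and, without loss of generality, assume $T(u)_i \ge T(v)_i$; this is legitimate because the quantity $|T(u)_i - T(v)_i|$ we wish to bound is symmetric in $u$ and $v$, so the opposite case follows by swapping their roles. Let $a^\star \in \actions$ be an action attaining the maximum in the definition of $T(u)_i$, so that $T(u)_i = r_{a^\star}(i) + \gamma\, p_{a^\star}(i)^\top u$. Since $T(v)_i$ is itself a maximum over all actions, it is at least the value achieved by the particular action $a^\star$, that is, $T(v)_i \ge r_{a^\star}(i) + \gamma\, p_{a^\star}(i)^\top v$. Subtracting these two relations, the reward terms cancel and I obtain
\[
0 \le T(u)_i - T(v)_i \le \gamma\, p_{a^\star}(i)^\top (u - v).
\]

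The key remaining step is to bound the right-hand side. Because $p_{a^\star}(i)$ has nonnegative entries summing to one, the inner product $p_{a^\star}(i)^\top (u-v)$ is a convex combination of the coordinates of $u-v$, hence at most $\max_{j \in \states} (u-v)_j \le \|u-v\|_\infty$. This gives $|T(u)_i - T(v)_i| \le \gamma\|u-v\|_\infty$ for each $i$, and maximizing over $i \in \states$ yields the claimed contraction $\|T(u) - T(v)\|_\infty \le \gamma\|u-v\|_\infty$. For the consequence I would simply instantiate $v = \vopt$ and use that, by definition, $\vopt$ is the fixed point of $T$, so $T(\vopt) = \vopt$ and the left-hand side becomes $\|T(u) - \vopt\|_\infty$. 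I do not expect a genuine obstacle: the only subtlety is ensuring the bound is on the absolute value rather than one signed difference, which the symmetric swap of $u$ and $v$ handles cleanly, and remembering that the step bounding a convex combination by the $\infty$-norm is exactly where the stochasticity of $p_a(i)$ is used.
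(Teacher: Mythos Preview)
Your proof is correct and complete; it is the standard argument for this classical fact. The paper itself does not give a proof of this lemma but simply cites \cite{bertsekas2013abstract} and \cite{puterman2014markov}, so there is nothing further to compare.
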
 

The second fact is that we can bound how close values are to being the value of a policy by how much the value operator for that policy moves the values:

\begin{restatable}{lemma}{lempolbound}
\label{lem:pol_bound_simple}
For any policy $\pi \in \policyspace$ and values $u \in \R^\states$ it holds that $\|T_\pi(u)- v_\pi\|_\infty \leq \frac{\gamma}{1 - \gamma} \|T_\pi(u) - u\|_\infty$, where $v_\pi$ is the exact value vector for policy $\pi$. 
\end{restatable}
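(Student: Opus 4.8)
The plan is to combine two facts: that $T_\pi$ is a $\gamma$-contraction in the $\ell_\infty$ norm, and that $v_\pi$ is its unique fixed point. The contraction property is not stated directly for $T_\pi$, but as the excerpt notes, $T_\pi$ is exactly the value operator for the modified DMDP in which the only action available at each state $i$ is $\pi_i$; hence Lemma~\ref{lem:contract} applies verbatim to give $\|T_\pi(x) - T_\pi(y)\|_\infty \le \gamma \|x - y\|_\infty$ for all $x, y \in \valuespace$. One could also verify this directly: $[T_\pi(x) - T_\pi(y)]_i = \gamma\, p_{\pi_i}(i)^\top (x - y)$, and since $p_{\pi_i}(i)$ is a probability vector its inner product against $x-y$ has absolute value at most $\|x-y\|_\infty$. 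By definition $T_\pi(v_\pi) = v_\pi$.

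First I would write $w \defeq T_\pi(u)$ and use the fixed-point identity together with the contraction bound to get $\|w - v_\pi\|_\infty = \|T_\pi(u) - T_\pi(v_\pi)\|_\infty \le \gamma \|u - v_\pi\|_\infty$. Next, I would control $\|u - v_\pi\|_\infty$ by inserting $w$ via the triangle inequality, $\|u - v_\pi\|_\infty \le \|u - w\|_\infty + \|w - v_\pi\|_\infty$. Substituting this into the previous estimate yields $\|w - v_\pi\|_\infty \le \gamma\bigl(\|u - w\|_\infty + \|w - v_\pi\|_\infty\bigr)$, and collecting the $\|w - v_\pi\|_\infty$ terms (which is valid since $\gamma < 1$) gives $(1-\gamma)\|w - v_\pi\|_\infty \le \gamma \|u - w\|_\infty$. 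Dividing through by $1 - \gamma$ and recalling $w = T_\pi(u)$ and $\|u - w\|_\infty = \|T_\pi(u) - u\|_\infty$ produces exactly the claimed bound $\|T_\pi(u) - v_\pi\|_\infty \le \frac{\gamma}{1-\gamma}\|T_\pi(u) - u\|_\infty$.

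I do not anticipate a serious obstacle here; the only point needing care is justifying the $\gamma$-contraction of $T_\pi$ itself (rather than of $T$), which is immediate from the single-action-MDP interpretation above. As a sanity check and an alternative route, I could instead iterate $T_\pi$ from $u$: setting $u_0 = u$ and $u_{k+1} = T_\pi(u_k)$, the contraction guarantees $u_k \to v_\pi$ and $\|u_k - u_{k+1}\|_\infty \le \gamma^k \|u - T_\pi(u)\|_\infty$, so telescoping $T_\pi(u) - v_\pi = \sum_{k \ge 1}(u_k - u_{k+1})$ and summing the geometric series $\sum_{k \ge 1}\gamma^k = \gamma/(1-\gamma)$ recovers the same inequality. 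Since the triangle-inequality argument is shorter and avoids appealing to convergence of the iterates, I would present that as the main proof.
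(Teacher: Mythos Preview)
Your proposal is correct and follows essentially the same approach as the paper: both use that $T_\pi$ is a $\gamma$-contraction with fixed point $v_\pi$, apply the triangle inequality to obtain $\|T_\pi(u)-v_\pi\|_\infty \le \gamma\|T_\pi(u)-u\|_\infty + \gamma\|T_\pi(u)-v_\pi\|_\infty$, and then rearrange. The only cosmetic difference is the order of steps---the paper inserts $T_\pi(T_\pi(u))$ and then contracts each piece, whereas you contract first and then insert $T_\pi(u)$---but the resulting inequality and its resolution are identical.
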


The last fact is that the value operator is monotonic in the sense that it preserves the property that one value vector may be larger than another entry-wise.

\begin{restatable}{lemma}{lemvalmon}
\label{lem:val_is_mon}
 If values $u, v \in \valuespace$ satisfy $u \leq v$ entry-wise, then $T(u) \leq T(v)$ entry-wise.
\end{restatable}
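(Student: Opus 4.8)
The plan is to unwind the definition of the value operator $T$ and exploit the fact that each transition vector $p_a(i)$ is a probability distribution, hence nonnegative entry-wise. The key linear-algebraic observation is that a nonnegative vector, viewed as a functional via inner product, is monotone: if $u \leq v$ entry-wise and $p_a(i) \geq \vec{0}$ entry-wise, then
\[
p_a(i)^\top (v - u) = \sum_{j \in \states} p_a(i,j)\,(v_j - u_j) \geq 0,
\]
so that $p_a(i)^\top u \leq p_a(i)^\top v$ for every state $i \in \states$ and action $a \in \actions$. This is the only place where the structure of the DMDP (nonnegativity and, implicitly, the probabilistic nature of transitions) enters the argument.

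With this in hand, the remaining argument is a routine manipulation of the maximum. First I would fix an arbitrary state $i \in \states$ and observe that, since $\gamma > 0$, the displayed inequality gives, for each action $a \in \actions$,
\[
r_a(i) + \gamma \cdot p_a(i)^\top u \;\leq\; r_a(i) + \gamma \cdot p_a(i)^\top v \;\leq\; \max_{a' \in \actions}\left[ r_{a'}(i) + \gamma \cdot p_{a'}(i)^\top v\right] = T(v)_i.
\]
Thus every term appearing in the maximum that defines $T(u)_i$ is bounded above by the single quantity $T(v)_i$, and therefore the maximum itself satisfies $T(u)_i = \max_{a \in \actions}\left[ r_a(i) + \gamma \cdot p_a(i)^\top u\right] \leq T(v)_i$. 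Since $i \in \states$ was arbitrary, this establishes $T(u) \leq T(v)$ entry-wise, as desired.

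The proof presents no genuine obstacle; it is a direct consequence of the nonnegativity of the transition probabilities together with the monotonicity of the maximum. The only point requiring a little care is the order of quantifiers in the final step: one cannot simply take the maximum of both sides of a per-action inequality and conclude, but must instead bound each individual term of the left-hand maximum by the already-maximized right-hand side $T(v)_i$, which is legitimate precisely because that bound holds uniformly over all actions $a \in \actions$.
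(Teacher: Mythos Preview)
Your proof is correct and follows essentially the same argument as the paper: both use the nonnegativity of $\gamma$ and of the transition vectors $p_a(i)$ to deduce $r_a(i) + \gamma\, p_a(i)^\top u \leq r_a(i) + \gamma\, p_a(i)^\top v$ for every $a$, and then pass to the maximum over actions. Your write-up is somewhat more detailed (in particular your remark on the order of quantifiers when taking the maximum), but the underlying approach is identical.
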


\section{DMDP Algorithms}
\label{sec:DMDP}

In this section, we present our algorithms and the corresponding analysis for solving DMDPs. We split the presentation of our algorithms and its analysis into multiple pieces as follows:

\begin{itemize}
\item Section~\ref{section:approximate_value_operator}: we present $\apxvalcode$, our randomized, approximate value iteration sub-routine.
\item Section~\ref{section:basic_approx_vi}: we introduce a simple randomized value iteration scheme, $\randomVI$, which approximates the value operator using $\apxvalcode$ for each iteration. We analyze its convergence and correctness guarantees.
\item Section~\ref{section:polylog_variance_reduced_randomized_vi}: we use $\randomVI$ to create a high precision randomized value iteration algorithm, $\highprecisionVI$, which returns an $\epsilon$-optimal value vector in nearly linear time. 
\item Section~\ref{section:poly_variance_reduced_randomized_vi}: we present a sublinear time randomized value iteration algorithm, $\sublinearRandomVI$, which returns an $\epsilon$-optimal value vector in sublinear time.
\item Section~\ref{sec:obtaining_policy}: we show how to compute an $\epsilon$-optimal policy using our $\epsilon$-optimal value vectors. 
\item Section~\ref{section:monotonicity}: we present a monotonic value operator, $\apxmonvalcode$, that further improves the runtime of our sublinear randomized value iteration algorithm for computing $\epsilon$-approximate policies.
\end{itemize}
\subsection{Approximate Value Operator}
\label{section:approximate_value_operator}

Here we introduce our main sub-routine for performing randomized value iterations, $\apxvalcode$ (See Algorithm~\ref{alg:approx_val_op}). The routine $\apxvalcode$ approximates the value operator by sampling. Instead of computing the value operator exactly, i.e.:
\begin{equation}
T(u)_{i} = \max_{a \in \actions} \left[ r_{a}(i) + \gamma \cdot p_a(i)^\top u \right]
\end{equation}
$\apxvalcode$ approximates $T(u)_{i}$ by estimating $p_a(i)^\top u$ via sampling and maximizing over the action space based on these estimates.

The sampling procedure we use, $\apxtranscode$ is given in Algorithm~\ref{alg:apx_trans}. Its concentration guarantees are standard, but for completeness we give the analysis and the statement of Hoeffding's inequality below. 

\begin{algorithm}[H]
\caption{Approximate Transition: $\apxtranscode(u, M, i, a, \epsilon, \failprob)$}
\label{alg:apx_trans}
\begin{algorithmic}[1]
\Require{Values $u \in \valuespace$ and scalar $M \geq 0$ such that $\|u\|_\infty \leq M$}
\Require{State $i \in \states$ and action $a \in \actions$}
\Require{Target accuracy $\epsilon \in (0,1)$ and failure probability $\failprob \in (0,1)$}
\State{set $m := \lceil\frac{2 M^2}{\epsilon^2} \ln(\frac{2}{\failprob})\rceil$}
\For{\texttt{each  $k \in [m]$}}
\State \texttt{choose $i_k \in \states$ independently with $\Pr[i_k = j] = p_a(i,j)$.}
\EndFor
\BState \Return sample average $S = \frac{1}{m}\sum_{k \in [m]} u_{i_k}$
\end{algorithmic}
\end{algorithm}

\begin{theorem}[Hoeffding's Inequality (\cite{hoeffding} Theorem 2)]
	\label{thm:hoeffding}
	Let $X_{1},...,X_{m}$ be independent real valued random variables with $X_{i}\in[a_{i},b_{i}] $ for all $i\in[m]$ and let $Y=\frac{1}{m}\sum_{i\in[m]}X_{i}$. For all $t\geq 0$,
	\[
	\Prob\left[\left|Y-\mathbb{E}[Y]\right|\geq t\right]\leq2\exp\left(\frac{- 2m^{2}t^{2}}{\sum_{i\in[m]}(b_{i}-a_{i})^{2}}\right) .
	\]
\end{theorem}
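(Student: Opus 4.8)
The plan is to prove this via the standard exponential-moment (Chernoff) method, deducing the two-sided bound from a one-sided bound by symmetry and a union bound. Writing $S = \sum_{i \in [m]} X_i = mY$, the event $\{Y - \E[Y] \geq t\}$ coincides with $\{S - \E[S] \geq mt\}$. For any $s > 0$, I would apply Markov's inequality to the nonnegative random variable $\exp\!\big(s(S - \E[S])\big)$ to obtain
\[
\Prob[S - \E[S] \geq mt] \leq e^{-smt}\, \E\!\left[e^{s(S - \E[S])}\right] = e^{-smt}\prod_{i \in [m]} \E\!\left[e^{s(X_i - \E[X_i])}\right],
\]
where the final equality uses independence of the $X_i$ to factor the moment generating function.

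The crux of the argument — and the step I expect to be the main obstacle — is bounding each factor via Hoeffding's lemma: for a mean-zero random variable $Z$ taking values in $[c,d]$ one has $\E[e^{sZ}] \leq \exp\!\big(s^2(d-c)^2/8\big)$. I would prove this sub-lemma by using convexity of $z \mapsto e^{sz}$ to write, for $z \in [c,d]$, the bound $e^{sz} \leq \frac{d - z}{d - c}e^{sc} + \frac{z - c}{d - c}e^{sd}$; taking expectations and using $\E[Z] = 0$ reduces the right-hand side to $e^{\phi(u)}$, where $u = s(d - c)$ and $\phi$ is an explicit one-variable function. Checking $\phi(0) = \phi'(0) = 0$ and $\phi''(u) \leq 1/4$ everywhere, a second-order Taylor expansion with Lagrange remainder yields $\phi(u) \leq u^2/8$, which is precisely the claim. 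Applying this with $Z = X_i - \E[X_i] \in [a_i - \E[X_i],\, b_i - \E[X_i]]$, an interval of length $b_i - a_i$, bounds each factor by $\exp\!\big(s^2(b_i - a_i)^2/8\big)$.

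Combining, the one-sided tail is at most $\exp\!\big(-smt + \tfrac{s^2}{8}\sum_{i}(b_i - a_i)^2\big)$. I would then optimize the free parameter $s > 0$: the exponent is a quadratic in $s$ minimized at $s^* = 4mt / \sum_i (b_i - a_i)^2$, and substituting gives the bound $\exp\!\big(-2m^2 t^2 / \sum_i (b_i - a_i)^2\big)$. Finally, applying the identical argument to the variables $-X_i$ (which lie in $[-b_i, -a_i]$, of the same length) controls the lower tail $\Prob[Y - \E[Y] \leq -t]$ by the same quantity, and a union bound over the two one-sided events produces the factor of $2$ in the stated inequality. The only genuinely nontrivial ingredient is Hoeffding's lemma; the remaining steps are the routine Chernoff packaging and a one-variable calculus optimization.
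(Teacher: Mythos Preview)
Your proof is correct and is the standard Chernoff--Hoeffding argument. However, the paper does not provide its own proof of this statement: it is simply cited as Theorem~2 of Hoeffding's original paper and used as a black box, so there is no in-paper proof to compare against.
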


\begin{lemma}[Sample Concentration]
\label{lem:trans_sample} 
$\apxtranscode$ (See Algorithm~\ref{alg:apx_trans}) can be implemented to run in time $O(M^2 \epsilon^{-2} \log(1/\failprob))$ and outputs $Y \in \R$ such that $|Y - p_a(i)^\top u| \leq \epsilon$ with probability $1 - \failprob$.
\end{lemma}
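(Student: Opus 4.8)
The plan is to recognize that the value returned by $\apxtranscode$ is an empirical mean of i.i.d.\ bounded random variables whose common expectation is exactly the quantity $p_a(i)^\top u$ we wish to estimate, and then to invoke Hoeffding's inequality (Theorem~\ref{thm:hoeffding}) to bound the deviation. The runtime claim will follow immediately from counting the samples.

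First I would set, for each $k \in [m]$, the random variable $X_k \defeq u_{i_k}$, where $i_k \in \states$ is the index drawn in the loop with $\Prob[i_k = j] = p_a(i,j)$. Since the draws are independent, the $X_k$ are independent, and by definition of $p_a(i)$ each satisfies $\E[X_k] = \sum_{j \in \states} p_a(i,j) u_j = p_a(i)^\top u$. Consequently the returned sample average $S = \frac{1}{m}\sum_{k \in [m]} X_k$ obeys $\E[S] = p_a(i)^\top u$, so it is an unbiased estimator and the whole task reduces to a concentration bound on $|S - \E[S]|$.

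Next I would use the hypothesis $\|u\|_\infty \leq M$ to conclude that each $X_k$ lies in $[-M, M]$, so that in the notation of Theorem~\ref{thm:hoeffding} one may take $b_k - a_k = 2M$ and hence $\sum_{k \in [m]} (b_k - a_k)^2 = 4 M^2 m$. Applying Hoeffding's inequality with $t = \epsilon$ then yields $\Prob[\,|S - p_a(i)^\top u| \geq \epsilon\,] \leq 2\exp(-m\epsilon^2/(2M^2))$. Substituting the algorithm's choice $m = \lceil \frac{2M^2}{\epsilon^2}\ln(\frac{2}{\failprob})\rceil \geq \frac{2M^2}{\epsilon^2}\ln(\frac{2}{\failprob})$ makes the exponent at most $-\ln(2/\failprob)$, so the right-hand side is at most $\failprob$. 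This gives the claimed success probability $1 - \failprob$.

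Finally, for the running time I would observe that drawing each index $i_k$ costs expected $O(1)$ time under the standing sampling assumption and that accumulating the running sum costs $O(1)$ per sample, so the total is $O(m) = O(M^2 \epsilon^{-2}\log(1/\failprob))$, using $\ln(2/\failprob) = O(\log(1/\failprob))$. I do not expect any genuine obstacle here; the only points requiring care are tracking the factor $2M$ (rather than $M$) in the range of each $X_k$, which is what determines the constant in the definition of $m$, and noting that the ceiling in that definition only strengthens the concentration bound.
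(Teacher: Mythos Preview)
Your proposal is correct and essentially identical to the paper's own proof: both identify the sample average as an unbiased estimator of $p_a(i)^\top u$ with each summand in $[-M,M]$, apply Hoeffding's inequality (Theorem~\ref{thm:hoeffding}) with the chosen $m$, and read off the $O(m)$ running time. The only difference is cosmetic---you spell out the computation of $\sum_k (b_k-a_k)^2$ and the substitution of $m$ in slightly more detail than the paper does.
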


\begin{proof}
Note that $Y$ and $u_{i_k}$ were chosen so that $\E[Y] = p_a(i)^\top u$. Furthermore, since $u_{i_k} \in [-M, M]$ by the assumption that $\|u\|_\infty \leq M$ we have by  Hoeffding's Inequality, Theorem~\ref{thm:hoeffding}, and choice of $m$ that
\[
\Prob\left[
| Y - p_a(i)^\top u |
\geq \epsilon
\right]
=
\Prob\left[
| Y - \E Y |
\geq \epsilon
\right]
\leq 
2 \cdot \exp \left(
\frac{- 2 m^2 \epsilon^2}{m (2 M)^2}
\right)
\leq \failprob ~.
\]
Since the algorithm simply takes $O(m)$ samples and outputs their average the running time is $O(m)$.
\end{proof}

Using $\apxtranscode$ (Algorithm~\ref{alg:apx_trans}) our approximate value operator $\apxvalcode$ is given in Algorithm~\ref{alg:approx_val_op}. This algorithm works as previously described, using sampling to approximate $p_a(i)^\top u$ and then using these samples to approximately maximize over the space of actions. 

\begin{algorithm}
\caption{Approximate Value Operator: $\apxvalcode(u, v_0, x, \epsilon, \failprob)$}
\label{alg:approx_val_op}
\begin{algorithmic}[1]
	\Require{Current values, $u \in \valuespace$, and initial value: $v_0 \in \R^\states$}
	\Require{Precomputed offsets: $x \in \offsetspace$ with $|x_a(i) - p_a(i)^\top v_0| \leq \epsilon$ for all $i \in \states$ and $a \in \actions$.}
	\Require{Target accuracy $\epsilon \in (0,1)$ and failure probability $\failprob \in (0,1)$}
	\State{\texttt{compute} $M = \|u - v_0\|_\infty$}
	\For{\texttt{each state} $i \in \states$}
	\For{\texttt{each action $a \in \actions$}}
	\State \texttt{set  $\tilde{S}_a(i) = x_a(i) + \apxtranscode(u - v_0, M, i, a, \epsilon, \delta/(|\states||\actions|))$}
	\State \texttt{set $\tilde{Q}_a(i) = r_a(i) + \gamma \cdot \tilde{S}_a(i)$}
	\EndFor
	\State \texttt{set $\tilde{v}_i = \max_{a \in \actions} \tilde{Q}_a(i)$ and $\pi_i \in \argmax_{a \in \actions} \tilde{Q}_a(i)$}
	\EndFor
	\BState \Return $(\tilde{v},\pi)$\\ \emph{($\tilde{v} \in \valuespace$ is the result of an approximate value iteration and $\pi \in \policyspace$ is the corresponding policy )}.
\end{algorithmic}
\end{algorithm}

$\apxvalcode$ is designed to enable variance reduction. Instead of sampling values of $u$ directly and using the sample average, $\apxvalcode$ samples the difference from some fixed value vector input $v_0 \in \mathbb{R}^{S}$ and assumes that the value of $p_a(i)^\top v_0$ is approximately given through input $x_{i,a}$.
In particular, $\apxvalcode$ samples from $u - v_0$ and computes their average plus the offset $x_{i,a}$ which we denote by $S_a(i)$. Note that if we take $v_0$ to be the 0 vector, this is just standard naive sampling. However, by invoking $\apxvalcode$ with intelligently chosen $v_0$ and $x$ we are able to take less samples and still obtain precise estimates of $S_a(i)$. The fact we ensure $v_0$ is closer to $v$ and therefore the variance in our sampling is smaller and less samples are needed, is what we refer to as variance reduction and is crucial in achieving our best runtime results. 

In the following lemma we analyze $\apxvalcode$ (Algorithm~\ref{alg:approx_val_op})  and show that it approximates the value operator. The lemma is used repeatedly in later analysis.

\begin{lemma}[Approximate Value Operator]
\label{lem:aprox_value_operator}
$\apxvalcode$ (Algorithm~\ref{alg:approx_val_op}) can be implemented to run in time 
$$O\left(|S||A| \left\lceil \|u - v_0\|_\infty^2 \epsilon^{-2} \ln(|\states|  |\actions| / \delta) \right\rceil\right)$$ and with probability $1 - \failprob$ returns values $\tilde{v} \in \valuespace$ and policy $\pi \in \policyspace$ such that
$
\| \tilde{v} - T_\pi(u) \|_\infty \leq 2 \gamma \epsilon
$
and 
$
\| \tilde{v} - T(u)\|_\infty \leq 2 \gamma \epsilon
$.
\end{lemma}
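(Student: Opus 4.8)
The plan is to prove the three claims of the lemma---the running time, the bound $\|\tilde{v} - T_\pi(u)\|_\infty \leq 2\gamma\epsilon$, and the bound $\|\tilde{v} - T(u)\|_\infty \leq 2\gamma\epsilon$---essentially separately, with the two accuracy bounds sharing most of their work. First I would handle the \textbf{running time}. The algorithm executes a double loop over all $i \in \states$ and $a \in \actions$, and inside it calls $\apxtranscode(u - v_0, M, i, a, \epsilon, \delta/(|\states||\actions|))$ with $M = \|u - v_0\|_\infty$. By Lemma~\ref{lem:trans_sample}, each such call runs in time $O(M^2 \epsilon^{-2} \log(|\states||\actions|/\delta))$, and since there are $|\states||\actions|$ calls, multiplying gives the stated running time (the ceiling accounts for the $m := \lceil \cdots \rceil$ in $\apxtranscode$ and the possibility that the sample count is dominated by the additive one when the variance term is small).

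Next I would set up the \textbf{per-entry concentration}. The key quantity is $\tilde{S}_a(i) = x_a(i) + \apxtranscode(u - v_0, M, i, a, \epsilon, \delta/(|\states||\actions|))$. I want to show $\tilde{S}_a(i)$ approximates the true $p_a(i)^\top u$ well. Write $p_a(i)^\top u = p_a(i)^\top v_0 + p_a(i)^\top (u - v_0)$. By Lemma~\ref{lem:trans_sample}, the $\apxtranscode$ call returns a value within $\epsilon$ of $p_a(i)^\top(u - v_0)$ with probability $1 - \delta/(|\states||\actions|)$ (note $\|u - v_0\|_\infty \leq M$ holds by the choice $M = \|u - v_0\|_\infty$, so the hypothesis of Lemma~\ref{lem:trans_sample} is met). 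By the input promise, $|x_a(i) - p_a(i)^\top v_0| \leq \epsilon$. Combining these with the triangle inequality gives $|\tilde{S}_a(i) - p_a(i)^\top u| \leq 2\epsilon$, each failing with probability at most $\delta/(|\states||\actions|)$. A union bound over all $|\states||\actions|$ pairs then makes this hold simultaneously with probability $1 - \delta$; I would condition on this good event for the remainder.

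On the good event, I would propagate the per-entry error through the definition $\tilde{Q}_a(i) = r_a(i) + \gamma\tilde{S}_a(i)$. Since the true Q-value is $Q_a(i) = r_a(i) + \gamma \cdot p_a(i)^\top u$, the reward terms cancel and the factor $\gamma$ multiplies the $2\epsilon$ error, giving $|\tilde{Q}_a(i) - Q_a(i)| \leq 2\gamma\epsilon$ for all $i, a$. The last step is the \emph{argmax/max transfer}, which I expect to be the main subtlety. For the $T$ bound: since $\tilde{v}_i = \max_a \tilde{Q}_a(i)$ and $T(u)_i = \max_a Q_a(i)$, and the max of two functions that are uniformly within $2\gamma\epsilon$ of each other are themselves within $2\gamma\epsilon$ (a standard fact: $|\max_a f_a - \max_a g_a| \leq \max_a |f_a - g_a|$), I get $\|\tilde{v} - T(u)\|_\infty \leq 2\gamma\epsilon$. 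For the $T_\pi$ bound I must use that $\pi_i \in \argmax_a \tilde{Q}_a(i)$, so $\tilde{v}_i = \tilde{Q}_{\pi_i}(i)$, whereas $[T_\pi(u)]_i = r_{\pi_i}(i) + \gamma \cdot p_{\pi_i}(i)^\top u = Q_{\pi_i}(i)$; then $|\tilde{v}_i - [T_\pi(u)]_i| = |\tilde{Q}_{\pi_i}(i) - Q_{\pi_i}(i)| \leq 2\gamma\epsilon$ directly from the per-entry Q-value bound, giving $\|\tilde{v} - T_\pi(u)\|_\infty \leq 2\gamma\epsilon$.

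The main obstacle is purely the bookkeeping of the two-layer error aggregation: tracking that each of the two $\epsilon$-errors (sampling and precomputed offset) combines to $2\epsilon$ before the $\gamma$ scaling, and correctly routing the union bound so that a single good event of probability $1 - \delta$ simultaneously supports both output guarantees. I would emphasize that the $\pi$-bound requires the stronger observation that $\tilde{v}_i$ equals a \emph{specific} approximate Q-value (at the chosen action $\pi_i$) rather than merely being close to the max, which is what lets the $T_\pi$ bound go through without any additional loss.
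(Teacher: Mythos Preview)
Your proposal is correct and follows essentially the same approach as the paper's proof: the same running-time accounting via Lemma~\ref{lem:trans_sample}, the same union bound plus triangle-inequality decomposition to get $|\tilde{S}_a(i)-p_a(i)^\top u|\le 2\epsilon$, and the same propagation to $|\tilde{Q}_a(i)-Q_a(i)|\le 2\gamma\epsilon$ followed by the max-transfer argument for $T$ and the direct per-action bound for $T_\pi$. The only cosmetic difference is that the paper writes out both directions of the max inequality explicitly rather than citing $|\max_a f_a-\max_a g_a|\le\max_a|f_a-g_a|$ as a fact.
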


\begin{proof}
The running time follows from the running time bound for $\apxtranscode$ given by Lemma~\ref{lem:trans_sample} and the facts that we can compute $\|u - v_0\|_\infty$ in $O(|\states|)$ time and perform the remaining arithmetic and maximization operations in $O(|\states| |\actions|)$ time. 
	
By applying union bound to $|\states| |\actions|$ applications of $\apxtranscode$ as analyzed by Lemma~\ref{lem:trans_sample} we know that with probability at least $1 - \failprob$ for all $i \in \states$ and $a \in \actions$ we have that 
\[
|\tilde{S}_a(i) - x_a(i) - p_a(i)^\top(u - v_0)| \leq \epsilon ~.
\]
By triangle inequality, in this case we have that for all $i \in S$ and $a \in A$
\begin{equation}\label{eqn:myeq}
|\tilde{S}_a(i) - p_a(i)^\top u| \leq 
|\tilde{S}_a(i) - x_a(i) - p_a(i)^\top(u - v_0)| + |x_a(i) - p_a(i)^\top v_0| \leq \epsilon + \epsilon \leq 2\epsilon ~.
\end{equation}
Now for all $i \in \states$ and $a \in \actions$ let 
\[
Q_a(i) \defeq r_a(i) + \gamma \cdot p_a(i)^\top u
 \text{ and }  
v_i \defeq \max_{a \in A} [Q_a(i)] ~.
\]
Note that $[T(u)]_i = v_i$ and $[T_\pi(u)]_i = Q_{\pi_i}(i)$. Now assuming \eqref{eqn:myeq} holds we know that 
\[
|Q_a(i) - \tilde{Q}_a(i)| \leq 2 \gamma \epsilon
\] for all $a \in A$, $i \in \states$. This implies 
\[
\|\tilde{v} - T_\pi(u)\|_\infty = \max_{i \in \states} |\tilde{Q}_{\pi_i}(i) - Q_{\pi_i}(i)| 
\leq 2 \gamma \epsilon ~. 
\]
Furthermore, this bound implies that for all $i \in \states$
\[
\tilde{v}_i = \max_{a \in \actions} \tilde{Q}_a(i)
\leq \max_{a \in \actions} Q_a(i) + 2\gamma\epsilon
= [T(u)]_i + 2\gamma\epsilon 
\]
and similarly, $\tilde{v}_i \geq [T(u)]_i - 2\gamma\epsilon$. Consequently, $|\tilde{v}_i - [T(u)]_i| \leq 2\gamma\epsilon$ and $\|\tilde{v} - T(u)\|_\infty \leq 2\gamma \epsilon$.
\end{proof}

In the following lemma we show that the guarantees of Lemma~\ref{lem:aprox_value_operator} suffice to prove that $\apxvalcode$ is approximately contractive. This connects the analysis of $\apxvalcode$ to standard value iteration and is key for establishing convergence and correctness for our algorithms. 

\begin{lemma}[Approximate Contraction]
\label{lem:approx_contraction}
If values $u,v \in \valuespace$ satisfy $\|v - T(u)\|_\infty \leq \alpha$ then \[
\|v - \vopt\|_\infty \leq \alpha + \gamma \|u - \vopt\|_\infty ~.
\]
\end{lemma}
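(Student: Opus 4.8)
The plan is to prove this by a single application of the triangle inequality combined with the contraction property of the value operator established earlier in Lemma~\ref{lem:contract}. The key observation is that the hypothesis controls the distance from $v$ to $T(u)$, while Lemma~\ref{lem:contract} controls the distance from $T(u)$ to $\vopt$; chaining these two through the triangle inequality yields exactly the claimed bound.

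Concretely, first I would write
\[
\|v - \vopt\|_\infty \leq \|v - T(u)\|_\infty + \|T(u) - \vopt\|_\infty ~.
\]
The first term on the right is at most $\alpha$ directly by the hypothesis $\|v - T(u)\|_\infty \leq \alpha$. For the second term, I would invoke the consequence of the contraction mapping property stated in Lemma~\ref{lem:contract}, namely $\|T(u) - \vopt\|_\infty \leq \gamma \|u - \vopt\|_\infty$, which follows from the fact that $\vopt$ is the unique fixed point of $T$ (so $T(\vopt) = \vopt$). Substituting both bounds gives
\[
\|v - \vopt\|_\infty \leq \alpha + \gamma \|u - \vopt\|_\infty ~,
\]
which is the desired inequality.

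There is essentially no obstacle here: the only ingredients are a triangle inequality and a lemma already proved in the excerpt. The one point that merits a moment of care is recognizing that the contraction statement should be applied in the form that compares $T(u)$ against the fixed point $\vopt = T(\vopt)$, rather than against an arbitrary second argument, so that the factor $\gamma$ multiplies $\|u - \vopt\|_\infty$ as required. Because this lemma is the abstract engine that will later let the approximate value iterates be analyzed exactly like exact value iteration, I would keep the proof as transparent as possible so that its role—converting a per-iteration approximation guarantee of quality $\alpha$ into a one-step progress bound toward $\vopt$—is immediately visible when it is invoked in the convergence analyses of $\randomVI$ and the later algorithms.
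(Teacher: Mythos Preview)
Your proposal is correct and mirrors the paper's own proof essentially line for line: both apply the triangle inequality to split $\|v - \vopt\|_\infty$ into $\|v - T(u)\|_\infty + \|T(u) - \vopt\|_\infty$, bound the first term by $\alpha$ from the hypothesis, and bound the second by $\gamma\|u - \vopt\|_\infty$ via Lemma~\ref{lem:contract} together with $T(\vopt) = \vopt$.
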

\begin{proof}
By triangle inequality and the optimality of $\vopt$ we know that
\begin{align*}
\|v - v^*\|_\infty 
= \| v - \valop(v^*)\|_\infty 
\leq \| v - \valop(u) \|_\infty
+ \|\valop(u) - \valop(v^*)\|_\infty ~.
\end{align*}
Now, $\|v - \valop(u)\|_\infty \leq \alpha$ by assumption, and $\|\valop(u) - \valop(v^*)\|_\infty \leq \gamma \|u - v^*\|_\infty$ by the fact that $\valop$ is a $\gamma$-contraction map (See Lemma~\ref{lem:contract}). Combining yields the desired result. 
\end{proof}

\subsection{Basic Approximate VI Analysis}
\label{section:basic_approx_vi}

Here we introduce a simple randomized value iteration scheme, $\randomVI$, which approximates the value operator using $\apxvalcode$ in each iteration. We analyze its convergence and correctness guarantees here. 

\begin{algorithm}
\caption{Randomized VI: $\randomVI(v_{0}, \innerNumIter, \epsilon, \delta)$} \label{alg:randvi}
\begin{algorithmic}[1]
	\Require{Initial values $v_{0} \in \valuespace$ and number of iterations $\innerNumIter > 0$}
	\Require{Target accuracy $\epsilon \geq 0$ and failure probability $\failprob \in (0, 1)$}
	\State \texttt{Compute $x \in \offsetspace$ such that $x_a(i) = p_a(i)^\top v_{0}$ for all $i \in \states$ and $a \in \actions$}
	\For{\texttt{each iteration} $\innerIter \in [\innerNumIter]$}
	\State $(v_{\innerIter}, \pi_{\innerIter}) = \apxvalcode (v_{\innerIter - 1},v_{0},x,\epsilon,\delta/\innerNumIter)$
	\EndFor	
	\BState \Return $(v_{K}, \pi_{K})$
\end{algorithmic}
\end{algorithm}

\begin{lemma}[Quality of Randomized VI]
\label{lem:randomized_vi_analysis}
With probability $1 - \failprob$, an invocation of $\randomVI$ (Algorithm~\ref{alg:randvi}) produces $v_K \in \valuespace$ and $\pi_K \in \actions^\states$
such that
\[
\|v_\innerNumIter - \vopt\|_\infty \leq \frac{2 \epsilon \gamma}{1 - \gamma} + \exp(-\innerNumIter(1 - \gamma)) \cdot \|v_0 - \vopt\|_\infty ~.
\]
Consequently, if $\innerNumIter \geq \lceil\frac{1}{1 - \gamma} \log(\|v_0 - \vopt\|_\infty / (2 \epsilon \gamma)) \rceil$ then $\|v_\innerNumIter - v^*\|_\infty \leq \frac{4 \epsilon\gamma}{1 - \gamma}$.
\end{lemma}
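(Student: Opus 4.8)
The plan is to combine the per-iteration guarantee of $\apxvalcode$ (Lemma~\ref{lem:aprox_value_operator}) with the approximate contraction property (Lemma~\ref{lem:approx_contraction}) to obtain a one-step error recurrence in the distance to $\vopt$, and then to unroll that recurrence as a geometric series. First I would dispose of the probabilistic bookkeeping. Because $\randomVI$ computes its offsets \emph{exactly}, i.e. $x_a(i) = p_a(i)^\top v_0$, the hypothesis $|x_a(i) - p_a(i)^\top v_0| \leq \epsilon$ demanded by $\apxvalcode$ holds trivially with slack $0$, so Lemma~\ref{lem:aprox_value_operator} is available at every iteration. Each of the $\innerNumIter$ calls is invoked at failure probability $\delta/\innerNumIter$, so a union bound guarantees that with probability at least $1 - \delta$ every call meets its guarantee; I would condition on this event for the rest of the argument. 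On this event, Lemma~\ref{lem:aprox_value_operator} yields $\|v_\innerIter - T(v_{\innerIter - 1})\|_\infty \leq 2\gamma\epsilon$ for each $\innerIter \in [\innerNumIter]$.

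Next I would convert these per-step bounds into a recurrence on $e_\innerIter \defeq \|v_\innerIter - \vopt\|_\infty$. Applying Lemma~\ref{lem:approx_contraction} with $v = v_\innerIter$, $u = v_{\innerIter - 1}$, and $\alpha = 2\gamma\epsilon$ gives
\[
e_\innerIter \leq 2\gamma\epsilon + \gamma\, e_{\innerIter - 1} ~.
\]
Unrolling this across all $\innerNumIter$ iterations and summing the geometric series $\sum_{j = 0}^{\innerNumIter - 1} \gamma^j \leq 1/(1 - \gamma)$ produces
\[
e_\innerNumIter \leq \frac{2\gamma\epsilon}{1 - \gamma} + \gamma^\innerNumIter \|v_0 - \vopt\|_\infty ~.
\]
Finally I would pass from $\gamma^\innerNumIter$ to $\exp(-\innerNumIter(1 - \gamma))$ via the elementary inequality $\gamma = 1 - (1 - \gamma) \leq \exp(-(1 - \gamma))$, which delivers the first displayed bound.

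For the stated consequence, I would plug the prescribed $\innerNumIter \geq \lceil\frac{1}{1 - \gamma}\log(\|v_0 - \vopt\|_\infty/(2\epsilon\gamma))\rceil$ into the exponential term: this choice forces $\exp(-\innerNumIter(1 - \gamma)) \leq 2\epsilon\gamma/\|v_0 - \vopt\|_\infty$, so the second summand is at most $2\epsilon\gamma$, which in turn is at most $2\epsilon\gamma/(1 - \gamma)$ since $1 - \gamma \leq 1$. Adding this to the first term gives the claimed bound $4\epsilon\gamma/(1 - \gamma)$.

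This argument is largely mechanical, so the main obstacle is not a hard estimate but getting the interfaces right: I would be most careful to verify that the exact-offset precomputation in $\randomVI$ genuinely satisfies the precondition of Lemma~\ref{lem:aprox_value_operator} (so that its $2\gamma\epsilon$ guarantee applies at \emph{every} step, with the same $v_0$ and $x$ reused throughout), and that the union bound over the $\innerNumIter$ calls at level $\delta/\innerNumIter$ correctly accumulates to total failure probability $\delta$. One should also note that Lemma~\ref{lem:aprox_value_operator} certifies $\pi_\innerNumIter$ as the policy associated with the final approximate value iterate, even though the displayed guarantee only quantifies $v_\innerNumIter$; the policy-quality consequences are deferred to the later policy-extraction analysis.
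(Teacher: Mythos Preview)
Your proposal is correct and follows essentially the same approach as the paper's proof: union bound over the $\innerNumIter$ calls to $\apxvalcode$ (Lemma~\ref{lem:aprox_value_operator}) to get $\|v_\innerIter - T(v_{\innerIter-1})\|_\infty \leq 2\gamma\epsilon$ for all $\innerIter$, invoke the approximate contraction Lemma~\ref{lem:approx_contraction} to obtain the one-step recurrence, unroll the geometric series, and pass from $\gamma^\innerNumIter$ to $\exp(-\innerNumIter(1-\gamma))$. Your explicit check that the exact offsets $x_a(i)=p_a(i)^\top v_0$ satisfy the precondition of Lemma~\ref{lem:aprox_value_operator} is a nice point of care that the paper leaves implicit.
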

\begin{proof}
By the analysis of $\apxvalcode$, Lemma~\ref{lem:aprox_value_operator}, and union bounding over its $\innerNumIter$ invocation, we have that with probability $1 - \delta$, $\|v_\innerIter - T(v_{\innerIter - 1})\|_\infty \leq 2 \gamma \epsilon$ for all $\innerIter \in [\innerNumIter]$. In this case, by Lemma~\ref{lem:approx_contraction}, we have that each iteration of $\randomVI$ is approximately contractive with $\alpha = 2\epsilon\gamma$, i.e. 
\[
\|v_\innerIter - \vopt\|_\infty \leq 2 \epsilon \gamma + \gamma \|v_{\innerIter - 1} - \vopt\|_\infty
\]
for all $\innerIter \in [\innerNumIter]$. Consequently, for all $\innerIter \geq 0$ we have
\begin{align*}
\|v_{\innerIter} - v^*\|_\infty
&\leq \sum_{\innerIter \in [\innerNumIter]} \gamma^\innerIter \cdot 2\epsilon\gamma
+
\gamma^\innerIter \cdot \|v_0 - v^*\|_\infty
\leq \frac{2\epsilon\gamma}{1 - \gamma} + \exp(- \innerIter (1 - \gamma)) \cdot \|v_0 - v^*\|_\infty ~,
\end{align*} 
where we used that $\sum_{\innerIter \in [\innerNumIter]} \gamma^\innerIter \leq \sum_{\innerIter = 1}^{\infty} \gamma^\innerIter = \frac{1}{1 - \gamma}$ and $\gamma = (1 - (1 - \gamma)) \leq \exp(- (1 - \gamma))$.	

The final claim follows from the fact that if $\innerNumIter \geq \frac{1}{1 - \gamma} \cdot \log(\|v_0 - v^*\|_\infty / 2\epsilon\gamma)$ then
\[
\frac{1}{1 - \gamma} \cdot 2\epsilon\gamma + \exp(- \innerNumIter (1 - \gamma)) \cdot \|v_{0} - v^*\|_\infty
\leq \frac{1}{1 - \gamma} \cdot 2\epsilon\gamma + 2\epsilon\gamma \leq \frac{4\epsilon\gamma}{1- \gamma}  ~.
\]
\end{proof}

\begin{lemma}[Randomized VI Runtime]
\label{lem:randomized_vi_runtime}
$\randomVI$ (Algorithm~\ref{alg:randvi}) can be implemented to run in time
\[
O\left(|S|^2 |A| + \innerNumIter |S|  |A| \left[
\frac{\|v_{0} - v^*\|_\infty^2}{\epsilon^2} + \frac{1}{(1 - \gamma)^2}
\right] \log\left(\frac{|S| |A| \innerNumIter}{\delta}\right)
\right) ~.
\]
\end{lemma}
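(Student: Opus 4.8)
The plan is to account separately for the two sources of work in $\randomVI$: the one-time exact computation of the offset vector $x$ in the first line, and the $\innerNumIter$ subsequent calls to $\apxvalcode$. For the first, computing $x_a(i) = p_a(i)^\top v_0$ for every state-action pair requires $|S||A|$ inner products, each between two $|S|$-dimensional vectors, for a total of $O(|S|^2|A|)$ arithmetic operations; this is exactly the first term in the claimed bound, and it is a fixed cost paid once, independent of $\innerNumIter$, $\epsilon$, and $\delta$.

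For the iterations, I would invoke the runtime guarantee of $\apxvalcode$ (Lemma~\ref{lem:aprox_value_operator}). The $\innerIter$-th call is $\apxvalcode(v_{\innerIter-1}, v_0, x, \epsilon, \delta/\innerNumIter)$, so with failure probability $\delta/\innerNumIter$ it runs in time $O(|S||A|\lceil \|v_{\innerIter-1} - v_0\|_\infty^2 \epsilon^{-2} \ln(|S||A|\innerNumIter/\delta)\rceil)$. The only iteration-dependent quantity here is $\|v_{\innerIter-1} - v_0\|_\infty$, so the crux of the proof is to bound this drift by something uniform in $\innerIter$.

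The key step is to control the drift using the convergence analysis rather than tracking it iterate-by-iterate. On the $1-\delta$ success event of Lemma~\ref{lem:randomized_vi_analysis} (the same event that controls the sample concentration of every internal $\apxtranscode$ call), its proof establishes for every iterate the bound $\|v_{\innerIter} - \vopt\|_\infty \leq \frac{2\epsilon\gamma}{1-\gamma} + \|v_0 - \vopt\|_\infty$, obtained by upper bounding the $\exp(-\innerIter(1-\gamma))$ factor by $1$. A triangle inequality then gives $\|v_{\innerIter-1} - v_0\|_\infty \leq \|v_{\innerIter-1} - \vopt\|_\infty + \|\vopt - v_0\|_\infty \leq \frac{2\epsilon\gamma}{1-\gamma} + 2\|v_0 - \vopt\|_\infty$, a bound uniform in $\innerIter$. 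Squaring via $(a+b)^2 \le 2a^2 + 2b^2$ and dividing by $\epsilon^2$ shows $\|v_{\innerIter-1} - v_0\|_\infty^2 \epsilon^{-2} = O(\frac{1}{(1-\gamma)^2} + \frac{\|v_0 - \vopt\|_\infty^2}{\epsilon^2})$, which is precisely the bracketed factor in the target running time.

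It remains to assemble these pieces and handle the ceiling. Since $\gamma \in (0,1)$ forces $1/(1-\gamma)^2 \geq 1$, the bracketed factor is at least $1$, so the $\lceil\cdot\rceil$ inflates the per-call work by at most an additive $O(|S||A|)$ per iteration, i.e. $O(\innerNumIter |S||A|)$ total, which is dominated by the main term. Summing the per-iteration bound over $\innerIter \in [\innerNumIter]$ and adding the $O(|S|^2|A|)$ offset-computation cost yields the stated running time. The main obstacle is the uniform-in-$\innerIter$ drift bound; a mild subtlety worth flagging is that because $\apxvalcode$ sets its sample count adaptively from the realized $\|v_{\innerIter-1} - v_0\|_\infty$, the stated running time holds on the same high-probability event as Lemma~\ref{lem:randomized_vi_analysis} rather than deterministically.
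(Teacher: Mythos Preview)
Your proposal is correct and follows essentially the same route as the paper's proof: the $O(|S|^2|A|)$ offset cost, the per-iteration $\apxvalcode$ runtime from Lemma~\ref{lem:aprox_value_operator}, the uniform bound $\|v_\innerIter - \vopt\|_\infty \leq \frac{2\epsilon\gamma}{1-\gamma} + \|v_0 - \vopt\|_\infty$ (the paper cites Lemma~\ref{lem:approx_contraction} directly rather than the proof of Lemma~\ref{lem:randomized_vi_analysis}, but it is the same bound), the triangle inequality to $v_0$, and the $(a+b)^2 \leq 2a^2 + 2b^2$ squaring. Your explicit treatment of the ceiling and the observation that the runtime bound holds on the high-probability event are refinements the paper leaves implicit.
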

\begin{proof}
Note that $x$ can be computed naively in time $O(|\states|^2 |\actions|)$. To bound the remaining running time we appeal to Lemma~\ref{lem:approx_contraction}, with $\alpha = 2\epsilon\gamma$, as justified by Lemma~\ref{lem:aprox_value_operator}. We have $\|v_{\innerIter} - v^*\|_\infty \leq \frac{2 \epsilon \gamma}{1 - \gamma} +\|v_{0} - \vopt\|_\infty$ for all $\innerIter \in [0, \innerNumIter]$. Consequently by triangle inequality and $(a + b)^2 \leq 2 a^2 + 2 b^2$ for all $a,b \in \R$: 
\begin{align*}
\|v_\innerIter - v_0\|_\infty^2 & \leq \|v_\innerIter - v^* + v^* - v_0\|_\infty^2 
 \leq \left( \|v_\innerIter - v^*\|_\infty + \|\vopt -v_0\|_\infty \right)^2 
 \leq 2 \|v_k - v^*\|_\infty^2 + 2 \|\vopt-v_0\|_\infty^2
 \\
& \leq 2\Big(\frac{2 \epsilon}{1 - \gamma}\Big)^2 + 2 \Big(2\|v_0 - v^*\|\Big)^2 
 = \frac{8\epsilon^2}{(1 - \gamma)^2} + 8 \|v_0 - v^*\|_\infty^2
\end{align*}
The result then follows from the running time of $\apxvalcode$ given by Lemma~\ref{lem:aprox_value_operator}. 
\end{proof}

Note that this analysis applies immediately to classic approximate value iteration algorithms. In particular, this analysis can extend the one provided by \cite{kearns1999finite} for phased Q-learning, to provide a total runtime guarantee in the restricted setting where (as assumed) taking a sample costs $O(1)$  time in expectation. 

\begin{theorem}[Simple Randomized VI] With probability $1 - \delta$ for $\epsilon < \frac{M}{1 - \gamma}$\footnote{Note that if $\epsilon$ is larger than $\frac{M}{1 - \gamma}$, a naive solution value vector $\vec{0}$ would suffice, since $\|\vec{0} - \vopt\|_\infty \leq \epsilon$} an invocation of 
\[
\randomVI \left(
\vec{0} ~ , ~ \left\lceil \frac{1}{1 - \gamma} \log \left(\frac{2 M }{ (1 - \gamma)^2 \epsilon}\right) \right\rceil ~ , ~ \frac{(1 - \gamma) \epsilon}{4\gamma} ~ , ~ \delta
\right)
\]
(See Algorithm~\ref{alg:randvi}), produces $v_\innerNumIter \in \valuespace$ such that $\|v_\innerNumIter - \vopt\|_\infty \leq \epsilon$ and can be implemented to run in time 
\[
\otilde\left(
\frac{|S|  |A| M^2}{(1 - \gamma)^5 \epsilon^2} 
\log \left(\frac{1}{\epsilon} \right) \right)~.
\]
\end{theorem}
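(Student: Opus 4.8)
The plan is to obtain both halves of the statement by directly feeding the chosen parameters into the two lemmas already established for $\randomVI$: Lemma~\ref{lem:randomized_vi_analysis} for correctness and Lemma~\ref{lem:randomized_vi_runtime} for the running time. Throughout I would write $\epsilon' := \frac{(1-\gamma)\epsilon}{4\gamma}$ for the \emph{internal} accuracy handed to the subroutine, carefully keeping it distinct from the \emph{target} accuracy $\epsilon$. For correctness, I would first bound the initial error: since every reward lies in $[-M,M]$, the optimal value obeys $\|\vopt\|_\infty \leq \frac{M}{1-\gamma}$, so with $v_0 = \vec{0}$ we have $\|v_0 - \vopt\|_\infty = \|\vopt\|_\infty \leq \frac{M}{1-\gamma}$. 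Then I would compute $2\epsilon'\gamma = \frac{(1-\gamma)\epsilon}{2}$, so that $\frac{\|v_0-\vopt\|_\infty}{2\epsilon'\gamma} \leq \frac{2M}{(1-\gamma)^2\epsilon}$; by monotonicity of $\log$ and of the ceiling, the chosen $\innerNumIter = \lceil \frac{1}{1-\gamma}\log(\frac{2M}{(1-\gamma)^2\epsilon})\rceil$ meets the iteration threshold required by Lemma~\ref{lem:randomized_vi_analysis}. That lemma then yields $\|v_\innerNumIter - \vopt\|_\infty \leq \frac{4\epsilon'\gamma}{1-\gamma} = \epsilon$ with probability $1-\delta$, the factor $4\gamma$ in $\epsilon'$ having been chosen precisely so that this cancellation produces exactly $\epsilon$. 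I would also note that $\epsilon < \frac{M}{1-\gamma}$ guarantees $\frac{2M}{(1-\gamma)^2\epsilon} > 1$, so the number of iterations is well-defined and positive.

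For the running time I would substitute the same parameters into Lemma~\ref{lem:randomized_vi_runtime}. The key observation is that the choice $v_0 = \vec{0}$ makes the offsets trivial: $x_a(i) = p_a(i)^\top \vec{0} = 0$, so $x$ can be formed in $O(|S||A|)$ time rather than via the generic $O(|S|^2|A|)$ matrix-vector products, and the leading $O(|S|^2|A|)$ term of the generic bound drops out. This step is essential, since that additive term is not in general dominated by the sampling cost. It then remains to bound the sampling term $\innerNumIter\,|S||A|\big[\frac{\|v_0-\vopt\|_\infty^2}{\epsilon'^2} + \frac{1}{(1-\gamma)^2}\big]\log(\frac{|S||A|\innerNumIter}{\delta})$.

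To finish the algebra I would substitute $\|v_0-\vopt\|_\infty^2 \leq \frac{M^2}{(1-\gamma)^2}$ and $\epsilon'^2 = \frac{(1-\gamma)^2\epsilon^2}{16\gamma^2}$, giving $\frac{\|v_0-\vopt\|_\infty^2}{\epsilon'^2} = O\big(\frac{M^2}{(1-\gamma)^4\epsilon^2}\big)$ since $\gamma < 1$. Using the hypothesis $\epsilon < \frac{M}{1-\gamma}$ one checks that $\frac{1}{(1-\gamma)^2} \leq \frac{M^2}{(1-\gamma)^4\epsilon^2}$, so the bracket collapses to $O\big(\frac{M^2}{(1-\gamma)^4\epsilon^2}\big)$. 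Combining this with $\innerNumIter = O\big(\frac{1}{1-\gamma}\log(\frac{M}{(1-\gamma)\epsilon})\big)$ produces a leading factor of $\frac{|S||A|M^2}{(1-\gamma)^5\epsilon^2}$, and absorbing the $\log(\frac{M}{(1-\gamma)\epsilon})$ and union-bound $\log(\frac{|S||A|\innerNumIter}{\delta})$ factors into the $\otilde$ notation, while keeping an explicit $\log(1/\epsilon)$, gives the stated bound $\otilde\big(\frac{|S||A|M^2}{(1-\gamma)^5\epsilon^2}\log(1/\epsilon)\big)$.

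The main obstacle here is bookkeeping rather than any conceptual difficulty: one must consistently distinguish the target accuracy $\epsilon$ from the internal accuracy $\epsilon'$ threaded through the subroutines, verify that the specific ceiling defining $\innerNumIter$ exactly matches the convergence threshold of Lemma~\ref{lem:randomized_vi_analysis}, and recognize that the zero initialization is what removes the $O(|S|^2|A|)$ offset-precomputation cost. The two algebraic facts that must line up are the accuracy cancellation $\frac{4\epsilon'\gamma}{1-\gamma} = \epsilon$ and the domination of the $\frac{1}{(1-\gamma)^2}$ term by the leading sampling term, the latter relying on the standing hypothesis $\epsilon < \frac{M}{1-\gamma}$.
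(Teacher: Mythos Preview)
Your proposal is correct and follows exactly the same approach as the paper: invoke Lemma~\ref{lem:randomized_vi_analysis} and Lemma~\ref{lem:randomized_vi_runtime} with the given parameters, use $\|\vopt\|_\infty \leq \frac{M}{1-\gamma}$ to bound the initial error, and observe that $v_0=\vec{0}$ makes $x=\vec{0}$ so the $O(|S|^2|A|)$ precomputation vanishes. The paper's proof is simply a terser version of what you wrote, omitting the explicit algebraic verifications that you spelled out.
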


\begin{proof}
In this case $x = \vec{0}$ and therefore computing it does not contribute to the running time, i.e. we do not have an additive $|S|^2 |A|$ term (also by our assumption we can sample in $O(1)$ expected time from the transition function). To bound the rest of the running time, note that $\|\vopt\|_\infty \leq \frac{M}{1 - \gamma}$ and therefore $\|\vec{0} - \vopt\|_\infty \leq \frac{M}{1 - \gamma}$. Consequently, the result follows from Lemma~\ref{lem:randomized_vi_analysis} and Lemma~\ref{lem:randomized_vi_runtime}.
\end{proof}

In the next several sections we show how to improve upon this result.

\subsection{High Precision Randomized VI in Nearly Linear Time}
\label{section:polylog_variance_reduced_randomized_vi}
In this section we improve on $\randomVI$ to obtain our best running times for computing an $\epsilon$-optimal value vector for small values of $\epsilon$. This improvement stems from the insight that the distance between the current value vector and the optimal value $\vopt$ as $\randomVI$ progresses can be bounded by the approximate contraction property of $\apxvalcode$, the subroutine that $\randomVI$ uses for approximate value iteration. Hence, in later iterations when the value vector is quite close to the optimal value, a more efficient sample complexity is sufficient to achieve similar concentration guarantees. 

Our algorithm, $\highprecisionVI$, is quite simple. In each iteration the algorithm calls $\randomVI$ for a decreasing error requirement, $\epsilon_k$. In the first iteration when the distance between the current value vector and the optimal value is high, $\highprecisionVI$ simply decreases the distance to optimal value vector by a factor of $\frac{1}{2}$. Then $\highprecisionVI$ appeals to $\randomVI$ again using the new value vector as the initial vector input, knowing that because this new vector is much closer to $\vopt$, subsequent approximate value iterations will require fewer samples to achieve an error $\frac{1}{4}$ of the original. This process is repeated so that the cost of each iteration is less than the same upper bound but so that after $k$ iterations the error is a $1/2^k$ fraction of the original. Ultimately this allows $\randomVI$ to produce an $\epsilon$-optimal value vector in nearly linear time. 

\begin{algorithm}[H]
\caption{High Precision Randomized VI: $\highprecisionVI(\epsilon, \delta)$}
\label{alg:polylog_randomized_vi}
\begin{algorithmic}[1]
\Require{Target precision $\epsilon$ and failure probability $\failprob \in (0, 1)$}
\State \texttt{Let $K = \lceil \log_2(\frac{M}{\epsilon(1 - \gamma)}) \rceil$ and $\innerNumIter = \lceil\frac{1}{1 - \gamma} \log(\frac{4}{1 - \gamma})\rceil$}
\State \texttt{Let $v_0 = \vec{0}$ and $\epsilon_0 = \frac{M}{1 - \gamma}$}
\For{\texttt{each iteration $k \in [K]$}}
\State $\epsilon_k = \frac{1}{2} \epsilon_{k - 1} = \frac{M}{2^{k} (1 - \gamma)}$
\State $(v_k, \pi_k) = \randomVI(v_{k - 1},\innerNumIter,(1 - \gamma) \epsilon_k / (4 \gamma),\delta/K)$ 
\EndFor
\BState \Return $(v_K, \pi_K)$
\end{algorithmic}
\end{algorithm}

\begin{lemma}[Quality and Runtime of High Precision Randomized VI]
\label{lem:polylog_randomized_vi_quality}
With probability $1 - \delta$ in an invocation of
$\highprecisionVI$ (See Algorithm~\ref{alg:polylog_randomized_vi}) we have that $\|v_k - \vopt\|_\infty \leq \epsilon_k$ for all $k \in [0, K]$ and therefore $v_K$ is an $\epsilon$-optimal value vector. 
\end{lemma}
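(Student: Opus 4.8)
The plan is to prove the invariant $\|v_k - \vopt\|_\infty \leq \epsilon_k$ by induction on $k \in [0, K]$, after first conditioning on the simultaneous success of all $K$ inner calls to $\randomVI$. Each call $\randomVI(v_{k-1}, \innerNumIter, (1 - \gamma)\epsilon_k/(4\gamma), \delta/K)$ fails with probability at most $\delta/K$, so a union bound over the $K$ iterations shows that with probability at least $1 - \delta$ the guarantee of Lemma~\ref{lem:randomized_vi_analysis} holds for every $k$ at once; the remainder of the argument is deterministic on this event.

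For the base case $k = 0$ I would invoke the standard bound $\|\vopt\|_\infty \leq M/(1 - \gamma)$, which holds because every reward lies in $[-M, M]$ and is discounted by $\gamma$. Since $v_0 = \vec{0}$ and $\epsilon_0 = M/(1 - \gamma)$, this gives $\|v_0 - \vopt\|_\infty = \|\vopt\|_\infty \leq \epsilon_0$ immediately.

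For the inductive step, assume $\|v_{k-1} - \vopt\|_\infty \leq \epsilon_{k-1} = 2\epsilon_k$. Applying Lemma~\ref{lem:randomized_vi_analysis} to the $k$-th call, with initial vector $v_{k-1}$, iteration count $\innerNumIter$, and target accuracy $(1 - \gamma)\epsilon_k/(4\gamma)$, yields
\[
\|v_k - \vopt\|_\infty \leq \frac{2\gamma}{1 - \gamma}\cdot\frac{(1 - \gamma)\epsilon_k}{4\gamma} + \exp(-\innerNumIter(1 - \gamma))\cdot\|v_{k-1}-\vopt\|_\infty .
\]
The first term simplifies exactly to $\epsilon_k/2$. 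For the second term, the choice $\innerNumIter = \lceil \frac{1}{1-\gamma}\log(\frac{4}{1-\gamma})\rceil$ gives $\exp(-\innerNumIter(1 - \gamma)) \leq (1-\gamma)/4$; combining this with the inductive hypothesis $\|v_{k-1}-\vopt\|_\infty \leq 2\epsilon_k$ and $1 - \gamma \leq 1$ bounds the second term by $\epsilon_k/2$ as well. Summing the two halves closes the induction at $\|v_k - \vopt\|_\infty \leq \epsilon_k$.

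To conclude $\epsilon$-optimality of $v_K$, I would use the choice $K = \lceil \log_2(M/(\epsilon(1 - \gamma)))\rceil$, which forces $2^K \geq M/(\epsilon(1 - \gamma))$ and hence $\epsilon_K = M/(2^K(1 - \gamma)) \leq \epsilon$; together with the invariant this yields $\|v_K - \vopt\|_\infty \leq \epsilon_K \leq \epsilon$. I do not expect a genuine obstacle here, as the argument is a clean telescoping induction; the only care required is verifying that both the per-iteration sampling error and the geometrically decaying contraction term each stay below $\epsilon_k/2$, so that the halving schedule $\epsilon_k = \epsilon_{k-1}/2$ is preserved exactly. If the runtime half of the lemma is also wanted, it follows by summing the per-call bound of Lemma~\ref{lem:randomized_vi_runtime} over $k$, using $\|v_{k-1}-\vopt\|_\infty \leq \epsilon_{k-1}$ to control the variance in each term.
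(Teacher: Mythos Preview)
Your proposal is correct and follows essentially the same approach as the paper: a union bound over the $K$ calls to $\randomVI$, followed by induction on $k$ using Lemma~\ref{lem:randomized_vi_analysis}, with the base case handled via $\|\vopt\|_\infty \leq M/(1-\gamma)$. The only cosmetic difference is that the paper invokes the packaged ``Consequently'' clause of Lemma~\ref{lem:randomized_vi_analysis} (checking that $\innerNumIter$ meets the threshold and reading off $\|v_k - \vopt\|_\infty \leq 4\epsilon\gamma/(1-\gamma) = \epsilon_k$), whereas you plug directly into the explicit bound and split it into two halves of size $\epsilon_k/2$; both arrive at the same place.
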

\begin{proof}
We prove by induction on $k$ for that all $k  \in [0, K]$ that $\|v_k - \vopt\|_\infty \leq \epsilon_k$ with $\epsilon_{0} = \frac{M}{2^0 (1 - \gamma)}$ provided that each iteration of $\randomVI$ meets the criteria of Lemma~\ref{lem:randomized_vi_analysis}. Since by union bound this happens with probability $1 - \delta$ and since $\epsilon_k \leq \epsilon$ this suffices to prove the claim. 

Note that clearly $v_0 = \vec{0}$ and therefore $\|v_0 - \vopt\|_\infty \leq \frac{M}{1-\gamma} = \epsilon_0$ and therefore the base case holds. Now suppose that $\|v_{k - 1} - \vopt\|_\infty \leq \epsilon_{k - 1}$ for some $k \in [K]$. By Lemma~\ref{lem:randomized_vi_analysis} and our assumption that the call to $\randomVI$ succeeds we have that so long as 
\[
\innerNumIter \geq \left\lceil\frac{1}{1 - \gamma} \log\left( \frac{\|v_{k - 1} - \vopt\|_\infty }{ (2 [(1 - \gamma) \epsilon_k / (4\gamma)]  \gamma)}\right) \right\rceil
= 
\left\lceil\frac{1}{1 - \gamma} \log\left( \frac{2 \|v_{k - 1} - \vopt\|_\infty }{ (1 - \gamma) \epsilon_k }\right) \right\rceil
\]
which it is as by the assumption that $\|v_{k - 1} - \vopt\|_\infty \leq \epsilon_{k - 1} = 2 \epsilon_k$, then
\[
\|v_{k} - v^*\|_\infty \leq \frac{4 [(1 - \gamma) \epsilon_{k} / (4\gamma)] \gamma}{1 - \gamma} = \epsilon_{k} ~.
\]
Consequently, the result follows by induction.
\end{proof}

\begin{lemma}[Runtime of High Precision Randomized VI]
\label{lem:polylog_randomized_vi_runtime}
$\highprecisionVI$ (See Algorithm~\ref{alg:polylog_randomized_vi}) can be implemented so that with probability $1 - \delta$ it runs in time 
\[
\runtimenearline  ~.
\]
\end{lemma}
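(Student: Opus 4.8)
The plan is to bound the total running time of $\highprecisionVI$ by summing the cost of its $K$ calls to $\randomVI$, plus the one-time cost of computing the offset vector $x$, and then showing the geometric structure of the $\epsilon_k$ makes the total collapse to the claimed bound. First I would invoke the runtime guarantee for a single call of $\randomVI$ (Lemma~\ref{lem:randomized_vi_runtime}). For the $k$-th call, the initial vector is $v_{k-1}$, the target accuracy passed in is $(1-\gamma)\epsilon_k/(4\gamma)$, and the number of iterations is the fixed $\innerNumIter = \lceil\frac{1}{1-\gamma}\log(\frac{4}{1-\gamma})\rceil$. The crucial input to the runtime bound is $\|v_{k-1} - \vopt\|_\infty$, which by Lemma~\ref{lem:polylog_randomized_vi_quality} is at most $\epsilon_{k-1} = 2\epsilon_k$ (this holds on the $1-\delta$ probability event, which is why the lemma is stated to hold with that probability).

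Next I would substitute these quantities into the bracketed term of Lemma~\ref{lem:randomized_vi_runtime}. The key cancellation is that the ratio $\|v_{k-1}-\vopt\|_\infty^2 / (\text{target accuracy})^2$ becomes
\[
\frac{\|v_{k-1}-\vopt\|_\infty^2}{\big((1-\gamma)\epsilon_k/(4\gamma)\big)^2}
\leq \frac{(2\epsilon_k)^2}{\big((1-\gamma)\epsilon_k/(4\gamma)\big)^2}
= O\!\left(\frac{1}{(1-\gamma)^2}\right),
\]
so the first term in the bracket is dominated by (a constant times) the second term $\frac{1}{(1-\gamma)^2}$. This is the heart of the argument: because each call only has to shrink the error by a constant factor rather than all the way to $\epsilon$, the per-call sample complexity is \emph{independent of $k$ and of $\epsilon$}, depending only on $\frac{1}{(1-\gamma)^2}$. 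Thus each call of $\randomVI$ costs
\[
O\!\left(|S|^2|A| + \innerNumIter \cdot \frac{|S||A|}{(1-\gamma)^2}\log\!\left(\frac{|S||A|\innerNumIter}{\delta/K}\right)\right).
\]
Plugging in $\innerNumIter = \tO(\frac{1}{1-\gamma})$ turns the second term into $\tO\big(\frac{|S||A|}{(1-\gamma)^3}\big)$ per call (absorbing the $\log\frac{4}{1-\gamma}$ and the log factor into $\tO$).

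The final step is to sum over $k \in [K]$ with $K = \lceil\log_2(\frac{M}{\epsilon(1-\gamma)})\rceil$. The additive $|S|^2|A|$ term appears once per call, giving $K\cdot|S|^2|A|$; however, the offset $x$ is recomputed inside each $\randomVI$ call, so I would note that this is the source of the $|S|^2|A|\log(M/\epsilon)$ contribution (here $K = \tO(\log(M/\epsilon))$ after folding $\frac{1}{1-\gamma}$ into $\tO$). The sampling term contributes $K\cdot\tO(\frac{|S||A|}{(1-\gamma)^3})$. Collecting both and expressing $K$ and the failure-probability logarithm explicitly yields
\[
\tO\!\left(\left(|S|^2|A| + \frac{|S||A|}{(1-\gamma)^3}\right)\log\!\left(\frac{M}{\epsilon}\right)\log\!\left(\frac{1}{\delta}\right)\right),
\]
matching the claim. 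The main obstacle I anticipate is bookkeeping rather than conceptual: one must be careful that the runtime of Lemma~\ref{lem:randomized_vi_runtime} is a statement that holds deterministically (it bounds sample counts via the \emph{a priori} distance estimate $\|v_\ell - \vopt\|_\infty \le \frac{2\epsilon\gamma}{1-\gamma} + \|v_0-\vopt\|_\infty$ rather than relying on the high-probability event), so that multiplying $K$ runtime bounds together does not degrade the $1-\delta$ success probability; and that the $\log(K/\delta) = \log(1/\delta) + \log K$ factors from the union bound across calls and across the internal $\apxtranscode$ invocations all get correctly absorbed into the two logarithmic factors of the final expression.
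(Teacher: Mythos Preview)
Your proposal is correct and follows essentially the same approach as the paper: invoke Lemma~\ref{lem:randomized_vi_runtime} for each of the $K$ calls, use Lemma~\ref{lem:polylog_randomized_vi_quality} (on the $1-\delta$ event) to bound $\|v_{k-1}-\vopt\|_\infty\le 2\epsilon_k$ so that the ratio in the bracketed term collapses to $O((1-\gamma)^{-2})$, substitute $\innerNumIter=\otilde((1-\gamma)^{-1})$, and sum the resulting per-call cost over $K=\otilde(\log(M/\epsilon))$ iterations. Your remark about the runtime bound being conditioned on the high-probability event is apt and matches how the paper handles it (the lemma is stated ``with probability $1-\delta$'' precisely because $\|v_\innerIter-v_0\|_\infty$, hence the sample count in each $\apxvalcode$ call, is itself random).
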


\begin{proof}
Each iteration $k$ calls  $\randomVI(v_{k - 1},\innerNumIter,(1 - \gamma) \epsilon_k / (4 \gamma),\delta/K)$. By Lemma~\ref{lem:randomized_vi_runtime} we note that the running time of the $k$th call is
\[
O \left( |S|^2 |A|  + 
\innerNumIter |S| |A| 
\left[\frac{\|v_{k - 1} - \vopt\|_\infty}{[(1 - \gamma) \epsilon_k / (4 \gamma)]^2} + \frac{1}{(1 - \gamma)^2}\right]
\log \left(\frac{|S| |A| \innerNumIter}{\failprob}\right)
\right)
~.
\]
However, by Lemma~\ref{lem:randomized_vi_runtime} we know that with probability $1 - \delta$, $\|v_{k - 1} - \vopt\|_\infty \leq \epsilon_{k - 1} \leq \epsilon_k$ for all $k$ and consequently, the cost of each iteration $k$ is
\[
O \left( |S|^2 |A|  + \frac{|S| |A|}{(1 - \gamma)^3}  
\log \left(\frac{1}{1 - \gamma} \right) 
\log \left(\frac{|S| |A| \log(\frac{1}{1 - \gamma}) \log(\frac{M}{\epsilon (1 - \gamma)})}{(1 - \gamma) \delta}\right) \right) ~.
\]
 Aggregating over the $K = \lceil \log_2(\frac{M}{\epsilon(1 - \gamma)}) \rceil$ iterations yields the desired running time.
\end{proof}

Note that this proof used that the running time of $\randomVI$ in $\highprecisionVI$ depends on $\|v_{k - 1} - \vopt\|^2_\infty/\epsilon_k^2$. By decreasing both $\|v_{k - 1} - \vopt\|^2_\infty$ and $\epsilon_k^2$ at geometric rates we ensured the cost of $\randomVI$ remained the same, even though the accuracy it yielded improved. In the next section we show how such variance reduction can be applied to obtain faster sublinear time algorithms.

\subsection{Randomized VI in Sublinear Time}
\label{section:poly_variance_reduced_randomized_vi}
Here we present an algorithm, $\sampledRandomVI$ for computing an $\epsilon$-optimal value vector that runs in sublinear time whenever $\frac{1}{\epsilon^2} = \tilde{o}(\frac{|S|(1-\gamma)^4}{M^2})$. Our algorithm is similar to $\highprecisionVI$; the primary difference is that instead of computing the initial offset $x$ in $\apxvalcode$ exactly in $O(|S|^2|A| )$ time, we compute an an approximation, $\tilde{x}$, to $\epsilon$-accuracy of $x$ in $\ell_\infty$, ie $\|x - \tilde{x}\|_\infty \leq \epsilon$, by sampling in $O(|S| |A| \|v_0\|_\infty^2 / \epsilon^2)$ time. Variance reduction is then performed as in $\highprecisionVI$, to decrease the number of samples need to run the rest of the algorithm. 

\begin{algorithm}[H]
\caption{Sampled Randomized VI: $\sampledRandomVI(v_{0}, 
\innerNumIter, \epsilon, \delta)$} \label{alg:sampled_randvi}
\begin{algorithmic}[1]
	\Require{Initial values $v_{0} \in \valuespace$ and number of iterations $\innerNumIter > 0$}
	\Require{Target accuracy $\epsilon > 0$ and failure probability $\failprob \in (0, 1)$}
	\State Sample to obtain approximate offsets: $\tilde{x} \in \offsetspace$ with $|\tilde{x}_a(i) - p_a(i)^\top v_0| \leq \epsilon$ for all $i \in \states$ and $a \in \actions$
	\For{\texttt{each round} $\innerIter \in [\innerNumIter]$}
	\State $(v_{\innerIter}, \pi_{\innerIter}) = \apxvalcode (v_{\innerIter-1},v_{0},\tilde{x},\epsilon,\delta/\innerNumIter)$
	\EndFor	
	\BState \Return $(v_{\innerNumIter}, \pi_{\innerNumIter})$
\end{algorithmic}
\end{algorithm}
The analysis for  $\sampledRandomVI$ follows exactly from Lemmas~\ref{lem:randomized_vi_analysis} and ~\ref{lem:randomized_vi_runtime}, since $\apxvalcode$ was designed and analyzed for the case where $|\tilde{x}_a(i) - p_a(i)^\top v_0| \leq \epsilon$. This condition was trivially satisified for $\randomVI$ when we used $x_a(i) = p_a(i)^\top v_0$. 

\begin{algorithm}[H]
\caption{Sublinear Time Randomized VI: $\sublinearRandomVI(\epsilon, \delta)$}
\label{alg:poly_randomized_vi}
\begin{algorithmic}[1]
\Require{Target precision $\epsilon$ and failure probability $\failprob \in (0, 1)$}
\State \texttt{Let $K = \lceil \log_2(\frac{M}{\epsilon(1 - \gamma)}) \rceil$ and $\innerNumIter = \lceil\frac{1}{1 - \gamma} \log(\frac{4}{1 - \gamma})\rceil$}
\State \texttt{Let $v_0 = \vec{0}$ and $\epsilon_0 = \frac{M}{1 - \gamma}$}
\For{\texttt{each iteration $k \in [K]$}}
\State $\epsilon_k = \frac{1}{2} \epsilon_{k - 1} = \frac{M}{2^{k} (1 - \gamma)}$
\State $(v_k, \pi_k) = \sampledRandomVI(v_{k - 1},\innerNumIter,(1 - \gamma) \epsilon_k / (4 \gamma),\delta/K)$ 
\EndFor
\BState \Return $(v_K, \pi_K)$
\end{algorithmic}
\end{algorithm}

\begin{lemma}[Quality of Sublinear Randomized VI]
\label{lem:poly_randomized_vi_quality}
In an invocation of
$\sublinearRandomVI$ (See Algorithm~\ref{alg:poly_randomized_vi}) with probability $1 - \failprob$ we have that $\|v_k - \vopt\|_\infty \leq \epsilon_k$ for all $k \in [0, K]$ and therefore $v_K$ is an $\epsilon$-optimal value vector. 
\end{lemma}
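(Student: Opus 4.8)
The plan is to follow the proof of Lemma~\ref{lem:polylog_randomized_vi_quality} essentially verbatim, since $\sublinearRandomVI$ (Algorithm~\ref{alg:poly_randomized_vi}) is structurally identical to $\highprecisionVI$ (Algorithm~\ref{alg:polylog_randomized_vi}) with every call to $\randomVI$ replaced by a call to $\sampledRandomVI$ on the same parameters. The enabling observation, which I would state first, is that $\sampledRandomVI$ enjoys exactly the same quality guarantee as $\randomVI$, namely Lemma~\ref{lem:randomized_vi_analysis}. The only way $\sampledRandomVI$ differs from $\randomVI$ is that it feeds $\apxvalcode$ the approximate offsets $\tilde{x}$ rather than the exact offsets $x$; but $\apxvalcode$ and its analysis (Lemma~\ref{lem:aprox_value_operator}) were stated precisely under the hypothesis that the supplied offsets satisfy $|\tilde{x}_a(i) - p_a(i)^\top v_0| \leq \epsilon$, which $\sampledRandomVI$ guarantees by construction. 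Hence Lemma~\ref{lem:randomized_vi_analysis} applies to each call to $\sampledRandomVI$ without modification, and the rest of the argument is unchanged.

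With that in hand, I would run the induction on $k \in [0,K]$ with inductive hypothesis $\|v_k - \vopt\|_\infty \leq \epsilon_k$, where $\epsilon_0 = \frac{M}{1-\gamma}$ and $\epsilon_k = \frac{M}{2^k(1-\gamma)}$. The base case is immediate since $v_0 = \vec{0}$ and $\|\vopt\|_\infty \leq \frac{M}{1-\gamma} = \epsilon_0$. For the inductive step, assume $\|v_{k-1} - \vopt\|_\infty \leq \epsilon_{k-1} = 2\epsilon_k$ and apply Lemma~\ref{lem:randomized_vi_analysis} to the call $\sampledRandomVI(v_{k-1}, \innerNumIter, (1-\gamma)\epsilon_k/(4\gamma), \delta/K)$. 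The iteration-count threshold demanded by that lemma is $\lceil \frac{1}{1-\gamma}\log(2\|v_{k-1}-\vopt\|_\infty / ((1-\gamma)\epsilon_k))\rceil$, and by the inductive hypothesis $2\|v_{k-1}-\vopt\|_\infty / ((1-\gamma)\epsilon_k) \leq 4/(1-\gamma)$, so the chosen $\innerNumIter = \lceil \frac{1}{1-\gamma}\log(\frac{4}{1-\gamma})\rceil$ meets it. The lemma then yields $\|v_k - \vopt\|_\infty \leq \frac{4[(1-\gamma)\epsilon_k/(4\gamma)]\gamma}{1-\gamma} = \epsilon_k$, completing the induction. Taking a union bound over the $K$ calls, each failing with probability at most $\delta/K$, gives the overall $1-\delta$ success probability, and since $\epsilon_K \leq \epsilon$ the final iterate $v_K$ is $\epsilon$-optimal.

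The step I expect to require the most care is the probability accounting. Unlike $\randomVI$, which computes the offsets $x$ deterministically and exactly, $\sampledRandomVI$ obtains $\tilde{x}$ by sampling, an additional randomized event that can fail. To honestly invoke Lemma~\ref{lem:randomized_vi_analysis} for each call, I must ensure that the failure probability of the offset-sampling step is folded into the per-call budget $\delta/K$ together with the $\innerNumIter$ invocations of $\apxvalcode$, so that the compound event ``$\tilde{x}$ is $\epsilon$-accurate \emph{and} every $\apxvalcode$ call succeeds'' holds with probability at least $1 - \delta/K$. Concretely this means choosing the number of samples for $\tilde{x}$ (via Hoeffding's inequality, Theorem~\ref{thm:hoeffding}, and a union bound over the $|S||A|$ entries) so that its failure contributes at most a constant fraction of $\delta/K$, and correspondingly tightening the per-iteration failure probabilities; this only affects constants hidden inside the sampling counts and leaves the stated quality guarantee intact. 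Everything else is the verbatim argument of Lemma~\ref{lem:polylog_randomized_vi_quality}.
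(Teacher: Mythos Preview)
Your proposal is correct and takes essentially the same approach as the paper, which simply observes that $\sampledRandomVI$ satisfies the same quality guarantee as $\randomVI$ (because $\apxvalcode$ was already analyzed under the weaker hypothesis $|\tilde{x}_a(i) - p_a(i)^\top v_0| \leq \epsilon$) and then defers entirely to the proof of Lemma~\ref{lem:polylog_randomized_vi_quality}. Your attention to the failure probability of the offset-sampling step is in fact more careful than the paper's own proof, which glosses over this bookkeeping; your proposed fix (allotting a constant fraction of the $\delta/K$ budget to the offset estimation) is the right way to make this rigorous and changes nothing in the stated conclusion.
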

\begin{proof}
We can analyze $\sampledRandomVI$ identically to how we analyzed $\randomVI$.  Therefore we can analyze $\sublinearRandomVI$ similarly as we analyzed $\highprecisionVI$ in Lemma~\ref{lem:polylog_randomized_vi_quality} and the result follows.
\end{proof}

We now turn our attention to the runtime of $\sublinearRandomVI$. 

\begin{lemma}[Runtime of Sampled Randomized VI]
\label{lem:sampled_randomized_vi_runtime}
An invocation of $\sampledRandomVI$ (See Algorithm~\ref{alg:sampled_randvi}) can be implemented to run in time
\[
O\left(|S| |A| \left[  \frac{ M^2}{(1 - \gamma)^2 \epsilon^2} + \innerNumIter  \left(
\frac{\|v_{0} - v^*\|_\infty^2}{\epsilon^2} + \frac{1}{(1 - \gamma)^2}
\right) \right] \log\left(\frac{|S| |A| \innerNumIter}{\delta}\right)
\right) ~.
\]
\end{lemma}
\begin{proof}
We first claim that an invocation of $\sampledRandomVI$ (See Algorithm~\ref{alg:sampled_randvi}) can be implemented to run in time
\[
O\left(\left[ \frac{ |S|  |A| \|v_0\|_\infty^2}{\epsilon^2} + \innerNumIter |S|  |A| \left(
\frac{\|v_{0} - v^*\|_\infty^2}{\epsilon^2} + \frac{1}{(1 - \gamma)^2}
\right) \right] \log\left(\frac{|S| |A| \innerNumIter}{\delta}\right)
\right) ~.
\]
This claims largely follows the runtime analysis of $\randomVI$, see Lemma~\ref{lem:randomized_vi_runtime}. The only difference in the runtime comes from the fact that instead of computing $x$ exactly (which previously took $O(|S|^2 |A|)$ time), here we only need to compute an approximation, $\tilde{x}$, to $\epsilon$-accuracy of $x$ in $\ell_\infty$, ie $\|x - \tilde{x}\|_\infty \leq \epsilon$ without increasing the failure probability. As we showed in Section~\ref{section:approximate_value_operator} this can be done by sampling (See Lemma~\ref{lem:trans_sample}) in $O(|S| |A| \|v_0\|_\infty^2 \log(|S||A|/\delta) / \epsilon^2)$ time. The result then follows as
\begin{align*}
\|v_0\|_\infty^2
&\leq 
\left(\|v_0 - v^*\|_\infty + \|v^*\|_\infty\right)^2
\leq 2 \|v_0 - v^*\|_\infty^2 + 2 \|v^*\|_\infty^2
\\
&\leq 2 \|v_0 - v^*\|_\infty^2 + 2 \frac{M^2}{(1 - \gamma)^2} ~.
\end{align*}
where we used that $\|v_*\|_\infty \leq \frac{M}{1 - \gamma}$ and $(x + y)^2 \leq 2 x^2 + 2 y^2$ for all $x,y \in \R$. 
\end{proof}

\begin{lemma}[Runtime of Sublinear Randomized VI]
\label{lem:sub_rand_vi_time}
$\sublinearRandomVI$ (See Algorithm~\ref{alg:poly_randomized_vi}) can be implemented to run in time 
\[
 \otilde \left( |S| |A| \left(
 \frac{M^2}{(1-\gamma)^4 \epsilon^2} + \frac{1}{(1 - \gamma)^3}
 \right)
 \log \left(\frac{1}{\delta}\right) \right)
 \]
\end{lemma}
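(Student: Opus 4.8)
The plan is to bound the cost of each of the $K$ calls to $\sampledRandomVI$ made by $\sublinearRandomVI$ using the per-call runtime bound of Lemma~\ref{lem:sampled_randomized_vi_runtime}, and then to sum these costs over $k \in [K]$. Throughout I would condition on the high-probability event of Lemma~\ref{lem:poly_randomized_vi_quality}, under which $\|v_{k-1} - \vopt\|_\infty \leq \epsilon_{k-1} = 2\epsilon_k$ for every $k$; this input-distance bound is exactly what makes the variance-reduction savings kick in, just as in the analysis of $\highprecisionVI$. As in Lemma~\ref{lem:polylog_randomized_vi_runtime}, the stated time bound should be understood to hold on this $1-\delta$ probability event.

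First I would substitute the arguments of the $k$-th call, namely initial values $v_{k-1}$ and target accuracy $\epsilon_{\mathrm{arg}} \defeq (1-\gamma)\epsilon_k/(4\gamma)$, into the runtime expression of Lemma~\ref{lem:sampled_randomized_vi_runtime}. Since $\gamma < 1$ we have $\epsilon_{\mathrm{arg}} \geq (1-\gamma)\epsilon_k/4$, so $1/\epsilon_{\mathrm{arg}}^2 = O(1/((1-\gamma)^2\epsilon_k^2))$. Using this together with $\|v_{k-1}-\vopt\|_\infty^2 \leq 4\epsilon_k^2$ I would simplify the three terms appearing in Lemma~\ref{lem:sampled_randomized_vi_runtime}: the offset-sampling term $M^2/((1-\gamma)^2\epsilon_{\mathrm{arg}}^2)$ becomes $O(M^2/((1-\gamma)^4\epsilon_k^2))$, while both the $\|v_{k-1}-\vopt\|_\infty^2/\epsilon_{\mathrm{arg}}^2$ term and the $1/(1-\gamma)^2$ term collapse to $O(1/(1-\gamma)^2)$ and are multiplied by $\innerNumIter = \otilde(1/(1-\gamma))$. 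Hence the cost of the $k$-th call is
\[
\otilde\left( |S||A| \left[ \frac{M^2}{(1-\gamma)^4 \epsilon_k^2} + \frac{1}{(1-\gamma)^3} \right] \log\left(\tfrac{1}{\delta}\right)\right).
\]

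Next I would aggregate over the $K = \lceil \log_2(M/(\epsilon(1-\gamma)))\rceil$ iterations. The $1/(1-\gamma)^3$ term contributes $K/(1-\gamma)^3$, which is absorbed into $\otilde$ since $K$ is polylogarithmic in the input parameters. The main work is the geometric sum $\sum_{k=1}^K M^2/((1-\gamma)^4\epsilon_k^2)$: substituting $\epsilon_k = M/(2^k(1-\gamma))$ turns the $k$-th summand into $4^k/(1-\gamma)^2$, so the sum equals $O(4^K/(1-\gamma)^2)$, dominated by its last term. Using $4^K = O((M/(\epsilon(1-\gamma)))^2)$ this is $O(M^2/((1-\gamma)^4\epsilon^2))$, matching the first term in the claimed bound.

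The calculation is entirely routine; the one point that requires care — and the only real insight — is recognizing that the geometric series $\sum_k 4^k/(1-\gamma)^2$ is dominated by its final term. This is the mirror image of the situation in $\highprecisionVI$: there the per-iteration cost was held constant because $\|v_{k-1}-\vopt\|^2_\infty$ and $\epsilon_k^2$ decreased at the same geometric rate, whereas here the offset-resampling cost grows like $1/\epsilon_k^2$ and is therefore concentrated on the final, most accurate call, whose cost already matches the target $M^2/((1-\gamma)^4\epsilon^2)$.
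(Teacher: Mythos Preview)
Your proposal is correct and follows essentially the same approach as the paper: invoke Lemma~\ref{lem:sampled_randomized_vi_runtime} for the per-call cost, use Lemma~\ref{lem:poly_randomized_vi_quality} to bound $\|v_{k-1}-\vopt\|_\infty \leq 2\epsilon_k$, substitute the call parameters, and aggregate over the $K$ iterations. The only (cosmetic) difference is in the aggregation of the dominant $M^2/((1-\gamma)^4\epsilon_k^2)$ term---the paper simply observes that $\epsilon_k = \Omega(\epsilon)$ for every $k\in[K]$ and absorbs the resulting factor of $K$ into $\otilde$, whereas you sum the geometric series explicitly and note it is dominated by its last term; both yield the same $\otilde$ bound.
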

\begin{proof}
The runtime analysis for $\sublinearRandomVI$ follows the skeleton of the runtime analysis of $\highprecisionVI$ in Lemma~\ref{lem:polylog_randomized_vi_runtime}.  Each iteration $k$ calls  $\sampledRandomVI(v_{k - 1},\innerNumIter,(1 - \gamma) \epsilon_k / 4 \gamma,\delta/K)$. Note that $\innerNumIter = \lceil\frac{1}{1 - \gamma} \log(\frac{4}{1 - \gamma})\rceil$, $\epsilon = (1 - \gamma) \epsilon_k / 4 \gamma$ and by Lemma~\ref{lem:sampled_randomized_vi_runtime} with probability $1 - \delta$ we have that $\|v_{k - 1} - \vopt\|_\infty \leq \epsilon_{k - 1} \leq 2\epsilon_{k}$. Substituting these terms into the runtime analysis prescribed by Lemma~\ref{lem:sampled_randomized_vi_runtime} yields that each iteration $k$ costs
\[
 \otilde \left( |S| |A| \left(
 \frac{M^2}{(1-\gamma)^4 \epsilon^2_k} + \frac{1}{(1 - \gamma)^3}
 \right)
 \log \left(\frac{1}{\delta}\right) \right)
\]
 Aggregating over $K = \lceil \log_2(\frac{M}{\epsilon(1 - \gamma)}) \rceil$ iterations and using that $k \leq \lceil \log_2(\frac{M}{\epsilon(1 - \gamma)}) \rceil$ and therefore $\epsilon_k = \Omega(\epsilon)$ for all $k \in [K]$ yields the desired running time. 
\end{proof}

\begin{remark}
\label{remark:monotonic_improvement}
In the next section, Section~\ref{sec:obtaining_policy}, we discuss the sub-optimality guarantees for the policies given by the algorithms in this section. The approach in Section~\ref{sec:obtaining_policy} incurs an additional penalty when rounding to a policy, but in Section~\ref{section:monotonicity}, we show how to modify our algorithms so that values increase monotonically and guarantees on values immediately apply to guarantees on policies. For the monotonic algorithm, the additive $1/(1 - \gamma)^2$ term in Lemma~\ref{lem:randomized_vi_runtime} and Lemma~\ref{lem:sampled_randomized_vi_runtime} can be removed, but it is unclear how to remove it in Lemma~\ref{lem:sub_rand_vi_time}. Consequently, this monotonic algorithm computes $\epsilon$-optimal values and a corresponding$\epsilon$-optimal policy in a runtime which matches the one stated in Lemma~\ref{lem:sub_rand_vi_time}. In the regime where $\epsilon \in (0, \frac{M}{\sqrt{(1 - \gamma)}}]$, it follows that $\frac{M^2}{(1 - \gamma)^4 \epsilon^2} \geq \frac{1}{(1 - \gamma)^3}$; consequently, we assume $\epsilon \in (0, \frac{M}{\sqrt{1-\gamma}}]$ when stating our sublinear runtime of $\otilde \left( |S| |A| \left(
 \frac{M^2}{(1-\gamma)^4 \epsilon^2}
 \right) \log \left(\frac{1}{\delta}\right) \right)$ to find an $\epsilon$-optimal policy. Note that $\|v^*\|_\infty$ can be as large as $\frac{M}{1-\gamma}$ and consequently, there is a non-trivial regime of $\epsilon \in (\frac{M}{\sqrt{1-\gamma}}, \frac{M}{1-\gamma})$ where  instead the additive $\frac{1}{(1- \gamma)^3}$ would dominate runtimes in the algorithms of this paper.

\end{remark}

\subsection{Obtaining a Policy}
\label{sec:obtaining_policy}
In this section, we discuss how to leverage the analysis in the previous section to compute approximately optimal policies for DMDPs. We provide fairly general techniques to turn the $O(\epsilon/(1 - \gamma))$-approximate value vectors computed by $\randomVI$ and $\sampledRandomVI$) into $O(\epsilon/(1 - \gamma)^2)$-approximate policy vectors. This yields our fastest nearly linear convergent algorithms for computing $\epsilon$-approximate policies. In the next section we show how to improve upon these techniques and obtain even faster sublinear time algorithms. 

We start with the following lemma that essentiall follows from Proposition 2.1.4 in \cite{bertsekas2013abstract}.
 
\begin{lemma}
\label{lem:policy_quality_naive}	
$\randomVI$ and $\sampledRandomVI$ with $\innerNumIter \geq 1 +  \frac{1}{1 - \gamma} \cdot \log(\|v_0 - v^*\|_\infty / 2\epsilon\gamma)$ produce a policy $\pi_\innerNumIter$ that is $16 \epsilon / (1 - \gamma)^2$-optimal with probability  $1 - \delta$.
\end{lemma}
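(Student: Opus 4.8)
The plan is to bound the policy suboptimality $\|v^* - v_{\pi_\innerNumIter}\|_\infty$ by combining the quality guarantee on the output values with the facts in Section~\ref{sec:prelim}. The key observation is that $\apxvalcode$ returns a policy $\pi_\innerNumIter$ whose associated operator $T_{\pi_\innerNumIter}$ is close to having $v_\innerNumIter$ as a fixed point, so Lemma~\ref{lem:pol_bound_simple} controls how far $v_{\pi_\innerNumIter}$ is from $T_{\pi_\innerNumIter}(v_\innerNumIter)$, and hence from $v^*$.

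First I would record what the analysis of the outer iteration gives us. By Lemma~\ref{lem:randomized_vi_analysis}, with the stated lower bound on $\innerNumIter$, we have $\|v_\innerNumIter - v^*\|_\infty \leq \frac{4\epsilon\gamma}{1 - \gamma}$ with probability $1 - \delta$. Simultaneously, the final invocation of $\apxvalcode$ guarantees (Lemma~\ref{lem:aprox_value_operator}) that $\|v_\innerNumIter - T_{\pi_\innerNumIter}(v_{\innerNumIter - 1})\|_\infty \leq 2\gamma\epsilon$, where $\pi_\innerNumIter$ is the returned policy and $v_{\innerNumIter - 1}$ is the input to that final call. The goal is to convert these into a bound on $\|v_\innerNumIter - T_{\pi_\innerNumIter}(v_\innerNumIter)\|_\infty$, the quantity Lemma~\ref{lem:pol_bound_simple} needs.

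Next I would estimate $\|T_{\pi_\innerNumIter}(v_\innerNumIter) - v_\innerNumIter\|_\infty$ via the triangle inequality, writing
\[
\|T_{\pi_\innerNumIter}(v_\innerNumIter) - v_\innerNumIter\|_\infty \leq \|T_{\pi_\innerNumIter}(v_\innerNumIter) - T_{\pi_\innerNumIter}(v_{\innerNumIter-1})\|_\infty + \|T_{\pi_\innerNumIter}(v_{\innerNumIter-1}) - v_\innerNumIter\|_\infty.
\]
The first term is at most $\gamma\|v_\innerNumIter - v_{\innerNumIter-1}\|_\infty$ since $T_{\pi_\innerNumIter}$ is a $\gamma$-contraction (the policy operator is a contraction just as in Lemma~\ref{lem:contract}), and the second term is at most $2\gamma\epsilon$ from the $\apxvalcode$ guarantee. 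To bound $\|v_\innerNumIter - v_{\innerNumIter-1}\|_\infty$ I would route through $v^*$: both $v_\innerNumIter$ and $v_{\innerNumIter-1}$ are within $O(\epsilon\gamma/(1-\gamma))$ of $v^*$ by the per-iteration bound from Lemma~\ref{lem:randomized_vi_analysis}, so their difference is $O(\epsilon\gamma/(1-\gamma))$. Feeding $\|T_{\pi_\innerNumIter}(v_\innerNumIter) - v_\innerNumIter\|_\infty = O(\epsilon\gamma/(1-\gamma))$ into Lemma~\ref{lem:pol_bound_simple} gives $\|T_{\pi_\innerNumIter}(v_\innerNumIter) - v_{\pi_\innerNumIter}\|_\infty = O(\epsilon\gamma^2/(1-\gamma)^2)$, and a final triangle inequality assembling $\|v^* - v_{\pi_\innerNumIter}\|_\infty \leq \|v^* - v_\innerNumIter\|_\infty + \|v_\innerNumIter - T_{\pi_\innerNumIter}(v_\innerNumIter)\|_\infty + \|T_{\pi_\innerNumIter}(v_\innerNumIter) - v_{\pi_\innerNumIter}\|_\infty$ yields the $O(\epsilon/(1-\gamma)^2)$ bound; I would then track the constants carefully to land at exactly $16\epsilon/(1-\gamma)^2$.

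The main obstacle is bookkeeping the constants so they telescope into the clean $16$: each of the three contributions is individually only $O(\epsilon/(1-\gamma)^2)$ with its own small constant, and I must make sure that the bound on $\|v_\innerNumIter - v_{\innerNumIter-1}\|_\infty$ (which is multiplied by $\gamma/(1-\gamma)$ when it passes through Lemma~\ref{lem:pol_bound_simple}) is generous enough to absorb both the $2\gamma\epsilon$ term and the contraction term without overshooting. A cleaner route, which I would try first, is to avoid $v_{\innerNumIter-1}$ entirely by noting that $\|v_\innerNumIter - T(v_\innerNumIter)\|_\infty$ and $\|v_\innerNumIter - T_{\pi_\innerNumIter}(v_\innerNumIter)\|_\infty$ can both be bounded directly from $\|v_\innerNumIter - v^*\|_\infty \leq 4\epsilon\gamma/(1-\gamma)$ using that $T(v^*) = v^*$ and the contraction property, since $\pi_\innerNumIter$ nearly achieves the max in $T(v_\innerNumIter)$; this keeps the constant-chasing to a single clean application of Lemma~\ref{lem:pol_bound_simple}.
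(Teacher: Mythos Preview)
Your main argument is correct and mirrors the paper's proof: the paper likewise routes through $v_{\innerNumIter-1}$, combining the $\apxvalcode$ guarantee $\|v_\innerNumIter - T_{\pi_\innerNumIter}(v_{\innerNumIter-1})\|_\infty \leq 2\gamma\epsilon$ with Lemma~\ref{lem:pol_bound_simple}; the only cosmetic difference is that the paper applies Lemma~\ref{lem:pol_bound_simple} with $u = v_{\innerNumIter-1}$ (bounding $\|v_{\pi_\innerNumIter} - T_{\pi_\innerNumIter}(v_{\innerNumIter-1})\|_\infty$ directly) rather than with $u = v_\innerNumIter$ as you do, and its constants telescope to exactly $16$. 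One caveat: your proposed ``cleaner route'' that avoids $v_{\innerNumIter-1}$ does not work as stated, because $\pi_\innerNumIter$ is the approximate greedy policy computed from $v_{\innerNumIter-1}$, not from $v_\innerNumIter$, so you cannot assert that $\pi_\innerNumIter$ nearly achieves the maximum in $T(v_\innerNumIter)$ without first passing through $v_{\innerNumIter-1}$; stick with your first argument.
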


\begin{proof}
By Lemma~\ref{lem:randomized_vi_analysis} we know that both 
\[
\|v_\innerNumIter - \vopt\|_\infty \leq \frac{4 \epsilon \gamma}{1 - \gamma}
\text{ and  }
\|v_{\innerNumIter-1} - \vopt\|_\infty \leq \frac{4 \epsilon \gamma}{1 - \gamma}
~.
\]
Furthermore, by Lemma~\ref{lem:aprox_value_operator} we know that 
\[
\|v_\innerNumIter - T_{\pi_{\innerNumIter}} (v_{\innerNumIter - 1})\|_\infty \leq 2 \gamma  \epsilon ~.
\]
Combining these facts and invoking Lemma~\ref{lem:pol_bound_simple} gives
\begin{align*}
\| v_{\pi_\innerNumIter} - \vopt\|_\infty
&\leq 
\| v_{\pi_\innerNumIter} - T_{\pi_\innerNumIter} (v_{\innerNumIter - 1})\|_\infty
+ \|T_{\pi_\innerNumIter} (v_{\innerNumIter - 1}) - v_\innerNumIter\|_\infty 
+ \|v_\innerNumIter - \vopt\|_\infty \\
&\leq 
\frac{\gamma}{1 - \gamma}
\| v_{\innerNumIter - 1} - T_{\pi_\innerNumIter} (v_{\innerNumIter - 1})\|_\infty
+ 2\gamma \epsilon 
+ \frac{4 \epsilon\gamma}{1 - \gamma} ~.
\end{align*}
Furthermore, we have that 
\begin{align*}
\| v_{\innerNumIter -1} - T_{\pi_\innerNumIter} (v_{\innerNumIter-1})\|_\infty
&\leq 
\| v_{\innerNumIter - 1} - v^*\|_\infty
+ \|v^* - v_\innerNumIter\|_\infty 
+ \|v_\innerNumIter - T_{\pi_\innerNumIter} (v_{\innerNumIter - 1})\|_\infty 
\\
&\leq 
\frac{4 \epsilon\gamma}{1 - \gamma}
+ \frac{4 \epsilon\gamma}{1 - \gamma}
+ 2  \gamma \epsilon
\end{align*}
Combining and using the fact that $\gamma \in (0, 1)$ and $1 / (1 - \gamma) \geq 1$ yields the result.
\end{proof}

This lemma allows us to immediately claim running times for computing $\epsilon$-approximate policies simply be computing $O(\epsilon(1 - \gamma))$-approximate values as before.  
The proofs of the following two corollaries are immediate from our previous analysis.

\begin{corollary}[High Precision Approximate Policy Computation]
	\label{lem:polylog_policy}
	$\highprecisionVI$ (See Algorithm~\ref{alg:polylog_randomized_vi}) can be implemented so that with probability $1 - \delta$ it computes an $\epsilon$-approximate policy
	in time 
	\[
\runtimenearline  ~.
	\]
\end{corollary}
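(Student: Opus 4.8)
The plan is to observe that $\highprecisionVI$ already returns a policy $\pi_K$ — namely the one produced by its final internal call to $\randomVI$ — and to bound the suboptimality of this policy directly via Lemma~\ref{lem:policy_quality_naive}, paying only a logarithmic overhead in the running time. The one twist is that converting an approximate value vector into an approximate policy through Lemma~\ref{lem:policy_quality_naive} loses roughly a factor of $1/(1-\gamma)$, so to obtain an $\epsilon$-optimal \emph{policy} (rather than merely an $\epsilon$-optimal \emph{value vector}, which is all Lemma~\ref{lem:polylog_randomized_vi_quality} already gives) I will run $\highprecisionVI$ to a correspondingly finer target value precision $\epsilon' = \Theta(\epsilon(1-\gamma))$.

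First I would recall from Lemma~\ref{lem:polylog_randomized_vi_quality} that, with probability $1-\delta$, every iterate satisfies $\|v_k - \vopt\|_\infty \leq \epsilon_k$; in particular the input to the last call, $v_{K-1}$, obeys $\|v_{K-1} - \vopt\|_\infty \leq \epsilon_{K-1} = 2\epsilon_K$. The final call is $\randomVI(v_{K-1}, \innerNumIter, (1-\gamma)\epsilon_K/(4\gamma), \delta/K)$, whose internal accuracy argument is $(1-\gamma)\epsilon_K/(4\gamma)$. Substituting $\|v_{K-1}-\vopt\|_\infty \leq 2\epsilon_K$ shows the hypothesis $\innerNumIter \geq 1 + \frac{1}{1-\gamma}\log(\|v_{K-1}-\vopt\|_\infty/(2[(1-\gamma)\epsilon_K/(4\gamma)]\gamma))$ of Lemma~\ref{lem:policy_quality_naive} reduces to $\innerNumIter \geq 1 + \frac{1}{1-\gamma}\log(\frac{4}{1-\gamma})$, which matches the $\innerNumIter$ set by $\highprecisionVI$ up to an additive constant (so, if needed, we pad $\innerNumIter$ by one, which does not change the asymptotics). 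Lemma~\ref{lem:policy_quality_naive} then yields that $\pi_K$ is $16[(1-\gamma)\epsilon_K/(4\gamma)]/(1-\gamma)^2 = 4\epsilon_K/(\gamma(1-\gamma))$-optimal.

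To finish, I would choose the target precision so that this quantity is at most $\epsilon$: since $\epsilon_K \leq \epsilon'$, running $\highprecisionVI$ with $\epsilon' = \Theta(\epsilon(1-\gamma))$ makes $\pi_K$ an $\epsilon$-optimal policy. The running time then follows from Lemma~\ref{lem:polylog_randomized_vi_runtime} applied with precision $\epsilon'$; the only change is replacing $\log(M/\epsilon)$ by $\log(M/\epsilon') = \log(M/\epsilon) + O(\log(1/(1-\gamma)))$, and since the additive $\log(1/(1-\gamma))$ is polylogarithmic in an input parameter it is absorbed by $\otilde$, leaving the bound $\runtimenearline$. The main thing to get right is this last bookkeeping step — verifying that sharpening the value accuracy by a $\Theta(1-\gamma)$ factor (the price of the value-to-policy conversion) inflates the runtime only through a hidden polylogarithmic factor — together with the minor check that $\highprecisionVI$'s inner iteration count meets the slightly stronger hypothesis of Lemma~\ref{lem:policy_quality_naive}.
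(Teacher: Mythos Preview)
Your proposal is correct and is essentially the same argument the paper has in mind: it states that the corollaries are ``immediate from our previous analysis'' by ``computing $O(\epsilon(1-\gamma))$-approximate values as before,'' i.e., invoking Lemma~\ref{lem:policy_quality_naive} on the final $\randomVI$ call and then running $\highprecisionVI$ with target precision $\epsilon' = \Theta(\epsilon(1-\gamma))$, with the extra $\log(1/(1-\gamma))$ absorbed into $\otilde$. Your write-up is in fact more careful than the paper about the minor bookkeeping (the $+1$ in the iteration-count hypothesis of Lemma~\ref{lem:policy_quality_naive}).
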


\begin{corollary}[Sublinear Time Approximate Policy Computation]
	$\sublinearRandomVI$ (See Algorithm~\ref{alg:poly_randomized_vi}) can be implemented so that with probability $1 - \delta$ it computes an $\epsilon$-approximate policy in time 
	\[
	\otilde \left( |S| |A| \left(
	\frac{M^2}{(1-\gamma)^6 \epsilon^2} + \frac{1}{(1 - \gamma)^3}
	\right)
	\log \left(\frac{1}{\delta}\right) \right)
	\]	
\end{corollary}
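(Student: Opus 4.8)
The plan is to treat this corollary as the policy-computation counterpart of the value-computation runtime established in Lemma~\ref{lem:sub_rand_vi_time}, exactly mirroring how Corollary~\ref{lem:polylog_policy} was obtained from Lemma~\ref{lem:polylog_randomized_vi_runtime}. The key observation is that $\sublinearRandomVI$ already returns a policy $\pi_K$ alongside the value vector $v_K$, and Lemma~\ref{lem:policy_quality_naive} converts the accuracy guarantee of a single $\sampledRandomVI$ call into a bound on the quality of the policy it produces. So I would first quantify the quality of $\pi_K$ in terms of the target precision, observe that there is a $\Theta(1/(1-\gamma))$ loss in passing from value accuracy to policy accuracy, and then run the whole scheme at a correspondingly smaller precision so that the output policy, not merely the output values, is $\epsilon$-optimal.

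For the policy-quality step I would apply Lemma~\ref{lem:policy_quality_naive} to the final call $\sampledRandomVI(v_{K-1}, \innerNumIter, (1-\gamma)\epsilon_K/(4\gamma), \delta/K)$ inside $\sublinearRandomVI$. Since this call uses accuracy parameter $(1-\gamma)\epsilon_K/(4\gamma)$, the lemma yields that $\pi_K$ is $16 \cdot \frac{(1-\gamma)\epsilon_K}{4\gamma} \cdot \frac{1}{(1-\gamma)^2} = \frac{4\epsilon_K}{\gamma(1-\gamma)}$-optimal, which, using $\epsilon_K \le \epsilon$ and the standard assumption $\gamma \ge 1/2$ (the regime $1/(1-\gamma) = O(1)$ being trivial), is $O(\epsilon/(1-\gamma))$-optimal. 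The step requiring care is verifying that $\innerNumIter = \lceil \frac{1}{1-\gamma}\log(\frac{4}{1-\gamma})\rceil$ is large enough for Lemma~\ref{lem:policy_quality_naive} to apply to this last call: substituting $\|v_{K-1} - \vopt\|_\infty \le \epsilon_{K-1} = 2\epsilon_K$ and the accuracy $(1-\gamma)\epsilon_K/(4\gamma)$ into the hypothesis $\innerNumIter \ge 1 + \frac{1}{1-\gamma}\log(\|v_0 - \vopt\|_\infty/(2\epsilon\gamma))$ collapses the logarithm precisely to $\log(4/(1-\gamma))$, which is exactly why $\innerNumIter$ was defined this way. I would verify this cancellation explicitly, absorbing the small additive constant into the ceiling and exploiting the slack already present in the iteration-count hypothesis of Lemma~\ref{lem:randomized_vi_analysis}.

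Given the $O(\epsilon/(1-\gamma))$-optimality of the returned policy, the plan is then to invoke $\sublinearRandomVI$ with target precision $\epsilon' = \Theta(\epsilon(1-\gamma))$, so that the produced policy becomes $\epsilon$-optimal. The runtime follows by substituting $\epsilon \mapsto \epsilon'$ into Lemma~\ref{lem:sub_rand_vi_time}: the term $\frac{M^2}{(1-\gamma)^4\epsilon^2}$ becomes $\frac{M^2}{(1-\gamma)^4 \epsilon'^2} = \Theta\!\left(\frac{M^2}{(1-\gamma)^6\epsilon^2}\right)$, while the $\frac{1}{(1-\gamma)^3}$ term has no $\epsilon$-dependence and is unchanged, yielding the claimed bound. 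The only genuine obstacle is the bookkeeping in the policy-quality step, namely correctly propagating the value-to-policy accuracy loss of $\Theta(1/(1-\gamma))$ through the final iterate and confirming the inner iteration count suffices, since the runtime substitution itself is purely mechanical.
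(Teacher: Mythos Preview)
Your proposal is correct and follows exactly the approach the paper takes: the paper declares both corollaries ``immediate from our previous analysis'' after noting that Lemma~\ref{lem:policy_quality_naive} lets one obtain $\epsilon$-approximate policies ``simply by computing $O(\epsilon(1-\gamma))$-approximate values,'' which is precisely your plan of running $\sublinearRandomVI$ at precision $\Theta(\epsilon(1-\gamma))$ and substituting into Lemma~\ref{lem:sub_rand_vi_time}. Your extra care in checking the iteration-count hypothesis of Lemma~\ref{lem:policy_quality_naive} for the final $\sampledRandomVI$ call is more detail than the paper supplies, and the off-by-one you flag is a genuine but harmless technicality absorbed by the $\otilde(\cdot)$ notation.
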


In the next section we show how to greatly improve this sublinear running time for computing an $\epsilon$-approximate policy by presenting a modified version of $\sublinearRandomVI$ that uses a monotonic variant of $\apxvalcode$, called $\apxmonvalcode$, that avoids this $1 / (1-\gamma)^2$ loss (and the additive $1/(1 - \gamma)^3$ term).

\subsection{Improved Monotonic Algorithm}
\label{section:monotonicity}

Here we discuss how to improve upon the running times for obtaining policies achieved in Section~\ref{section:poly_variance_reduced_randomized_vi}. We show how to obtain $\epsilon$-approximate policies in the same time our algorithms obtained $O(\epsilon)$-approximate values, i.e. we show how to avoid the loss of $1/(1 - \gamma)$ factors in running times that occurred in Section~\ref{section:poly_variance_reduced_randomized_vi}.

Our key insight to obtain this improvements is given in Lemma~\ref{lem:monotonicity}. This lemma shows that if we obtain values $v \in \valuespace$ and a policy $\pi \in \policyspace$ such that $\valop_\pi(v) \geq v$ entry-wise, then $v \leq v_\pi \le \vopt$ entry-wise. In other words, if we maintain $\valop_\pi(v) \geq v$ and obtain values $v$ that are $\epsilon$-optimal then the corresponding policy $\pi$ is $\epsilon$-optimal. Consequently, to avoid loss in optimality when converting approximate values to approximate policies we simply need to ensure that $\valop_\pi(v) \geq v$ entrywise. 

In this section we show how to maintain such values and policies by modifying our algorithms to be \emph{monotonic}. That is we modify our routines to start with an under-estimate of the optimal values and only increase them monotonically during the course of the algorithm. This modification allows us to not only maintain a policy with the desired properties, but also to remove the extra factor of $1/(1 - \gamma)^2$ in the running time of $\randomVI$ (see Lemma~\ref{lem:randomized_vi_runtime}).

In the remainder of this section we prove the basic properties of the value operator we use, Lemma~\ref{lem:monotonicity}, show how to achieve our monotonic version of $\apxvalcode$, which we call $\apxmonvalcode$ (Algorithm~\ref{alg:mon_val_op}) and analyze it in Lemma~\ref{lem:apxmon_mon}, Lemma~\ref{lem:mon_apx_val}, and Lemma~\ref{lem:mon_val_time}. We conclude by giving our fastest known running times for computing $\epsilon$-approximate policies in Theorem~\ref{thm:sublin_policy}. This theorem analyzes the algorithm, $\sublinearRandomMonVI$ (see Algorithm~\ref{alg:sampled_randvi}), which is a monotonic version of $\sublinearRandomVI$. This algorithm invokes 
$\sampledRandomMonVI$ (see Algorithm~\ref{alg:mon_sampled_randvi}) which is a monotonic version of the algorithm, $\sampledRandomVI$.

First we present the lemma which proves the key properties of the value this operator that we leverage to obtain higher quality policies. This lemma essentially follows from Proposition 2.2.1 in \cite{bertsekas2013abstract}.

\begin{lemma}
\label{lem:monotonicity}
If $v \leq \tilde{v} \leq T_{\tilde{\pi}}(v)$ entry-wise for $v,\tilde{v} \in \valuespace$ and $\tilde{\pi} \in \policyspace$, then $v \leq \tilde{v} \leq v_{\tilde{\pi}} \leq \vopt$ entry-wise.
\end{lemma}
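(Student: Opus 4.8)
The plan is to establish the chain $v \leq \tilde{v} \leq v_{\tilde\pi} \leq \vopt$ one link at a time. The leftmost inequality $v \leq \tilde{v}$ is immediate from the hypothesis, so the real work is in the two remaining inequalities, and both reduce to the standard observation that a value operator applied repeatedly to a sub-solution converges monotonically up to its fixed point.

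For the middle inequality $\tilde{v} \leq v_{\tilde\pi}$, I would first upgrade the hypothesis $\tilde{v} \leq T_{\tilde\pi}(v)$ into a self-referential bound. The operator $T_{\tilde\pi}$ is monotone --- this is the analogue of Lemma~\ref{lem:val_is_mon} for the single-action MDP induced by $\tilde\pi$, and it holds directly because $p_{\tilde\pi_i}(i)$ has nonnegative entries and $\gamma > 0$ --- so the given $v \leq \tilde{v}$ yields $T_{\tilde\pi}(v) \leq T_{\tilde\pi}(\tilde{v})$. Chaining with the hypothesis gives $\tilde{v} \leq T_{\tilde\pi}(\tilde{v})$. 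Applying $T_{\tilde\pi}$ repeatedly and invoking monotonicity at each step produces the nondecreasing sequence $\tilde{v} \leq T_{\tilde\pi}(\tilde{v}) \leq T_{\tilde\pi}^2(\tilde{v}) \leq \cdots$, so $\tilde{v} \leq T_{\tilde\pi}^k(\tilde{v})$ for every $k$. Because $T_{\tilde\pi}$ is a $\gamma$-contraction (the analogue of Lemma~\ref{lem:contract}) with unique fixed point $v_{\tilde\pi}$, the iterates $T_{\tilde\pi}^k(\tilde{v})$ converge to $v_{\tilde\pi}$, and passing to the limit in the entrywise inequality gives $\tilde{v} \leq v_{\tilde\pi}$.

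For the rightmost inequality $v_{\tilde\pi} \leq \vopt$, I would use that $T$ dominates every policy operator pointwise: for any $u$ and any $\pi$, $[T(u)]_i = \max_{a \in \actions}[r_a(i) + \gamma \cdot p_a(i)^\top u] \geq r_{\pi_i}(i) + \gamma \cdot p_{\pi_i}(i)^\top u = [T_\pi(u)]_i$. Applying this at $u = v_{\tilde\pi}$ and using that $v_{\tilde\pi}$ is the fixed point of $T_{\tilde\pi}$ gives $v_{\tilde\pi} = T_{\tilde\pi}(v_{\tilde\pi}) \leq T(v_{\tilde\pi})$. Iterating $T$ and using its monotonicity (Lemma~\ref{lem:val_is_mon}) produces the nondecreasing sequence $v_{\tilde\pi} \leq T(v_{\tilde\pi}) \leq T^2(v_{\tilde\pi}) \leq \cdots$, while the contraction property (Lemma~\ref{lem:contract}) guarantees $T^k(v_{\tilde\pi}) \to \vopt$. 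Passing to the limit yields $v_{\tilde\pi} \leq \vopt$, completing the chain.

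The only genuine subtlety --- and the step I would be most careful about --- is justifying the two limiting arguments: that each monotone iterate sequence converges to the relevant fixed point and that weak entrywise inequalities survive the limit. Convergence is supplied by the contraction lemmas (each iterate moves a factor $\gamma$ closer to the fixed point in $\ell_\infty$), and preservation of $\leq$ under limits is a standard property of closed sets in $\valuespace$. Both rely on treating $T_{\tilde\pi}$ as a bona fide value operator for the single-action MDP, so that the analogues of Lemmas~\ref{lem:contract} and~\ref{lem:val_is_mon} apply to it --- precisely the reduction noted immediately after the definition of $T_\pi$.
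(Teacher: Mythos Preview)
Your proof is correct and follows essentially the same route as the paper: upgrade $\tilde v \leq T_{\tilde\pi}(v)$ to $\tilde v \leq T_{\tilde\pi}(\tilde v)$ via monotonicity, then iterate and pass to the limit to reach $v_{\tilde\pi}$. The only difference is that the paper dismisses $v_{\tilde\pi} \leq \vopt$ in one line as a trivial consequence of optimality, whereas you give an explicit proof via the same monotone-iteration argument applied to $T$; your version is more self-contained but not a different approach.
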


\begin{proof}
By Lemma~\ref{lem:val_is_mon} and the fact that $v \leq \tilde{v}$ we know that $T_{\tilde{\pi}}(v) \leq T_{\tilde{\pi}}(\tilde{v})$. Furthermore since $\tilde{v} \leq T_{\tilde{\pi}}(v)$ by assumption, we know that $\tilde{v} \leq T_{\tilde{\pi}}(\tilde{v})$. Applying $T_{\tilde{\pi}}$ preserves the inequality so we have $T_{\tilde{\pi}}(\tilde{v}) \leq T_{\tilde{\pi}}^2(\tilde{v})$ and by induction obtain that entry-wise,
$$\tilde{v} \leq T_{\tilde{\pi}}(\tilde{v}) \leq T_{\tilde{\pi}}^2(\tilde{v}) \ldots \leq T^{\infty}_{\tilde{\pi}}(\tilde{v}) = v_{\tilde{\pi}}$$
That $v_{\tilde{\pi}} \leq \vopt$ entry-wise follows trivially from that $\tilde{\pi}$ can yield values better than the optimum policy.
\end{proof}

Next, we present our monotonic approximate value operator, $\apxmonvalcode$ in Algorithm~\ref{alg:mon_val_op}. $\apxmonvalcode$ modifies the output of $\apxvalcode$ to obtain the invariants of Lemma~\ref{lem:monotonicity} - provided they hold for the input. The modifications are straightforward. First we subtract a small amount from the output of $\apxvalcode$ to obtain underestimates for the value operator with high probability. 
If an underestimate is higher then the current value estimate for a particular state, then we update the policy to the new action for that state and the new value, otherwise we maintain the same action for this state. This ensures that our values are always underestimates of the true values, the values are always increasing, and the invariants to invoke Lemma~\ref{lem:monotonicity} are maintained. Moreover, this does not significantly change the running time or the quality of the output in terms of how well it approximates the value operator. We prove these facts in the following lemmas.

\begin{algorithm}[H]
\caption{Monotonic Random Value Operator: $\apxmonvalcode(u, \pi, v_0, x, \epsilon, \failprob)$}
\label{alg:mon_val_op}
\begin{algorithmic}[1]
\Require{Current values, $u \in \valuespace$, current policy, $\pi \in \policyspace$ with $T_\pi(u) \geq u$, and initial value, $v_0 \in \valuespace$}
\Require{Precomputed offsets: $x \in \R^{\states \times \actions}$ with $|x_a(i) - p_a(i)^\top v_0| \leq \epsilon$ for all $i \in \states$ and $a \in \actions$.}
\Require{Target accuracy $\epsilon \in (0,1)$ and failure probability $\failprob \in (0,1)$}
\State{$(q,w) := \apxvalcode(u,v_0,x,\epsilon,\failprob)$} 
\For{\texttt{each} $i \in \states$}
\If{$q_i - 2 \gamma \epsilon > u_i$}
\State{$\tilde{v}_i = q_i - 2 \gamma \epsilon$}
\State{$\tilde{\pi}_i = w_i$}
\Else
\State{$\tilde{v}_i = u_i$}
\State{$\tilde{\pi}_i = \pi_i$}
\EndIf
\EndFor
\BState \Return $(\tilde{v},\tilde{\pi}) = \apxmonvalcode(u, v_0, x, \epsilon, \failprob)$\\ 
\emph{(Note: $\tilde{v} \in \valuespace$ is the result of an approximate value iteration that chooses between the best under estimated action and the previous best underestimate, and $\tilde{\pi} \in \policyspace$ is the corresponding policy)}.
\end{algorithmic}
\end{algorithm}

We now show that $\apxmonvalcode$ satisfies the monotonicity properties
 of Lemma~\ref{lem:monotonicity}.
\begin{lemma}[$\apxmonvalcode$ Monotonicity Properties]
\label{lem:apxmon_mon}
With probability $1 - \failprob$, an invocation of $\apxmonvalcode$ (Algorithm~\ref{alg:mon_val_op}) returns $\tilde{v} \in \valuespace$ and $\tilde{\pi} \in \policyspace$ such that 
$
u \leq \tilde{v} 
\leq T_{\tilde{\pi}}(u)
\leq T_{\tilde{\pi}}(\tilde{v})
$
entry-wise. 
\end{lemma}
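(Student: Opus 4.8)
The plan is to condition on the successful execution of the single internal call to $\apxvalcode$. By Lemma~\ref{lem:aprox_value_operator} this call returns $q \in \valuespace$ and $w \in \policyspace$ with $\|q - T_w(u)\|_\infty \leq 2\gamma\epsilon$ with probability at least $1 - \failprob$; in particular, on this event $q_i - 2\gamma\epsilon \leq [T_w(u)]_i$ for every state $i \in \states$. Everything the algorithm does afterward is deterministic, so the entire claim will hold on this $1 - \failprob$ probability event. I would then verify the three entry-wise inequalities $u \leq \tilde{v}$, $\tilde{v} \leq T_{\tilde{\pi}}(u)$, and $T_{\tilde{\pi}}(u) \leq T_{\tilde{\pi}}(\tilde{v})$ one at a time, working coordinate by coordinate and tracking which branch of the if/else in Algorithm~\ref{alg:mon_val_op} is taken at each state. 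The key structural fact that makes the coordinate-wise analysis legitimate is that $[T_{\tilde{\pi}}(u)]_i = r_{\tilde{\pi}_i}(i) + \gamma \cdot p_{\tilde{\pi}_i}(i)^\top u$ depends on $\tilde{\pi}$ only through the single action $\tilde{\pi}_i$.

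First I would establish $u \leq \tilde{v}$. For each state $i$, if the branch $q_i - 2\gamma\epsilon > u_i$ is taken then $\tilde{v}_i = q_i - 2\gamma\epsilon > u_i$, and otherwise $\tilde{v}_i = u_i$; in either case $\tilde{v}_i \geq u_i$.

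Next I would establish the central inequality $\tilde{v} \leq T_{\tilde{\pi}}(u)$ by the same case split, using that the value update and the policy update in Algorithm~\ref{alg:mon_val_op} are coupled so that $\tilde{\pi}_i$ is exactly the action whose estimated value produced $\tilde{v}_i$. In the first branch $\tilde{\pi}_i = w_i$, so $[T_{\tilde{\pi}}(u)]_i = [T_w(u)]_i$, and the conditioned bound $\tilde{v}_i = q_i - 2\gamma\epsilon \leq [T_w(u)]_i$ gives $\tilde{v}_i \leq [T_{\tilde{\pi}}(u)]_i$ directly. In the second branch $\tilde{\pi}_i = \pi_i$ and $\tilde{v}_i = u_i$, so the required inequality $u_i \leq [T_\pi(u)]_i$ is precisely the input hypothesis $T_\pi(u) \geq u$ read at coordinate $i$. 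I expect this coupling — matching the branch chosen for the value with the branch chosen for the policy, and invoking the $\apxvalcode$ guarantee in one case and the input invariant $T_\pi(u) \geq u$ in the other — to be the step requiring the most care and thus the main obstacle.

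Finally, the remaining inequality $T_{\tilde{\pi}}(u) \leq T_{\tilde{\pi}}(\tilde{v})$ follows immediately from $u \leq \tilde{v}$, already proved, together with the monotonicity of the value operator, Lemma~\ref{lem:val_is_mon}, applied to $T_{\tilde{\pi}}$ regarded as the value operator of the uncontrolled Markov chain induced by the fixed policy $\tilde{\pi}$. Chaining the three inequalities yields $u \leq \tilde{v} \leq T_{\tilde{\pi}}(u) \leq T_{\tilde{\pi}}(\tilde{v})$ entry-wise on the $1 - \failprob$ event, which is exactly the claim.
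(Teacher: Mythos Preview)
Your proposal is correct and follows essentially the same approach as the paper's proof: condition on the $\apxvalcode$ guarantee $\|q - T_w(u)\|_\infty \leq 2\gamma\epsilon$, establish $u \leq \tilde{v}$ and $\tilde{v} \leq T_{\tilde{\pi}}(u)$ coordinate-wise via the if/else case split, and then obtain $T_{\tilde{\pi}}(u) \leq T_{\tilde{\pi}}(\tilde{v})$ from monotonicity. Your explicit remark that $[T_{\tilde{\pi}}(u)]_i$ depends on $\tilde{\pi}$ only through $\tilde{\pi}_i$, and your justification for applying Lemma~\ref{lem:val_is_mon} to $T_{\tilde{\pi}}$, make the argument slightly more careful than the paper's version, but the substance is identical.
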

\begin{proof}
By design  $v \leq \tilde{v}$ entry-wise trivially. To show that $\tilde{v}_i \leq T_{\tilde{\pi}}(\tilde{v})_i$ for all $i \in \states$ note that by Lemma~\ref{lem:aprox_value_operator} with probability $1 - \delta$ we have $\| q - T_w(u)\|_\infty \leq 2 \gamma \epsilon$ and therefore $-2\gamma \epsilon \leq \valop_{w}(u)_i - q_i \leq 2 \gamma \epsilon$ for all $i \in \states$.  Consequently, in the case where $q_i - 2 \gamma \epsilon > u_i$, we have: 
\[
\tilde{v}_i = q_i - 2\gamma \epsilon
\leq \valop_w(u)_i = \valop_{\tilde{\pi}}(u)_i
\]  
and in the case where $q_i - 2\gamma \epsilon \leq u_i$
\[
\tilde{v}_i = u_i \leq \valop_\pi(u)_i = \valop_{\tilde{\pi}}(u)_i
\] 
by assumptions. In either case we have that $\tilde{v_i} \leq T_{\tilde{\pi}}(u)_i$. However, since $u \leq \tilde{v}$ entry-wise, by Lemma~\ref{lem:val_is_mon} we have $T_{\tilde{\pi}}(u)_i \leq T_{\tilde{\pi}}(\tilde{v})_i$ as desired.
\end{proof}

\begin{lemma}[$\apxmonvalcode$ Approximates Value Operator]
\label{lem:mon_apx_val}
With probability $1 - \failprob$, an invocation of $\apxmonvalcode$ (Algorithm~\ref{alg:mon_val_op}), returns $\tilde{v} \in \valuespace$ and $\tilde{\pi} \in \policyspace$ with 
$
\| \tilde{v} - T_{\tilde{\pi}}(u) \|_\infty \leq 4 \gamma \epsilon
$
and 
$
\| \tilde{v} - T(u) \|_\infty \leq 4 \gamma \epsilon
$.
\end{lemma}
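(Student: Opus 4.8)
The plan is to condition on the success of the internal call to $\apxvalcode$ and then argue entrywise, following the two branches of $\apxmonvalcode$ (Algorithm~\ref{alg:mon_val_op}). Writing $(q,w) = \apxvalcode(u, v_0, x, \epsilon, \failprob)$ for the intermediate output, Lemma~\ref{lem:aprox_value_operator} guarantees that with probability $1 - \failprob$ we have both $\|q - T_w(u)\|_\infty \leq 2\gamma\epsilon$ and $\|q - T(u)\|_\infty \leq 2\gamma\epsilon$; I would condition on this event for the rest of the argument. I would also record the consequence of the precondition $T_\pi(u) \geq u$: since $T(u)_i = \max_{a} [r_a(i) + \gamma p_a(i)^\top u] \geq T_\pi(u)_i$ for every $i$, the max defining $T$ dominates any fixed policy, so entrywise $u \leq T_\pi(u) \leq T(u)$. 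This monotonicity fact is what makes the ``else'' branch work.

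For the bound $\|\tilde{v} - T(u)\|_\infty \leq 4\gamma\epsilon$, fix a state $i$ and split on the branch taken. If $q_i - 2\gamma\epsilon > u_i$, then $\tilde{v}_i = q_i - 2\gamma\epsilon$, and from $|q_i - T(u)_i| \leq 2\gamma\epsilon$ one sandwiches $\tilde{v}_i \in [\,T(u)_i - 4\gamma\epsilon,\ T(u)_i\,]$, giving $|\tilde{v}_i - T(u)_i| \leq 4\gamma\epsilon$. If instead $q_i - 2\gamma\epsilon \leq u_i$, then $\tilde{v}_i = u_i$; the lower estimate $u_i \geq q_i - 2\gamma\epsilon \geq T(u)_i - 4\gamma\epsilon$ follows from concentration, while the upper estimate $u_i \leq T(u)_i$ is exactly the monotonicity consequence above. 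Either way $|\tilde{v}_i - T(u)_i| \leq 4\gamma\epsilon$, and taking the max over $i$ yields the claim.

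For the bound $\|\tilde{v} - T_{\tilde{\pi}}(u)\|_\infty \leq 4\gamma\epsilon$, the key observation is that $T_{\tilde{\pi}}(u)_i = r_{\tilde{\pi}_i}(i) + \gamma p_{\tilde{\pi}_i}(i)^\top u$ depends only on $\tilde{\pi}_i$, and in each branch $\apxmonvalcode$ sets $\tilde{\pi}_i$ to match the action underlying $\tilde{v}_i$. In the ``if'' branch $\tilde{\pi}_i = w_i$, so $T_{\tilde{\pi}}(u)_i = T_w(u)_i$, and $|q_i - T_w(u)_i| \leq 2\gamma\epsilon$ together with $\tilde{v}_i = q_i - 2\gamma\epsilon$ gives $|\tilde{v}_i - T_w(u)_i| \leq 4\gamma\epsilon$. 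In the ``else'' branch $\tilde{\pi}_i = \pi_i$ and $\tilde{v}_i = u_i$, so $T_{\tilde{\pi}}(u)_i = T_\pi(u)_i$; here the precondition gives $T_\pi(u)_i \geq u_i$, while $T_\pi(u)_i \leq T(u)_i \leq q_i + 2\gamma\epsilon \leq u_i + 4\gamma\epsilon$ (using $q_i \leq u_i + 2\gamma\epsilon$ from the branch test), so $|\tilde{v}_i - T_\pi(u)_i| \leq 4\gamma\epsilon$. I expect the ``else'' branch to be the main obstacle: there $\tilde{v}_i = u_i$ is not produced by the sampler, so neither direction of the bound is immediate from concentration alone — one direction needs the threshold condition of the branch and the other genuinely needs the monotonicity hypothesis $T_\pi(u) \geq u$. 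Keeping track of which of the two quantities ($T(u)$ versus $T_{\tilde{\pi}}(u)$) each inequality controls is the only place care is required; everything else is a routine triangle-inequality sandwich.
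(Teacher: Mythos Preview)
Your proof is correct and follows essentially the same approach as the paper's: condition on the success event of $\apxvalcode$ from Lemma~\ref{lem:aprox_value_operator}, split on the two branches of Algorithm~\ref{alg:mon_val_op}, and in the ``else'' branch combine the branch threshold $q_i - 2\gamma\epsilon \leq u_i$ with the precondition $u_i \leq T_\pi(u)_i \leq T(u)_i$ to sandwich $\tilde v_i = u_i$. Your write-up is slightly more explicit than the paper's (you separately argue the $T(u)$ and $T_{\tilde\pi}(u)$ bounds and spell out that $T_{\tilde\pi}(u)_i$ depends only on $\tilde\pi_i$), but the logic is identical.
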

\begin{proof}
By Lemma~\ref{lem:aprox_value_operator} with probability $1 - \delta$ we have $\| q - T_w(u) \|_\infty \leq 2 \gamma \epsilon$ and $\| q - T(u) \|_\infty \leq 2 \gamma \epsilon$. 

Now if for any $i \in \states$ we have  $q_i - 2 \gamma \epsilon > u_i$ then 
\[
| \tilde{v}_i - T_{\tilde{\pi}}(u)_i | = |q_i - 2\gamma\epsilon - T_w(u)_i| \leq \|q - T_w(u)\|_\infty + 2\gamma\epsilon = 4\gamma\epsilon~.
\] 
and similarly
\[
| \tilde{v}_i - T(u)_i | = |q_i - 2\gamma\epsilon - T(u)_i| \leq \|q - T(u)\|_\infty + 2\gamma\epsilon = 4\gamma\epsilon~.
\]

On the other hand, if for any $i \in \states$ we have  $q_i - 2 \gamma \epsilon < u_i$, 
we note that since $\| q - T(u) \|_\infty \leq 2 \gamma \epsilon$ it follows that: 
\[
T_{\pi}(u)_i \leq T(u)_i \leq q_i + 2\gamma\epsilon \leq u_i + 4\gamma\epsilon~.
\]
Since $u_i \leq T_{\pi}(u)_i$ by assumption, this implies that  $| \tilde{v}_i - T_{\tilde{\pi}}(u)_i | \leq 4 \gamma \epsilon$ and $
| \tilde{v}_i - T(u)_i | \leq 4 \gamma \epsilon
$. 
\end{proof}

\begin{lemma}[Runtime of $\apxmonvalcode$]
\label{lem:mon_val_time}
$\apxmonvalcode$ (Algorithm~\ref{alg:mon_val_op}) can be implemented to run in time
\[
O\left(|S||A| \left\lceil\frac{\|v - v_0\|_\infty^2}{\epsilon^2} \ln\left(\frac{|\states| |\actions|}{\failprob}\right) \right\rceil\right) ~.
\]
\end{lemma}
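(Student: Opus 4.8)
The plan is to observe that $\apxmonvalcode$ performs essentially no sampling work beyond a single invocation of $\apxvalcode$, so that its runtime is governed entirely by Lemma~\ref{lem:aprox_value_operator}, with only cheap deterministic postprocessing on top.

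First I would isolate the one expensive step. The only line of $\apxmonvalcode$ (Algorithm~\ref{alg:mon_val_op}) that involves sampling is its first line, the call $(q,w) := \apxvalcode(u, v_0, x, \epsilon, \failprob)$. By Lemma~\ref{lem:aprox_value_operator}, this invocation can be implemented to run in time
\[
O\left(|S||A| \left\lceil \|u - v_0\|_\infty^2 \epsilon^{-2} \ln(|\states| |\actions| / \failprob) \right\rceil\right),
\]
which is exactly the claimed bound (here $u$ is the current-values input that the lemma statement denotes by $v$). So invoking the runtime guarantee of $\apxvalcode$ immediately supplies the dominant term.

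Next I would account for the remaining lines of the algorithm. After the call to $\apxvalcode$, the routine executes a single loop over the states $i \in \states$; each iteration performs one comparison $q_i - 2\gamma\epsilon > u_i$ together with a constant number of assignments to $\tilde{v}_i$ and $\tilde{\pi}_i$. Hence this postprocessing contributes only $O(|S|)$ additional arithmetic operations. Since $|\actions| \geq 1$ and the ceiling factor is at least $1$, this additive $O(|S|)$ term is dominated by the $O(|S||A|\lceil \cdots \rceil)$ cost of the $\apxvalcode$ call, so summing the two contributions yields the stated running time.

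I do not expect any genuine obstacle: the argument is pure runtime accounting, and the only thing to verify is that the deterministic loop is cheap relative to the sampling step of $\apxvalcode$, which is immediate from the observation that it touches each state exactly once with constant work per state. The only subtlety worth flagging is the mild notational mismatch between the $\|v - v_0\|_\infty$ appearing in the statement and the $\|u - v_0\|_\infty$ controlling the cost of $\apxvalcode$, which I would resolve simply by identifying $v$ with the current-values input $u$ of the routine.
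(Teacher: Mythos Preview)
Your proposal is correct and follows essentially the same approach as the paper: the paper's proof simply observes that $\apxmonvalcode$ costs the time of one $\apxvalcode$ call plus $O(|S|)$ additional work, and then invokes Lemma~\ref{lem:aprox_value_operator}. Your treatment is slightly more detailed (explicitly arguing why the $O(|S|)$ postprocessing is dominated and flagging the $v$ versus $u$ notational mismatch), but the argument is the same.
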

\begin{proof}
The routine $\apxmonvalcode$ can be implemented in the time needed to invoke $\apxvalcode$ plus an additional $O(|S|)$ work. The running time therefore follows from the analysis of $\apxvalcode$ in Lemma~\ref{lem:aprox_value_operator}.
\end{proof}

We now have everything we need to obtain our improved sublinear time algorithm for computing approximate policies. In the remainder of this section we provide the algorithms that achieve this and analyze them to prove Theorem~\ref{thm:sublin_policy}.

\begin{algorithm}[H]
\caption{Monotonic Sampled Randomized VI: $\sampledRandomMonVI(v_{0}, \pi_{0}, T, \epsilon, \delta)$} \label{alg:mon_sampled_randvi}
\begin{algorithmic}[1]
	\Require{Initial values $v_{0} \in \valuespace$, initial policy, $\pi_0 \in \policyspace$ with $T_{\pi_{0}}(v_0) \geq v_0$, and number of iterations $T > 0$}
	\Require{Target accuracy $\epsilon > 0$ and failure probability $\failprob \in (0, 1)$}
	\State Compute approximate offsets: $\tilde{x} \in \offsetspace$ with $|\tilde{x}_a(i) - p_a(i)^\top v_0| \leq \epsilon$ for all $i \in \states$ and $a \in \actions$
	\For{\texttt{each round} $t \in [T]$}
	\State $(v_{t}, \pi_{t}) = \apxmonvalcode (v_{t-1},\pi_{t - 1}, v_{0},\tilde{x},\epsilon/2,\delta/T)$
	\EndFor	
	\BState \Return $(v_{T}, \pi_{T})$
\end{algorithmic}
\end{algorithm}

\begin{algorithm}[H]
\caption{Monotonic Sublinear Time Randomized VI: $\sublinearRandomMonVI(\epsilon, \delta)$}
\label{alg:monotonic_poly_randomized_vi}
\begin{algorithmic}[1]
\Require{Target precision $\epsilon$ and failure probability $\failprob \in (0, 1)$}
\State \texttt{Let $K = \lceil \log_2(\frac{M}{\epsilon(1 - \gamma)}) \rceil$ and $T = \lceil\frac{1}{1 - \gamma} \log(\frac{4}{1 - \gamma})\rceil$}
\State \texttt{Let $v_0 = \vec{0}$, $\pi_0 \in \actions^\states$ arbitrary, and $\epsilon_0 = \frac{M}{1 - \gamma}$}
\For{\texttt{each iteration $k \in [K]$}}
\State $\epsilon_k = \frac{1}{2} \epsilon_{k - 1} = \frac{M}{2^{k} (1 - \gamma)}$
\State $(v_k, \pi_k) = \sampledRandomMonVI(v_{k - 1}, \pi_{k - 1}, T,(1 - \gamma) \epsilon_k / (4 \gamma),\delta/K)$ 
\EndFor
\BState \Return $(v_K, \pi_K)$
\end{algorithmic}
\end{algorithm}

\begin{theorem}[Sublinear Time Approximate Policy Computation]
\label{thm:sublin_policy}
$\sublinearRandomMonVI$ (see Algorithm~\ref{alg:monotonic_poly_randomized_vi}) can be implemented so that with probability $1 - \delta$ it yields an $\epsilon$-approximate policy for $\epsilon \in (0, \frac{M}{\sqrt{1-\gamma}}]$ in time 
\[
\runtimesublin ~.
\]
\end{theorem}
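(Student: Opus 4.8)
The plan is to mirror the structure of the runtime analysis for $\sublinearRandomVI$ (Lemma~\ref{lem:sub_rand_vi_time}), substituting the guarantees of the monotonic subroutine $\apxmonvalcode$ for those of $\apxvalcode$. First I would establish correctness: $\sublinearRandomMonVI$ maintains, as an invariant across the outer iterations $k \in [K]$, that the pair $(v_k, \pi_k)$ satisfies $T_{\pi_k}(v_k) \geq v_k$ entry-wise and that $\|v_k - \vopt\|_\infty \leq \epsilon_k$. The monotonicity invariant is exactly what Lemma~\ref{lem:apxmon_mon} preserves at each call to $\apxmonvalcode$, provided the input pair already satisfies it — which holds at initialization since $v_0 = \vec{0}$ and any $\pi_0$ gives $T_{\pi_0}(\vec 0) = r_{\pi_0} \geq \vec 0$ only if rewards are nonnegative, so I would either assume nonnegative rewards here or start from a valid underestimate; this is the one place requiring care. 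The accuracy invariant $\|v_k - \vopt\|_\infty \leq \epsilon_k$ follows by the same induction as in Lemma~\ref{lem:poly_randomized_vi_quality}, now using Lemma~\ref{lem:mon_apx_val} (which gives $\|\tilde v - T(u)\|_\infty \leq 4\gamma\epsilon$, a constant-factor worse bound than $\apxvalcode$, absorbed by adjusting the $\epsilon/2$ target accuracy in $\sampledRandomMonVI$) in place of Lemma~\ref{lem:aprox_value_operator} when feeding into the approximate contraction Lemma~\ref{lem:approx_contraction}.

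Next I would convert the final value guarantee into a policy guarantee \emph{without} the $1/(1-\gamma)$ loss that afflicted Lemma~\ref{lem:policy_quality_naive}. This is the crux of the improvement: by Lemma~\ref{lem:apxmon_mon} the terminal pair satisfies $v_K \leq \tilde v \leq T_{\pi_K}(v_K)$, so Lemma~\ref{lem:monotonicity} yields $v_K \leq v_{\pi_K} \leq \vopt$ entry-wise. Combined with $\|v_K - \vopt\|_\infty \leq \epsilon_K \leq \epsilon$, this sandwiches $v_{\pi_K}$ between $v_K$ and $\vopt$, giving $\|v_{\pi_K} - \vopt\|_\infty \leq \|v_K - \vopt\|_\infty \leq \epsilon$ directly, so $\pi_K$ is $\epsilon$-optimal. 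The key point is that we convert approximate values to approximate policies at no loss in approximation quality, so we only need to drive $\|v_K - \vopt\|_\infty$ down to $\epsilon$ rather than to $\epsilon(1-\gamma)$.

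For the runtime I would reprise the aggregation argument of Lemma~\ref{lem:sub_rand_vi_time}. Each outer iteration $k$ invokes $\sampledRandomMonVI$ with $T = \lceil \frac{1}{1-\gamma}\log(\frac{4}{1-\gamma})\rceil$ inner rounds at target accuracy $\Theta((1-\gamma)\epsilon_k)$, and by Lemma~\ref{lem:mon_val_time} each inner round of $\apxmonvalcode$ costs the same (up to $O(|S|)$ additive work) as $\apxvalcode$, namely $O(|S||A|\,\|v_{k-1}-v_0\|_\infty^2/\epsilon_k^2 \cdot \log(\cdot))$. Using the maintained bound $\|v_{k-1} - \vopt\|_\infty \leq \epsilon_{k-1} \leq 2\epsilon_k$ together with the approximate-offset sampling cost $O(|S||A|\,\|v_0\|_\infty^2/\epsilon^2)$ from the first line of $\sampledRandomMonVI$, the per-iteration cost telescopes to $\otilde(|S||A| M^2 (1-\gamma)^{-4}\epsilon_k^{-2}\log(1/\delta))$, analogous to Lemma~\ref{lem:sampled_randomized_vi_runtime}. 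Summing over the $K = \lceil \log_2(\frac{M}{\epsilon(1-\gamma)})\rceil$ iterations and using $\epsilon_k = \Omega(\epsilon)$ gives the claimed $\runtimesublin$.

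The main obstacle I expect is not the runtime arithmetic but verifying that the additive $1/(1-\gamma)^3$ term present in the value-only bound of Lemma~\ref{lem:sub_rand_vi_time} genuinely disappears here. In the non-monotonic analysis that term arose from the $\frac{1}{(1-\gamma)^2}$ summand inside the per-round sample count, which was needed to control $\|v_\ell - v_0\|_\infty^2$ via the crude triangle-inequality bound in Lemma~\ref{lem:randomized_vi_runtime}. I would argue that monotonicity tightens this: since the iterates increase monotonically from an underestimate and stay below $\vopt$, the fluctuation $\|v_\ell - v_0\|_\infty$ over the inner loop is controlled purely by $\|v_{k-1} - \vopt\|_\infty \leq 2\epsilon_k$ rather than additionally incurring the $\frac{1}{(1-\gamma)^2}$ slack, so the $\frac{1}{(1-\gamma)^2}$ term in the inner-round count is replaced by the sampling-variance term scaled to the current accuracy $\epsilon_k$. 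Confirming that this replacement is valid — i.e.\ that the variance-reduced sample count is dominated by the $M^2/((1-\gamma)^4\epsilon_k^2)$ offset-computation term and never reintroduces an additive $(1-\gamma)^{-3}$ — is the delicate step, and is precisely where the monotonic design pays off.
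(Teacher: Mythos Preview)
Your proposal is correct and follows essentially the same approach as the paper: establish the monotonic invariant $T_{\pi_k}(v_k)\ge v_k$ via Lemma~\ref{lem:apxmon_mon}, use Lemma~\ref{lem:monotonicity} to convert $\epsilon$-optimal values into an $\epsilon$-optimal policy without the $1/(1-\gamma)$ loss, and reprise the runtime aggregation of Lemma~\ref{lem:sub_rand_vi_time} with the observation that monotone sandwiching $v_0\le v_\ell\le \vopt$ lets $\|v_\ell-v_0\|_\infty\le\|v_0-\vopt\|_\infty$ replace the cruder triangle-inequality bound that produced the extra $1/(1-\gamma)^2$ summand. On the initialization issue you flag, the paper resolves it exactly as you suggest, by starting from the guaranteed underestimate $v_0=\frac{-M}{1-\gamma}\vones$ (for which $T_{\pi_0}(v_0)\ge v_0$ holds for any $\pi_0$ regardless of reward signs) rather than $\vec{0}$; the phrase ``can be implemented so that'' in the theorem statement is what licenses this adjustment.
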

\begin{proof}
The algorithm $\sublinearRandomMonVI$ is the same as the algorithm $\sublinearRandomVI$, with the exception that $\pi_k \in \policyspace$ are maintained and $\sampledRandomMonVI$ is used instead of $\sampledRandomVI$. Furthermore, the algorithm $\sampledRandomMonVI$ is the same as the algorithm $\sampledRandomVI$ with the exception that $\pi_k \in \policyspace$ are maintained, $\apxmonvalcode$ is used instead of $\apxvalcode$, and the value of $\epsilon$ used is decreased by a factor of two (to account for the slightly larger error of $\apxmonvalcode$ in Lemma~\ref{lem:mon_apx_val}).

Furthermore, since for $v_0 = \frac{-M}{1 - \gamma} \vones$, any policy $\pi \in \policyspace$, and any $i \in \states$ we have
\[
[T_{\pi}(v_0)]_i
= r_{\pi_i}(i) + \gamma \cdot p_{\pi_i}(i)^\top v_0
\geq -M + \gamma \cdot \frac{-M}{1 - \gamma}
= \frac{- M}{1 - \gamma}
\]
Consequently, for $v_0 \leq T_{\pi_0}(v_0)$ entry-wise for the $v_0$ and $\pi_0$ in $\sublinearRandomMonVI$ and therefore we have that $T_{\pi_k}(v_k) \geq v_k$ is maintained throughout these algorithms, with probability $1-\delta$. 

This implies that these algorithms can be analyzed exactly as they were previously (Lemma~\ref{lem:sub_rand_vi_time} and Lemma~\ref{lem:poly_randomized_vi_quality}), though now the $v_k$ increase monotonically throughout the algorithm and  $T_{\pi_k}(v_k) \geq v_k$ is maintained. Consequently, the additive $1/(1 - \gamma)^2$ term in the previous analysis (Lemma~\ref{lem:randomized_vi_runtime} and Lemma~\ref{lem:sampled_randomized_vi_runtime}) does not occur as $\|v_k - \vopt\|_\infty$ decreases monotonically with $k$ and we have that the policy returned is of the same quality as the value return with probability $1 - \failprob$. Further, since we assume $\epsilon \in (0, \frac{M}{\sqrt{1-\gamma}}]$ the additive $\frac{1}{(1 - \gamma)^3}$ term can be omitted (see Remark~\ref{remark:monotonic_improvement}).
\end{proof}

\section{Finite Horizon Markov Decision Process}
\label{section:finite_horizon_mdp}

In this section we show how to use the ideas and techniques developed in Section~\ref{sec:DMDP} to obtain faster runtimes for solving finite horizon Markov Decision Processes, a close relative of the infinite horizon DMDP. 

In the finite horizon Markov Decision problem, we are given a tuple $(S, A, P, r, H)$, where $S$ is the finite state space, $A$ is the finite action space, $P$ is the collection of state-action-state transition probabilities, $r$ is the collection of state-action rewards, and $H \in \mathbb{Z}^{+}$ is the discrete time horizon of the problem. We use $p_{a}(i,j)$ to denote the probability of going to state $j$ from state $i$ when taking action $a$ and define $p_a(i) \in \R^\states$ with $p_a(i)_j \defeq p_{a}(i,j)$ for all $j \in \states$. We use $r_a(i)$ to denote the reward obtained from taking action $a \in \actions$ at state $i \in \states$ and assume that for some known $M > 0$ it is the case that all $r_a(i) \in [-M, M]$.

The key difference between the finite horizon MDP and the infinite horizon discounted MDP is that in the former, a specified horizon $H$ is given, and in the latter, the given discount factor indirectly adds a time dimension to the problem (i.e. a smaller discount factor, meaning that future rewards are heavily discounted, suggests a shorter relevant time horizon). Rewards in the finite horizon model are typically undiscounted, however all results easily extend to the discounted case.

As with our results for DMDP, we make the assumption throughout that for any state $i \in \states$ and action $a \in \actions$ we can sample $j \in \states$ independently at random so that  $\Pr[j = k] = p_a(i,k)$ in expected $O(1)$ time. (See Section $\ref{sec:prelim}$ for further discussion).

The goal in solving a finite horizon MDP is to compute a non-stationary policy $\pi(i, h)$, for every $i \in S$ and $h \in \{0, 1, 2, \ldots, H-1\}$ that tells an agent which action to choose at state $i$ at time $h$ to maximize total expected reward over the entire time horizon $H$.

In the remainder of this section, we give 2 algorithms for computing $\epsilon$-optimal policies. The first algorithm is an adaptation of our simple randomized value iteration scheme, $\randomVI$, and the second one is an adaptation of our high precision randomized value iteration $\highprecisionVI$. The first algorithm is sublinear for large values of $\epsilon$. The high precision algorithm is competitive for smaller values of $\epsilon$, in the regime where the first algorithm would effectively have superlinear running time. 

\subsection {Randomized Value Iteration for Finite Horizon MDP}
The basic ideas in our simple randomized value iteration scheme, $\randomVI$, can used to generate an $\epsilon$-optimal policy for finite horizon MDPs. We make a few tweaks in our finite horizon extension, $\finiteHVI$. The first is that we output the policy for every time step, since the optimal policy for a finite horizon MDP is non-stationary. The second is that the error $\epsilon$-input to $\finiteHVI$ is the desired cumulative error guarantee for our policy. Also, we use descending iteration counts to be consistent with backwards induction, and without loss of generality, we start the backwards induction with an initial value vector $\vec{0}$, so the termination reward for ending at any state $i$ is $0$.

\begin{algorithm}[H]
\caption{Randomized Finite Horizon VI: $\finiteHVI(v_{0}, H, \epsilon, \delta)$} \label{alg:randomized_finite_horizon_VI}
\begin{algorithmic}[1]
	\Require{Initial values $v_{0} \in \valuespace$ and number of iterations $H > 0$}
	\Require{Target accuracy $\epsilon \geq 0$ and failure probability $\failprob \in (0, 1)$}
	\State \texttt{Compute $x \in \offsetspace$ such that $x_a(i) = p_a(i)^\top v_{0}$ for all $i \in \states$ and $a \in \actions$}
	\State Let $h=H$
	\While{$h>0$ }
	\State $(v_{h}, \pi_{h}) = \apxvalcode (v_{h+1},v_{0},x,\frac{\epsilon}{2 H},\delta/H)$
	\State Decrement $h$ by 1
	\EndWhile	
	\BState \Return $(v_{h}, \pi_{h})$ for each $h \in [H]$
\end{algorithmic}
\end{algorithm}

To start the backwards induction with initial value vector $\vec{0}$, we simply use $v_{0} = \vec{0}$ as the input to $\finiteHVI$. 
\begin{lemma}[Quality]
\label{lem:finitehvi_quality}
$\finiteHVI(\vec{0}, H, \epsilon, \delta)$ returns an $\epsilon$-optimal policy with probability $1-\delta$. 
\end{lemma}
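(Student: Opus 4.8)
The plan is to mirror the analysis of $\randomVI$ but to replace the geometric (infinite-horizon) error accumulation by a finite, additive accumulation over the $H$ backward-induction steps, and to control the quality of the returned non-stationary policy \emph{directly} rather than through the infinite-horizon conversion of Lemma~\ref{lem:pol_bound_simple}. First I would fix notation for the exact problem. Let $v^*_h \in \valuespace$ denote the optimal value-to-go generated by exact backward induction, i.e. $v^*_{H+1} = \vec{0}$ (the terminal value, which matches the input $v_0 = \vec{0}$) and $v^*_h = T(v^*_{h+1})$, so that $v^*_1$ is the true full-horizon optimal value. Let $\hat v_h$ and $\pi_h$ be the value/policy pairs actually produced by $\finiteHVI$, so that $\hat v_h = \apxvalcode(\hat v_{h+1}, v_0, x, \tfrac{\epsilon}{2H}, \delta/H)$ with $\hat v_{H+1} = \vec{0}$, and let $\pi = (\pi_1,\dots,\pi_H)$ be the returned non-stationary policy ($\pi_h$ used with $h$ steps remaining).

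Since $\finiteHVI$ makes $H$ calls to $\apxvalcode$, each with failure probability $\delta/H$, a union bound gives that with probability at least $1-\delta$ every call meets the guarantee of Lemma~\ref{lem:aprox_value_operator}: for all $h$, both $\|\hat v_h - T(\hat v_{h+1})\|_\infty \le 2\gamma\cdot\tfrac{\epsilon}{2H} \le \tfrac{\epsilon}{H}$ and $\|\hat v_h - T_{\pi_h}(\hat v_{h+1})\|_\infty \le \tfrac{\epsilon}{H}$. I would condition on this event and then bound the value error by backward induction: writing $d_h = \|\hat v_h - v^*_h\|_\infty$, the triangle inequality together with the contraction Lemma~\ref{lem:contract} gives $d_h \le \tfrac{\epsilon}{H} + \gamma\, d_{h+1}$, which with $d_{H+1}=0$ telescopes to $d_1 \le \tfrac{\epsilon}{H}\sum_j \gamma^j$. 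The key point, distinguishing this from the infinite-horizon case, is that the sum has at most $H$ terms, so there is no $1/(1-\gamma)$ blow-up and the accumulated value error stays $O(\epsilon)$.

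For the policy guarantee I would introduce the \emph{true} value-to-go of the computed non-stationary policy, $v^\pi_h$, defined by the exact recursion $v^\pi_{H+1}=\vec{0}$ and $v^\pi_h = T_{\pi_h}(v^\pi_{h+1})$; here $v^\pi_1$ is exactly the full-horizon value of $\pi$. Using that $T_{\pi_h}$ is itself a $\gamma$-contraction (it is the value operator of the Markov chain induced by $\pi_h$) together with the second guarantee of Lemma~\ref{lem:aprox_value_operator}, the same backward induction applied to $e_h = \|\hat v_h - v^\pi_h\|_\infty$ yields $e_h \le \tfrac{\epsilon}{H} + \gamma\, e_{h+1}$ and hence $e_1 \le \tfrac{\epsilon}{H}\sum_j\gamma^j$. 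The policy loss is then controlled by the triangle inequality $\|v^*_1 - v^\pi_1\|_\infty \le d_1 + e_1$, where the per-call accuracy $\tfrac{\epsilon}{2H}$ and failure probability $\tfrac{\delta}{H}$ are chosen precisely so that, after summing the $H$ additive contributions in both $d_1$ and $e_1$, the total is at most $\epsilon$; this establishes that $\pi$ is $\epsilon$-optimal.

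The main conceptual obstacle is the policy-quality step. Unlike the infinite-horizon algorithms, where Lemma~\ref{lem:pol_bound_simple} converts value accuracy into policy accuracy via a stationary fixed point at the cost of a $\gamma/(1-\gamma)$ factor, here there is no stationary fixed point, so I must track the true value of the non-stationary policy level by level and argue that the per-step Bellman errors propagate only additively through the $T_{\pi_h}$ recursion. The remaining, routine-but-delicate parts are (i) aligning the backward indices of the algorithm, including the implicit terminal value $\hat v_{H+1}=\vec{0}$, with the steps-remaining indexing of $v^*_h$ and $v^\pi_h$, and (ii) carrying the constants so that the two accumulated error terms $d_1$ and $e_1$ together stay within $\epsilon$ — which is exactly the purpose of the $\tfrac{1}{2H}$ factor in the accuracy passed to $\apxvalcode$.
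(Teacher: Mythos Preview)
Your proposal is correct and follows essentially the same approach as the paper's (very brief) proof: union-bound over the $H$ calls to $\apxvalcode$ and let the per-step error of at most $\epsilon/H$ from Lemma~\ref{lem:aprox_value_operator} accumulate additively over the backward induction. Your version is in fact more careful than the paper's informal paragraph—you explicitly separate the value-error chain $d_h$ and the policy-error chain $e_h$ and control the non-stationary policy's true value level by level—though strictly speaking those two chains together give $d_1+e_1\le 2\epsilon$ rather than $\epsilon$ in the undiscounted case $\gamma=1$, a constant the paper's own argument equally glosses over.
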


\begin{proof}
Since the backwards induction of exact value iteration directly finds the optimal policy, the quality guarantee of $\finiteHVI$ comes directly from the guarantees of $\apxvalcode$, see Lemma~\ref{lem:aprox_value_operator}. At each iteration $h$, we allow an additive error of at most $\frac{\epsilon}{H}$ with failure probability $\frac{\delta}{H}$. The error for the non-stationary policy compounds additively in the finite horizon case, so summing up this additive error over $H$ iterations and taking a union bound yields an $\epsilon$-optimal policy with probability $1-\delta$.
\end{proof}

\begin{lemma} [Runtime]
\label{lem:finitehvi_runtime}
$\finiteHVI(\vec{0}, H, \epsilon, \delta)$ runs in time
\[
\tO\left(|S||A| \frac{M^2H^5}{\epsilon^2}\ln\left(\frac{|\states|  |\actions|H}{\delta}\right) \right)~.
\]
\end{lemma}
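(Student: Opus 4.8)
The plan is to bound the cost of each of the $H$ calls to $\apxvalcode$ via Lemma~\ref{lem:aprox_value_operator} and then sum over the $H$ backward-induction steps. Since the input is $v_0 = \vec{0}$, the offset vector satisfies $x_a(i) = p_a(i)^\top \vec{0} = 0$, so computing $x$ is free and there is no additive $|S|^2|A|$ term (and by the standing $O(1)$-sampling assumption no preprocessing cost either). By Lemma~\ref{lem:aprox_value_operator} the per-call runtime is $O\big(|S||A|\,\|v_{h+1} - v_0\|_\infty^2\,\epsilon'^{-2}\,\ln(|S||A|/\delta')\big)$ with $\epsilon' = \epsilon/(2H)$ and $\delta' = \delta/H$, so the entire argument reduces to bounding $\|v_{h+1} - v_0\|_\infty = \|v_{h+1}\|_\infty$ uniformly in $h$.

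First I would establish that $\|v_h\|_\infty = O(HM)$ for every $h$. In the finite-horizon (effectively $\gamma = 1$) setting the value operator is norm-expanding by at most $M$ per step, i.e. $\|\valop(u)\|_\infty \leq M + \|u\|_\infty$, because each $p_a(i)^\top u$ is a convex combination of the entries of $u$ and each reward lies in $[-M,M]$. Combined with the accuracy guarantee of Lemma~\ref{lem:aprox_value_operator}, namely $\|v_h - \valop(v_{h+1})\|_\infty \leq 2\gamma\epsilon' \leq 2\epsilon'$ (which holds per call with probability $1 - \delta'$), this gives the recursion $\|v_h\|_\infty \leq M + \|v_{h+1}\|_\infty + 2\epsilon'$. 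Union bounding over the $H$ calls, with probability $1 - \delta$ all of them meet their guarantee, and then unrolling the recursion from the terminal value $\vec{0}$ over at most $H$ steps yields $\|v_h\|_\infty \leq HM + 2H\epsilon' = HM + \epsilon$. Since we may assume $\epsilon \leq HM$ (otherwise $\vec{0}$ is already $\epsilon$-optimal and the claim is trivial), this gives $\|v_{h+1}\|_\infty = O(HM)$.

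Substituting $\|v_{h+1} - v_0\|_\infty^2 = O(H^2 M^2)$, $\epsilon' = \epsilon/(2H)$, and $\delta' = \delta/H$ into the per-call bound gives a cost of $O\big(|S||A|\,H^2 M^2 \cdot \tfrac{H^2}{\epsilon^2}\,\ln(|S||A|H/\delta)\big) = O\big(|S||A|\,\tfrac{H^4 M^2}{\epsilon^2}\,\ln(|S||A|H/\delta)\big)$ for each of the $H$ iterations, and multiplying by the $H$ backward steps yields the claimed total of $\tO\big(|S||A|\,\tfrac{M^2 H^5}{\epsilon^2}\,\ln(|S||A|H/\delta)\big)$.

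The main obstacle is the uniform norm bound $\|v_{h+1}\|_\infty = O(HM)$: the iterates are random, and the sample count chosen inside $\apxtranscode$ is computed from the realized value $\|u - v_0\|_\infty = \|v_{h+1}\|_\infty$, so the runtime is genuinely a high-probability statement. Care is therefore needed to condition on the event (of probability $1-\delta$) that all $H$ approximate value operators satisfy their accuracy guarantees simultaneously, ensuring the additive sampling errors accumulate to at most $\epsilon$ rather than compounding uncontrollably. Everything beyond that norm bound is routine substitution together with the merely linear (not geometric) aggregation over the $H$ steps.
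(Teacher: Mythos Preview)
Your proof is correct and takes essentially the same approach as the paper: bound $\|v_{h+1}\|_\infty$ at each backward step and plug into Lemma~\ref{lem:aprox_value_operator} with $\epsilon' = \epsilon/(2H)$ and $\delta' = \delta/H$. The only cosmetic difference is that the paper uses the per-step bound $\|v_{h+1}\|_\infty = O\big((H-h)M\big)$ and then sums $\sum_{t} t^{2} = O(H^{3})$, whereas you apply the uniform bound $O(HM)$ and multiply by $H$; both routes give the same $H^{5}$ total, and you are in fact a bit more explicit than the paper about the runtime being a high-probability statement.
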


\begin{proof}
Since rewards are undiscounted (or equivalently, $\gamma = 1$), the range of the values in the vector for the approximate value iteration increases by a factor of $M$ in every iteration. More formally, using the results of Lemma~\ref{lem:aprox_value_operator}, at iteration $h$, and letting $t = H-h$, the invocation of $\apxvalcode$ runs in time
\[
\tO\left(|S||A| \frac{t^2M^2H^2}{\epsilon^2}\ln\left(\frac{|\states|  |\actions| H}{\delta}\right) \right)~.
\]
Summing over $H$ iterations yields a total runtime of 
\[
\tO\left(|S||A| \frac{M^2H^5}{\epsilon^2}\ln\left(\frac{|\states|  |\actions|H}{\delta}\right) \right)~.
\]
\end{proof}

\subsection{Variance Reduced Value Iteration for Finite Horizon MDP}
In this section we show how to apply variance reduction to more efficiently find an $\epsilon$-optimal non-stationary policy for the finite horizon MDP problem by using $\finiteHVI$ as a building block. The idea is to compute $p^T_a(i)v_{0}$ for a fixed $v_{0}$ and then approximate $p^T_a(i)(v_{k} - v_{0})$ to save on sample complexity. Our variance reduced algorithm $\varianceReducedFiniteHVI$ invokes $\finiteHVI$ multiple times using different proximal $v_{0}$ over horizon $H$. 
\begin{algorithm}[H]
\caption{Variance Reduced Finite Horizon VI: $\varianceReducedFiniteHVI(H, L, \epsilon, \delta)$} \label{alg:variance_reduced_finite_randomized_vi}
\label{alg:variance_reduced_finite_randomized_vi}
\begin{algorithmic}[1]
	\Require{Number of iterations $H > 0$, recompute threshold $L>0$}
	\Require{Target accuracy $\epsilon \geq 0$ and failure probability $\failprob \in (0, 1)$}
	\State $v_{0} = \vec{0}$
	\State $t = [H/L]$
	\While{$t>0$ }
		\State $(v_{tL + 0}, \pi_{tL+0}), \ldots, (v_{tL + L-1}, \pi_{tL + L-1}) = \finiteHVI(v_0, L, \frac{L\epsilon}{H}, \frac{L\delta}{H})$
		\State $v_{0} = v_{tL + 0}$
		\State Decrement $t$ by 1
	\EndWhile
	\BState \Return $(v_h, \pi_h)$ for each $h \in [H]$. 
\end{algorithmic}
\end{algorithm}

\begin{lemma}[Quality of $\varianceReducedFiniteHVI$] With probability $1 - \delta$ Algorithm~\ref{alg:variance_reduced_finite_randomized_vi} ($\varianceReducedFiniteHVI$) produces an $\epsilon$-optimal policy.
\end{lemma}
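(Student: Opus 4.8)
The plan is to mirror the single-call quality argument of Lemma~\ref{lem:finitehvi_quality}, extended to account for the fact that $\varianceReducedFiniteHVI$ stitches together $\lceil H/L \rceil$ separate invocations of $\finiteHVI$, each run over a sub-horizon of length $L$. First I would track parameters through the call stack. Each of the $\lceil H/L \rceil$ blocks invokes $\finiteHVI(v_0, L, \frac{L\epsilon}{H}, \frac{L\delta}{H})$, and each such invocation performs $L$ approximate value iterations via $\apxvalcode$. Tracing the accuracy and failure-probability arguments through $\finiteHVI$, every individual call to $\apxvalcode$ is run with target accuracy $\frac{(L\epsilon/H)}{2L} = \frac{\epsilon}{2H}$ and failure probability $\frac{(L\delta/H)}{L} = \frac{\delta}{H}$. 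Since the blocks collectively perform one $\apxvalcode$ step per time index of the horizon, a union bound over these $H$ calls shows that, with probability at least $1 - \delta$, every invocation of $\apxvalcode$ simultaneously meets the guarantee of Lemma~\ref{lem:aprox_value_operator}. I would condition on this event for the rest of the argument.

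Next I would argue that, on this event, the error accumulates purely additively across the $H$ value iterations. The structural fact I rely on is that exact finite-horizon backward induction over the full horizon $H$ is exactly the composition of exact backward induction over the consecutive length-$L$ blocks, so resetting the variance-reduction anchor $v_0$ to the value vector at a block seam changes only the point around which $\apxvalcode$ centers its sampling, not the quantity it approximates; this is because the offset $x$ is computed in $\finiteHVI$ so that $x_a(i) = p_a(i)^\top v_0$ exactly, making the variance reduction purely a sample-complexity optimization with no effect on output quality. Thus at each of the $H$ steps Lemma~\ref{lem:aprox_value_operator} guarantees additive error at most $2\gamma \cdot \frac{\epsilon}{2H} = \frac{\gamma\epsilon}{H} \le \frac{\epsilon}{H}$ between the approximate iterate and one exact application of the optimal finite-horizon value operator to the previous iterate. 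Because the (undiscounted, $\gamma \le 1$) finite-horizon value operator $\valop$ and each $\valop_\pi$ are non-expansive in $\ell_\infty$, these per-step errors do not amplify as they propagate through later iterations but simply sum, for a cumulative error of at most $H \cdot \frac{\epsilon}{H} = \epsilon$. Since exact backward induction produces the optimal non-stationary policy, this certifies that the returned policy is $\epsilon$-optimal.

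The step I expect to require the most care is the additive-error accounting across block boundaries, since, unlike the single-block setting of Lemma~\ref{lem:finitehvi_quality}, the terminal value fed into each block is itself an approximate vector carried over from the previous block. Formally I would handle this by an induction over time steps $h = H, H-1, \ldots, 1$ establishing that the error between the computed $v_h$ and the true optimal value-to-go at step $h$ is at most $(H - h)\cdot \frac{\epsilon}{H}$; the non-expansiveness of $\valop$ makes the inductive step go through, and the key point to verify is that feeding an approximate $v_0$ across a seam introduces no fresh error beyond the one $\apxvalcode$ step that produced it, so that the carried-over error is already counted in the running total rather than double-charged. With this induction in hand, both the $\epsilon$-optimality of the final policy and the overall $1-\delta$ success probability (from the earlier union bound) follow immediately.
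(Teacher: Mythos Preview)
Your proposal is correct and arrives at the same conclusion by essentially the same mechanism, though your decomposition is finer-grained than the paper's. The paper works at the block level: it invokes Lemma~\ref{lem:finitehvi_quality} as a black box to say that each of the $H/L$ calls to $\finiteHVI$ returns an $\frac{L\epsilon}{H}$-optimal policy for its $L$-step subproblem (with the previous block's output as terminal value) with failure probability $\frac{L\delta}{H}$, then sums errors and union-bounds over the $H/L$ blocks. You instead unroll all the way down to the $H$ individual $\apxvalcode$ calls, trace the per-call accuracy $\frac{\epsilon}{2H}$ and failure probability $\frac{\delta}{H}$, and accumulate per-step errors directly via non-expansiveness of the finite-horizon value operator. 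Your route is more explicit about what happens at block seams---the point that resetting the anchor $v_0$ only affects sampling variance, not the operator being approximated, and that non-expansiveness prevents double-counting the carried-over error---which the paper's terse proof elides. The paper's version is shorter because it reuses Lemma~\ref{lem:finitehvi_quality} wholesale; yours is more self-contained and makes the additive-error propagation across blocks transparent.
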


\begin{proof}
This policy error follows from the guarantees of $\finiteHVI$ given by  Lemma~\ref{lem:finitehvi_quality}. With failure probability $\frac{L\delta}{H}$, each invocation of $\finiteHVI$ returns an $\frac{L\epsilon}{H}$-optimal policy for the $L$-horizon problem using the last value vector from the previous invocation of $\finiteHVI$ as the termination value vector for the $L$-step subproblem. Taking union bound over the $H/L$ iterations yields an $\epsilon$-optimal policy with probability $1-\delta$. 
\end{proof}

\begin{lemma}[Runtime of $\varianceReducedFiniteHVI$] The running time of Algorithm~\ref{alg:variance_reduced_finite_randomized_vi} ($\varianceReducedFiniteHVI$) is
\[
\tO\left(|S|^2|A| \frac{H}{L} + |S||A| \frac{L^2M^2H^3}{\epsilon^2}\ln\left(\frac{|\states|  |\actions| H}{\delta}\right) \right)~.
\]
\end{lemma}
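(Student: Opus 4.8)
The plan is to bound the total running time by summing the cost of the $H/L$ invocations of $\finiteHVI$ that $\varianceReducedFiniteHVI$ makes, exploiting the fact that the reference point $v_0$ is refreshed at every block boundary to keep the sampling variance small. First I would account for the exact recomputation of the offset vector $x$: each call to $\finiteHVI$ computes $p_a(i)^\top v_0$ for all $i \in \states$, $a \in \actions$ from scratch in $O(|S|^2|A|)$ time, and since there are $H/L$ blocks this contributes the additive $O(|S|^2|A| \cdot H/L)$ term in the claimed bound.

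The heart of the argument is the per-iteration sampling cost. Tracing the parameter substitution $H \mapsto L$, $\epsilon \mapsto \frac{L\epsilon}{H}$, $\delta \mapsto \frac{L\delta}{H}$ through the definition of $\finiteHVI$, each of the $L$ backward-induction steps of a block invokes $\apxvalcode(v_{h+1}, v_0, x, \frac{\epsilon}{2H}, \frac{\delta}{H})$, whose cost by Lemma~\ref{lem:aprox_value_operator} is
\[
O\!\left(|S||A| \, \|v_{h+1} - v_0\|_\infty^2 \cdot \frac{H^2}{\epsilon^2} \, \ln\!\left(\frac{|S||A|H}{\delta}\right)\right).
\]
The key variance-reduction step, which I expect to be the crux, is to bound $\|v_{h+1} - v_0\|_\infty = O(LM)$ for every iterate inside a block. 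Since the horizon is undiscounted ($\gamma = 1$) and each reward lies in $[-M,M]$, a single application of the exact value operator shifts every entry by at most $M$; hence after the at most $L$ steps separating an iterate from the block boundary value $v_0$, the exact value-to-go differs from $v_0$ by at most $LM$ coordinatewise, and the accumulated approximation error is lower order given the chosen accuracy $\frac{\epsilon}{2H}$.

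Substituting this bound yields a per-step cost of $O(|S||A| \cdot \frac{L^2 M^2 H^2}{\epsilon^2} \ln(\frac{|S||A|H}{\delta}))$. Aggregating the $L$ steps within a block gives $O(|S||A| \cdot \frac{L^3 M^2 H^2}{\epsilon^2} \ln(\frac{|S||A|H}{\delta}))$, and multiplying by the $H/L$ blocks gives total sampling cost $O(|S||A| \cdot \frac{L^2 M^2 H^3}{\epsilon^2} \ln(\frac{|S||A|H}{\delta}))$; combining with the offset-recomputation term produces the stated runtime. The instructive contrast is with Lemma~\ref{lem:finitehvi_runtime}, where the single fixed reference $\vec{0}$ lets the deviation grow as large as $HM$ and forces the $H^5$ dependence, so the entire saving comes from capping the deviation at $LM$ by refreshing $v_0$ every $L$ steps. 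The main obstacle is therefore making this deviation bound fully rigorous: verifying that resetting $v_0$ to the block-boundary value genuinely re-centers the subsequent iterates and that the per-step approximation errors do not compound into the leading-order $O(LM)$ estimate.
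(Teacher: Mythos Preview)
Your proposal is correct and mirrors the paper's proof exactly: decompose into $H/L$ blocks, charge $O(|S|^2|A|)$ per block for recomputing the offsets $x$, bound $\|u-v_0\|_\infty$ by $O(LM)$ within a block, invoke Lemma~\ref{lem:aprox_value_operator} for the per-step cost, and aggregate over $L$ steps and $H/L$ blocks. The paper is actually terser than you on the $O(LM)$ deviation bound---it simply asserts that the sampled variables lie in $[-LM,LM]$---and your phrasing ``a single application of the exact value operator shifts every entry by at most $M$'' is not literally true for arbitrary $v$, but it does hold along the algorithm's trajectory by non-expansiveness of $T$ (since $\|T^{k+1}(\vec 0)-T^k(\vec 0)\|_\infty \le \|T(\vec 0)-\vec 0\|_\infty \le M$), which is exactly the point you correctly identify as the one requiring care.
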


\begin{proof}
$\varianceReducedFiniteHVI$ invokes $\finiteHVI$ $\frac{H}{L}$ times. Each invocation of $\finiteHVI$, by Lemma~\ref{lem:finitehvi_runtime} takes $O(|S|^2|A|)$
 time to compute the $x \in \offsetspace$ such that $x_a(i) = p_a(i)^\top v_{0}$. The advantage that this affords is that the maximum range of the sampled random variables is $[-LM, LM]$ in each invocation of $\finiteHVI$. Therefore, each iteration within $\finiteHVI$, by appealing to Lemma~\ref{lem:aprox_value_operator} takes total time 
\[
\tO\left(|S||A| \frac{L^2M^2H^2}{\epsilon^2}\ln\left(\frac{|\states|  |\actions| H}{\delta}\right) \right)~.
\]
Aggregating over $L$ iterations in each invocation of $\finiteHVI$ yields a runtime of 
\[
\tO\left(|S||A| \frac{L^3M^2H^2}{\epsilon^2}\ln\left(\frac{|\states|  |\actions| H}{\delta}\right) \right)~.
\]
for each invocation of $\finiteHVI$. 
Now, $\finiteHVI$ is invoked exactly $\frac{H}{L}$ times, so that the running time of $\varianceReducedFiniteHVI$, without the cost of computing the offset vectors $x_a(i)$, is: 
\[
\tO\left(|S||A| \frac{L^2M^2H^3}{\epsilon^2}\ln\left(\frac{|\states|  |\actions| H}{\delta}\right) \right)~.
\]
When we account for the cost of computing the offset vectors $x_a(i)$, we conclude that the total runtime of $\varianceReducedFiniteHVI$ is 
\[
\tO\left(|S|^2|A| \frac{H}{L} + |S||A| \frac{L^2M^2H^3}{\epsilon^2}\ln\left(\frac{|\states|  |\actions| H}{\delta}\right) \right)~.
\]
\bigskip
\end{proof}

It is important to pick a good choice for $L$. First, note that $\varianceReducedFiniteHVI$ is an improvement over $\finiteHVI$ in the regime where the running time of $\finiteHVI$ is super linear in $|S|^2|A|$, since the recomputations in the variance reduced version, $\varianceReducedFiniteHVI$, incur a $|S|^2|A| \frac{H}{L}$ pre-processing runtime. So if $\epsilon$ is large enough that $\finiteHVI$ is sublinear in $|S|^2|A|$, it is better to choose $L = 1$ (which essentially degenerates to $\finiteHVI$). When $\epsilon$ is very small, then when invoking $\varianceReducedFiniteHVI$, we can pick $L$ to equilibrate between the two terms in the runtime. Equating the two expressions yields that $L^3 = \frac{\epsilon^2|S|}{H^2M^2}$. Therefore we conclude with the following theorem.

\begin{theorem}
$\varianceReducedFiniteHVI$, Algorithm~\ref{alg:variance_reduced_finite_randomized_vi} can return an $\epsilon$-optimal policy with probability $1-\delta$ in running time
\[
\tO\left(\frac{|S|^{5/3}|A| H^{5/3}M^{2/3}}{\epsilon^{2/3}}\ln\left(\frac{|\states|  |\actions| H}{\delta}\right) \right)~.
\]
\end{theorem}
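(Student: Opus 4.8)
The plan is to derive the theorem as a corollary of the two immediately preceding lemmas by optimizing the free recompute threshold $L$. Correctness requires no new argument: the quality lemma already guarantees that $\varianceReducedFiniteHVI$ returns an $\epsilon$-optimal policy with probability $1-\delta$ for \emph{every} choice of $L$, so the entire content of the theorem is the running-time bound. I would state this up front and then focus solely on the runtime.

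First I would take the runtime lemma,
\[
\tO\left(|S|^2|A| \frac{H}{L} + |S||A| \frac{L^2M^2H^3}{\epsilon^2}\ln\left(\frac{|\states|  |\actions| H}{\delta}\right) \right),
\]
and view it as a function of $L$. The first term decreases in $L$ (fewer offset recomputations across the $H/L$ invocations of $\finiteHVI$) while the second increases in $L$ (the sampled range grows to $[-LM, LM]$ within each invocation), so the sum is minimized near the point where the two terms coincide. Setting $|S|^2|A|H/L$ equal to $|S||A| L^2 M^2 H^3/\epsilon^2$ and solving for $L$ gives $L^3 = |S|\epsilon^2/(M^2H^2)$, i.e.
\[
L^* = \frac{|S|^{1/3}\epsilon^{2/3}}{M^{2/3}H^{2/3}} ~.
\]
As a sanity check I would substitute $L^*$ into each term separately and confirm that both collapse to the identical expression $|S|^{5/3}|A|H^{5/3}M^{2/3}\epsilon^{-2/3}$; the logarithmic factor carried by the sampling term then supplies exactly the $\ln(|\states||\actions|H/\delta)$ appearing in the statement, and the whole thing is the claimed bound.

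The one genuine subtlety—and where I expect to spend the only real care—is that $L$ must be a positive integer with $1 \le L \le H$, whereas $L^*$ need satisfy neither. The remedy is routine: set $L = \max\{1,\min\{H,\lceil L^* \rceil\}\}$. When $L^* \in [1,H]$, rounding perturbs each term by at most a constant factor, which $\tO(\cdot)$ absorbs, so the bound holds unchanged. When $L^* < 1$ or $L^* > H$ the balanced point is infeasible and one term dominates throughout; taking $L=1$ (degenerating to $\finiteHVI$) or $L=H$ respectively then gives a running time no larger than the stated bound, consistent with the earlier remark that $L=1$ is the right choice once $\finiteHVI$ is already sublinear in $|S|^2|A|$. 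Dispatching these boundary regimes cleanly, rather than the central algebra, is the only point that requires attention.
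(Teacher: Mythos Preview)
Your approach is exactly the paper's: the paper's entire argument is the paragraph immediately preceding the theorem, which equates the two terms in the runtime lemma, solves $L^3 = |S|\epsilon^2/(M^2H^2)$, and asserts the result. Your write-up of the quality lemma plus substitution of $L^*$ into the runtime lemma reproduces this verbatim.

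Your additional discussion of the integrality and range constraints on $L$ goes beyond what the paper does, but the specific claim you make there is not correct. The stated bound $|S|^{5/3}|A|H^{5/3}M^{2/3}/\epsilon^{2/3}$ is precisely the value of the two-term expression \emph{at} the balancing point $L^*$; since one term is decreasing and the other increasing in $L$, clamping to $L=1$ when $L^*<1$ (or to $L=H$ when $L^*>H$) makes the sum \emph{larger}, not smaller. Concretely, when $L^*<1$ (equivalently $|S|\epsilon^2 < M^2H^2$) the $L=1$ runtime is dominated by $|S||A|M^2H^3/\epsilon^2$, which exceeds the stated bound by a factor of $(M^2H^2/(|S|\epsilon^2))^{2/3}>1$; the $L^*>H$ case is symmetric. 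The paper simply does not address these edge regimes---it is implicitly working in the range where the balanced $L^*$ lies in $[1,H]$---so your rounding argument is fine and sufficient in that range, but the boundary claims should be dropped rather than asserted.
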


\section{Summary and Remarks}

In this paper we developed a class of new value iteration-based algorithms that employ the variance reduction to achieve improved running times for solving DMDPs. These algorithms compute approximately optimal policies in nearly linear and even (sometimes) sublinear running times and improve upon the previous best known randomized algorithms for solving DMDP in terms of the dependence on $\gamma,\S$ and $\A$. We also extend our ideas to finite horizon MDP. For future research we hope to improve our algorithms' dependence on the discount factor. We also believe that our method can be generalized to settings when different structural knowledge is available, e.g., when there is a known upper bound on the diameter of the process, and when the process is known to be ergodic. We also plan to investigate the use of variance reduction techniques to speed up the policy iteration method and primal-dual methods.

\section{Acknowledgements}
We would like to thank Daochen Wang for pointing out an error in the previous version of our work regarding additive terms in our sublinear complexity bounds and the range of $\epsilon$ for which these results hold. We would like to thank Lin Yang for his generous help, meticulous proofreading, constructive suggestions, and useful discussions. We would also like to thank Moses Charikar for his support and encouragement in pursuing this research topic, and Ben Van Roy for excellent courses on Markov Decision Processes.
\bibliographystyle{alpha}
\bibliography{mdp,mdp2,nips_1705}

\appendix

\section{Value Iteration Facts}
\label{sec:value_facts}

Here we review some basic facts about value iteration we use in our analysis. These are well established in the MDP literature but we include it here for completeness.

\lemcontract*
Please see \cite{bertsekas2013abstract} and \cite{puterman2014markov} for more details on basic facts for DMDP.

\lempolbound*

\begin{proof} By the fact that $T_\pi(v_\pi) = v_\pi$ we have:
\begin{align*}
\|T_\pi(u) - v_\pi\|_\infty &= \|T_\pi(u) - T_\pi(v_\pi)\|_\infty \\
&= \|T_\pi(u) - T_\pi(T_\pi(u)) + T_\pi(T_\pi(u)) - T_\pi(v_\pi)\|_\infty \\
&\leq \|T_\pi(u) - T_\pi(T_\pi(u))\|_\infty + \|T_\pi(T_\pi(u)) - T_\pi(v_\pi)\|_\infty \\
&\leq \gamma \|T_\pi(u) - u\|_\infty + \gamma \|T_\pi(u) - v_\pi\|_\infty ~.
\end{align*}
Therefore, 
$(1-\gamma) \|T_\pi(u) - v_\pi\|_\infty \leq \gamma \|T_\pi(u) - u\|_\infty$
and so we conclude that 
\[
\|T_\pi(u) - v_\pi\|_\infty \leq \frac{\gamma}{1-\gamma} \|T_\pi(u) - u\|_\infty ~.
\]
\end{proof}

\lemvalmon*

\begin{proof}
Note that 
\[
T(u)_{i} = \max_{a \in \actions} \left[ r_{a}(i) + \gamma \cdot p_a(i)^\top u \right] \leq \max_{a \in \actions} \left[ r_{a}(i) + \gamma \cdot p_a(i)^\top v \right] = T(v)_{i} ~,
\]
where the inequality holds because $u \leq v$, and $\gamma$ and $p_a(i)$ are non-negative for all $a \in \actions, i \in \states$. 
\end{proof}

\newcommand{\OPT}{\mathrm{OPT}}

\newcommand{\mproj}{\textbf{P}}
\newcommand{\ma}{\textbf{A}}
\newcommand{\mE}{\textbf{E}}
\newcommand{\mM}{\textbf{M}}
\newcommand{\ms}{\textbf{S}}
\newcommand{\norm}[1]{\|#1\|}
\newcommand{\diag}{\mathrm{diag}}
\newcommand{\poly}{\mathrm{poly}}

\section{Solving DMDPs With Interior Point Methods}
\label{sec:dmdp_with_ip}

In this section, we show how to solve DMDPs using standard linear programming machinery, i.e. interior point methods. First, we provide a fairly standard formulation of a DMDP as a linear program and show that from an approximate solutions to the linear program an approximately optimal policy can be obtained. (See Definition~\ref{def:DMDPLP} and Lemma~\ref{lem:dmdp_lp_props}). While there is some loss in quality in converting an approximate DMDP LP solutions into policies, we show how to bound the loss so that it only only affects logarithmic factors in the final running time. 

We then show how to use interior point methods to solve this linear program. 
To make it easier to apply the interior point methods we reduce approximately solving this linear program to approximately solving $\ell_1$-regression, using a technique from \cite{LeeS15}. The fastest known interior point methods for solving $\ell_1$ regression, i.e. \cite{LeeS15}, gives the running time for solving DMDPS that we provided in Section~\ref{sec:prev_work} (See Theorem~\ref{thm:dmdp_with_ip})

We start by providing and analyzing our linear program formulation of the DMDP. 

\begin{definition}[DMDP Linear Program] 
\label{def:DMDPLP}
We call the following linear program the \emph{DMDP LP}
\begin{equation}
\label{eq:optimal_lp}
\begin{array}{ll@{}ll}
\text{minimize}   & \displaystyle v^\top \vones\\
\text{subject to}& \displaystyle \ma v \geq r\\
\end{array}
\end{equation}
where $r\in\R^{(S \times A)}$ is the vector of rewards, i.e. $r_{i,a} = r_a(i)$ for all $i\in S$ and $a\in A$ and  $\ma = \mE - \gamma \mproj$ where $\mE\in\R^{(S \times A)\times S}$ is the matrix where for all $i,j\in S$ and $a\in A$ we have that the $j$-th entry of row $(i,a)$ of $\mE$ is 1 if $i = j$ and 0 otherwise, and  $\mproj\in\R^{(S \times A)\times S}$ is a matrix where for all $i\in S$ and $a\in A$ we have that row $(i,a)$ of $\mproj$ is $p_a(i)$.

We call a vector $v \in \valuespace$ an \emph{$\epsilon$-approximate DMDP LP solution} if $\ma v \geq r - \epsilon \vones$ and $v^\top \vones \leq OPT + \epsilon$ where $OPT$ is the optimal value of the DMDP LP, \eqref{eq:optimal_lp}.
\end{definition}

This DMDP LP is a standard formulation of the DMDP as a linear program. In the next lemma we prove that solving the DMDP LP is equivalent to solving a DMDP and we show that approximately optimizing the objective of the DMDP LP while meeting the constraints corresponds to values that are approximately optimal in terms of local optimality of the value operator. 

\begin{lemma}[DMDP LP Properties]
\label{lem:dmdp_lp_props} 
In the DMDP LP a vector $v \in \valuespace$ satisfies $\ma v \geq r$ if and only if $v \geq T(v)$ entry-wise. 
If additionally $v$ satisfies $v^\top \vones \leq OPT + \epsilon$ then $T(v) \leq v \leq T(v) + \epsilon \vones$. 
Consequently, $\vopt$ the solution to the DMDP is the unique minimizer of the DMDP LP.
\end{lemma}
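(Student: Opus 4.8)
The plan is to treat the three assertions in turn, building everything from the matrix definitions together with the contraction and monotonicity facts (Lemma~\ref{lem:contract} and Lemma~\ref{lem:val_is_mon}). First I would simply unfold the definition of $\ma = \mE - \gamma \mproj$: the row indexed by $(i,a)$ sends $v$ to $v_i - \gamma\, p_a(i)^\top v$, while the matching coordinate of $r$ is $r_a(i)$. Hence $\ma v \geq r$ is exactly the system $v_i \geq r_a(i) + \gamma\, p_a(i)^\top v$ over all $a \in \actions$ and $i \in \states$. Since a scalar dominates each member of a finite family if and only if it dominates their maximum, this is equivalent to $v_i \geq \max_{a \in \actions}[r_a(i) + \gamma\, p_a(i)^\top v] = T(v)_i$ for every $i$, i.e. $v \geq T(v)$ entry-wise. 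This settles the first assertion and in particular gives the left inequality $T(v) \leq v$ of the second.

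The crux is an auxiliary domination claim: every feasible $v$ (one with $v \geq T(v)$) satisfies $v \geq \vopt$ entry-wise. I would prove this by iterating $T$. Applying monotonicity (Lemma~\ref{lem:val_is_mon}) to $T(v) \leq v$ yields $T^2(v) \leq T(v)$, and by induction the iterates $T^k(v)$ form an entry-wise non-increasing sequence with $v \geq T^k(v)$ for all $k \geq 0$. By contraction (Lemma~\ref{lem:contract}), $\|T^k(v) - \vopt\|_\infty \leq \gamma^k \|v - \vopt\|_\infty \to 0$, so $T^k(v) \to \vopt$; passing to the limit in $v \geq T^k(v)$ gives $v \geq \vopt$. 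Since $T(\vopt) = \vopt$, the point $\vopt$ is itself feasible, so it is the entry-wise least feasible vector.

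From this the third assertion is immediate and also pins down the optimal value: because $\vones > 0$ and every feasible $v$ satisfies $v \geq \vopt$, we get $v^\top \vones \geq \vopt^\top \vones$, with equality forcing $v - \vopt = 0$ (a nonnegative vector with zero coordinate sum vanishes). Thus $\vopt$ is the unique minimizer and $OPT = \vopt^\top \vones$. Finally, for the second assertion I would combine feasibility with the objective bound: $v \geq \vopt$ and $v^\top \vones \leq OPT + \epsilon = \vopt^\top \vones + \epsilon$ give $\sum_{i}(v_i - \vopt_i) \leq \epsilon$ with every summand nonnegative, hence $v_i - \vopt_i \leq \epsilon$ for each $i$, i.e. $v \leq \vopt + \epsilon \vones$. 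Since $\vopt \leq v$, monotonicity gives $\vopt = T(\vopt) \leq T(v)$, so $v \leq \vopt + \epsilon \vones \leq T(v) + \epsilon \vones$, which together with $T(v) \leq v$ yields $T(v) \leq v \leq T(v) + \epsilon \vones$.

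The main obstacle is the domination claim $v \geq \vopt$; everything else is bookkeeping with the definitions. The one genuine subtlety there is justifying the passage to the limit, which I would handle by coupling the monotone convergence of the iterates $T^k(v)$ with the fact that contraction forces their limit to be the unique fixed point $\vopt$. Note also the dependency order: the bound $OPT = \vopt^\top \vones$ coming from the third assertion is what I use to turn the objective gap in the second assertion into the entry-wise estimate, so I would present the domination claim and the third assertion before finishing the second.
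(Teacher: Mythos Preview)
Your proof is correct and takes a genuinely different route from the paper. The paper handles the second assertion by a local perturbation: it selects the coordinate $i$ where the slack $v_i - T(v)_i$ is largest, decreases $v_i$ by that amount $\alpha$ to obtain $w$, verifies via monotonicity of $T$ that $w$ remains feasible, and then reads off $\alpha \leq \epsilon$ from $w^\top\vones \geq OPT$; the third assertion then falls out by taking $\epsilon = 0$ (any optimizer must satisfy $v = T(v)$ and hence equal $\vopt$). You instead establish a global domination claim first---every feasible $v$ satisfies $v \geq \vopt$, proved by iterating $T$ and invoking contraction---and derive both remaining assertions from it. Your route uses one extra ingredient, the contraction Lemma~\ref{lem:contract}, but in exchange yields the stronger intermediate fact that $\vopt$ is the entry-wise minimum of the feasible region (and hence also $v \leq \vopt + \epsilon\vones$, not merely $v \leq T(v) + \epsilon\vones$). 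The paper's perturbation argument is slightly more self-contained---no limit is needed, and the second assertion does not rely on first identifying $OPT$ with $\vopt^\top\vones$---while your dependency order ``domination $\Rightarrow$ uniqueness $\Rightarrow$ $\epsilon$-bound'' is arguably more transparent.
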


\begin{proof}
Note that $\ma v \geq r$ is equivalent to $v_{i} - \gamma\cdot p(i,a)^{\top}v \geq r_{i}(a)$ or  $v_{i} \geq r_{i}(a) +  \gamma\cdot p(i,a)^{\top}v$ for all $i \in \states$ and $a \in \actions$. Since $[T(v)]_i = \max_{a \in \actions} [r_{i}(a) +  \gamma\cdot p(i,a)^{\top}v]$ we have $\ma v \geq r$ if and only if $T(v) \leq v$.

Now suppose $v$ additionally satisfies $v^\top \vones \leq OPT + \epsilon$. Let $i \in [n]$ be a coordinate such that $[T(v) - v]_i$ is maximal. Let $\alpha = [v - T(v)]_i$ and let $w$ be the same vector as $v$ where coordinate $i$ is decreased by $\alpha$. Since $v \geq T(v)$ we know $\alpha$ is positive and see that $T(w) \geq w$ as well by our choice of $\alpha$ (since $T$ is monotonic in $v$). Consequently $\ma w \geq r$ and $w^\top \vones = v^\top \vones - \alpha \leq OPT + \epsilon - \alpha$. By the definition of OPT we therefore have that $\alpha \leq \epsilon$ and $v \leq T(v) + \epsilon \vones$.

Consequently, any solution $v_{lp}$ to the DMDP LP must satisfy $v_{lp} = T(v_{lp})$ and therefore be the unique optimal values of the DMDP.
\end{proof}

Using Lemma~\ref{lem:dmdp_lp_props} we show that given any $\epsilon$-approximate solution to the DMDP LP we can quickly compute an $O(\epsilon |S| (1 - \gamma)^{-2})$-approximate policy simply by computing the best actions with respect to these approximate values. 

\begin{lemma}[Approximate Policies from Approximate LP Solution]
\label{lem:apx_policy}
If $v$ is an $\epsilon$-approximate DMDP LP solution and if $\pi \in \policyspace$ is defined with $\pi_i = \argmax_{a \in \actions} r_a + \gamma \cdot p(i, a)^\top v$ for all $i \in \states$ then $\pi$ is an $8\epsilon |S| (1 - \gamma)^{-2}$-optimal policy. 
\end{lemma}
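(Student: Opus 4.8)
The plan is to first control the $\ell_\infty$ distance $\|v - \vopt\|_\infty$ between the approximate LP solution and the optimal values, and then convert this value guarantee into a policy guarantee using the contraction and policy-bound lemmas. First I would translate the approximate feasibility $\ma v \geq r - \epsilon\vones$ into operator form: reading off row $(i,a)$ gives $v_i + \epsilon \geq r_a(i) + \gamma p_a(i)^\top v$, and maximizing over $a$ yields $T(v) \leq v + \epsilon\vones$. This is the approximate analogue of the exact characterization $\ma v \geq r \Leftrightarrow T(v) \leq v$ from Lemma~\ref{lem:dmdp_lp_props}.

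The lower bound on $v$ comes from a shift. Set $u = v + \frac{\epsilon}{1-\gamma}\vones$. Since adding a constant $c\vones$ to the input of $T$ adds exactly $\gamma c\vones$ to the output (because $p_a(i)^\top\vones = 1$), a one-line computation gives $T(u) \leq u$, i.e. $u$ is exactly LP-feasible. Iterating $T$ from $u$ and combining monotonicity (Lemma~\ref{lem:val_is_mon}) with the convergence $T^k(u) \to \vopt$ guaranteed by contraction (Lemma~\ref{lem:contract}) shows $u \geq \vopt$ entrywise, hence $v \geq \vopt - \frac{\epsilon}{1-\gamma}\vones$.

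The main obstacle is the upper bound on $v$, because the optimality hypothesis $v^\top\vones \leq OPT + \epsilon$ only controls the \emph{sum} of coordinates, whereas the policy conversion needs control in $\ell_\infty$. The bridge is again the feasible point $u$: since $u$ is feasible, $u^\top\vones \geq OPT = \vopt^\top\vones$, and I can also upper bound $u^\top\vones \leq OPT + \epsilon + \frac{\epsilon|S|}{1-\gamma}$. Because $u \geq \vopt$ entrywise, every coordinate gap $u_i - \vopt_i$ is nonnegative, so each one is at most their total $u^\top\vones - \vopt^\top\vones \leq \epsilon + \frac{\epsilon|S|}{1-\gamma}$; this is exactly the step that turns an $\ell_1$ gap into an $\ell_\infty$ bound. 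Undoing the shift gives a matching upper bound on $v$, and combining the two sides yields $\|v - \vopt\|_\infty \leq \beta$ with $\beta = \frac{2\epsilon|S|}{1-\gamma}$ after simplification.

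Finally I would convert the value bound to the claimed policy bound. Since $\pi$ is greedy for $v$ we have $T_\pi(v) = T(v)$, and contraction gives $\|T(v) - \vopt\|_\infty \leq \gamma\beta$, so $\|T_\pi(v) - v\|_\infty \leq (1+\gamma)\beta$ by the triangle inequality. Lemma~\ref{lem:pol_bound_simple} then bounds $\|T_\pi(v) - v_\pi\|_\infty \leq \frac{\gamma(1+\gamma)}{1-\gamma}\beta$, and a last triangle inequality $\|\vopt - v_\pi\|_\infty \leq \|\vopt - T_\pi(v)\|_\infty + \|T_\pi(v) - v_\pi\|_\infty \leq \frac{2\gamma}{1-\gamma}\beta$ gives $\|\vopt - v_\pi\|_\infty \leq \frac{4\gamma\epsilon|S|}{(1-\gamma)^2} \leq \frac{8\epsilon|S|}{(1-\gamma)^2}$, which is the asserted bound (in fact with constant to spare).
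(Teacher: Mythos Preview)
Your proof is correct and follows essentially the same route as the paper: shift $v$ by $\frac{\epsilon}{1-\gamma}\vones$ to obtain an exactly feasible point, use the LP objective bound to control the $\ell_\infty$ distance to $\vopt$, and then convert to a policy guarantee via contraction and Lemma~\ref{lem:pol_bound_simple}. The only organizational difference is in the middle step: the paper applies Lemma~\ref{lem:dmdp_lp_props} to the shifted point $w$ to bound $\|w-T(w)\|_\infty$ and then uses contraction to reach $\|v-\vopt\|_\infty$, whereas you first establish $u\geq\vopt$ by monotone iteration and then exploit the coordinate-wise nonnegativity of $u-\vopt$ to turn the $\ell_1$ objective gap into an $\ell_\infty$ bound directly---a slightly cleaner path that even yields a marginally better constant.
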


\begin{proof}
Note that $\ma \vones = (1 - \gamma) \vones$ and therefore $w \defeq v + \frac{\epsilon}{1 - \gamma} \vones$ satisfies $\ma w \geq r$. Since 
\[
w^\top \vones 
= v^\top \vones + \frac{\epsilon |S|}{(1 - \gamma)}
\leq OPT + \frac{2 \epsilon |S|}{(1 - \gamma)}
\]
Lemma~\ref{lem:dmdp_lp_props} yields 
$
T(w) \leq w \leq T(w) + 2C \vones
$
for $C \defeq \epsilon |S| (1 - \gamma)^{-1}$ and $\norm{w - T(w)}_\infty \leq 2 C$. Since clearly $\norm{v - w}_\infty \leq C$ and $\norm{T(v) - T(w)}_\infty \leq C$ then by triangle inequality we have that $\norm{v - T(v)}_\infty \leq 4 C$. 

Next, triangle inequality yields that 
$
\norm{v - \vopt} \leq \norm{T(v) - v} + \norm{T(v) - \vopt}
$
and since $T(\vopt) = \vopt$ by the contraction property we have $\norm{T(v) - \vopt}_\infty \leq \gamma \norm{v - \vopt}_\infty$ and combining these facts yields that $\norm{v - \vopt} \leq 4C \gamma (1 - \gamma)^{-1}$.
However, clearly by the definition of $\pi$ it is the case that $T(v) = T_\pi(v)$ and by Lemma~\ref{lem:pol_bound_simple} we have $\norm{T(v) - v_\pi} \leq \gamma (1 - \gamma)^{-1} \cdot 4C$. Consequently,
\[
\norm{v_\pi - \vopt}_\infty
\leq \norm{T(v) - v_\pi}_\infty + \norm{T(v) - \vopt}_\infty
\leq 8C\gamma(1 - \gamma)^{-1} 
\]
where we used that $\norm{T(v) - \vopt}_\infty \leq \gamma \norm{v - \vopt}_\infty \leq 4C\gamma(1 - \gamma)^{-1}$.
\end{proof}

With Lemma~\ref{lem:apx_policy} established, the goal in the rest of this section is to show how to produce an approximate DMDP LP solution quickly. While it may be possible to adapt the fastest generic linear programming algorithms (e.g. \cite{lee2015efficient}) directly to the DMDP LP formulation, for a simpler proof we reduce the original linear program to computing $\epsilon$-approximate $\ell_1$ regression. We do this so that we can easily invoke the following result of \cite{lee2015efficient} to obtain our running times.

\begin{corollary}[$\ell_1$ Regression Running Time - Corollary~24 of \cite{lee2015efficient}, Adapted] 
\label{cor:l1_regress}
Given $\mM \in \R^{n \times d}$ and $b \in \R^n$ then there is an algorithm which with high probability in time $\otilde(n d^{1.5} \log(1/\epsilon))$ finds $x \in \R^d$ with
\[
\|\mM x - b\|_1 \leq \min_{y \in \R^d} \|\mM y - b\|_1 + \epsilon \|b\|_1 ~.
\]
\end{corollary}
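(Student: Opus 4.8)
The plan is to reduce the statement to the general linear-programming solver of \cite{lee2015efficient} by expressing $\ell_1$ regression as a linear program and then checking that its accuracy and running-time guarantees specialize to the claimed form. First I would introduce auxiliary variables $t \in \R^n$ and rewrite $\min_x \|\mM x - b\|_1$ as the linear program $\min_{x,t} \vones^\top t$ subject to $\mM x - b \leq t$ and $-(\mM x - b) \leq t$. This LP has $n + d$ variables and $2n$ inequality constraints, and its optimum equals $\OPT \defeq \min_y \|\mM y - b\|_1$; moreover any feasible $(x,t)$ satisfies $\|\mM x - b\|_1 \leq \vones^\top t$, so an objective value within $\epsilon\|b\|_1$ of $\OPT$ immediately yields a vector $x$ with $\|\mM x - b\|_1 \leq \OPT + \epsilon \|b\|_1$.

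The key structural observation is that, although the constraint matrix is $2n \times (n+d)$, the $t$-block is diagonal, so the Newton systems arising along the central path can be partially eliminated down to solving a $d \times d$ system of the form $\mM^\top D \mM z = w$ for a diagonal reweighting $D$. Invoking the weighted path-following and inverse-maintenance machinery of \cite{lee2015efficient}, the effective dimension governing the iteration count is $d$ rather than $n$, giving $\otilde(\sqrt{d})$ iterations, while the amortized per-iteration cost of maintaining and applying the inverse is $\otilde(nd)$ (dominated by touching the $nd$ entries of $\mM$). Multiplying these yields the claimed $\otilde(n d^{1.5} \log(1/\epsilon))$ running time, where the $\log(1/\epsilon)$ factor reflects the geometric convergence of the central-path method to a target duality gap and the randomization in inverse maintenance accounts for the ``with high probability'' guarantee.

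It then remains only to handle the adaptation from their stated accuracy measure to the additive form $\epsilon\|b\|_1$. Since $\|b\|_1$ is computable in $O(n)$ time, I would run the path-following method to duality gap at most $\epsilon \|b\|_1$; by feasibility and the bookkeeping in the first paragraph this produces $(x,t)$ with $\vones^\top t \leq \OPT + \epsilon\|b\|_1$ and hence the desired $x$. Because the accuracy enters the running time only through $\log(1/\epsilon)$, rescaling the internal accuracy parameter by $\|b\|_1$ leaves the asymptotic bound unchanged.

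The main obstacle is not the reduction or the error conversion, both of which are routine, but rather justifying the $\otilde(\sqrt{d})$ iteration count together with the amortized $\otilde(nd)$ per-iteration cost. These are precisely the technical contributions of \cite{lee2015efficient} (Lewis-weight barriers combined with efficient inverse maintenance), and a self-contained proof would have to reproduce that analysis. Since the statement is explicitly an adaptation of their Corollary~24, the honest approach is to cite those bounds directly and verify only the feasibility and accuracy translation above, rather than re-deriving the interior point guarantees.
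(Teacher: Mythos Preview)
Your proposal is correct in spirit, but you should be aware that the paper does not actually prove this statement at all: it is stated as a direct citation of Corollary~24 from \cite{lee2015efficient} (with the word ``Adapted'' signalling only the cosmetic change to the additive $\epsilon\|b\|_1$ error form), and is used purely as a black box. So there is no ``paper's own proof'' to compare against.

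That said, your sketch is exactly the right way to unpack the citation if one wanted to: the LP reformulation with auxiliary $t$-variables, the observation that the $t$-block is diagonal so the Newton system reduces to a $d\times d$ reweighted normal-equations solve, and the appeal to the Lewis-weight barrier and inverse-maintenance results of \cite{lee2015efficient} for the $\otilde(\sqrt{d})$ iteration count and $\otilde(nd)$ amortized cost. Your final paragraph already identifies the only substantive point --- that the heavy lifting is entirely in \cite{lee2015efficient} and cannot reasonably be reproduced here --- which is precisely why the paper simply cites the result. The accuracy rescaling you describe is also the correct way to get the stated additive $\epsilon\|b\|_1$ form from a duality-gap guarantee.
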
 

To convert our DMDP LP to such an optimization problem we turn our constraint set into a symmetric one. First we modify the DMDP LP to have two sided constraints. We know that $\vopt$ satisfies $\|\vopt\|_\infty \leq \frac{M}{1 - \gamma}$ and therefore for all $i \in \states$ and $a \in \actions$ we have
\[
[\ma \vopt]_{(i,a)} = r_{(i, a)} + \gamma \cdot p_{(i, a)}^\top \vopt
\leq M  + \gamma \cdot \frac{M}{1 - \gamma}  
\leq \frac{2M}{1 - \gamma} ~.
\]
Furthermore, since we know that $\vopt$ is the optimizer of the DMDP LP by Lemma~\ref{lem:dmdp_lp_props} we therefore see that solving the DMDP LP is equivalent to minimizing $v^\top \vones$ under the constraint that $v \in P$ where 
\[
P=\left\{v \in \valuespace : 
r
\leq \ma v 
\leq \frac{2M}{1-\gamma}\vones
\right\}.
\]
Moreover, we can center this feasible region; for $s=\frac{1}{2}(\frac{2M}{1-\gamma}\vones - r)$
and $b=\frac{1}{2}(\frac{2M}{1-\gamma} \vones + r )$ we have 
\[
P=\{v \in \valuespace :-s\leq\ma v-b\leq s\} ~.
\]
Furthermore, if we let $\ms=\diag(s)$, i.e. the diagonal matrix with $s$ on the diagonal then 
\[
P=\left\{v \in \valuespace :\norm{\ms^{-1}\ma v-\ms^{-1}b}_{\infty}\leq1 \right\}
\]

Consequently, we wish to minimize $v^\top \vones$ under the constraint that $\norm{\ms^{-1}\ma v-\ms^{-1}b}_{\infty}\leq 1$. To turn this into a $\ell_1$ regression problem we use a technique from \cite{lee2015efficient} that $\ell_\infty$ constraints can be turned into $\ell_1$ objectives through the following simple lemma.

\begin{lemma}
\label{lem:l1_trick}
For all vectors $x \in R$ we have
\[
|x - 1| + |x + 1|
= 2 \cdot \max \{|x| , 1\}
\]
\end{lemma}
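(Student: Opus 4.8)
The plan is to prove this pointwise identity by a short case analysis on the magnitude of $x$, where I read the statement's ``$x \in R$'' as $x \in \mathbb{R}$ (a real scalar), since the identity is then applied coordinatewise in the surrounding regression reduction. First I would exploit symmetry: both sides are even functions of $x$. Indeed, replacing $x$ by $-x$ sends $|x-1| + |x+1|$ to $|{-x}-1| + |{-x}+1| = |x+1| + |x-1|$, which is unchanged, and $\max\{|x|,1\}$ is unchanged since $|{-x}| = |x|$. Hence it suffices to verify the identity for $x \geq 0$, which halves the work.

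For $x \geq 0$ I would split into the two cases determined by how $|x| = x$ compares to $1$. When $x \geq 1$ we have $x - 1 \geq 0$ and $x + 1 > 0$, so $|x-1| + |x+1| = (x-1)+(x+1) = 2x = 2|x| = 2\max\{|x|,1\}$, using $|x| = x \geq 1$. When $0 \leq x < 1$ we have $x - 1 < 0$ and $x + 1 > 0$, so $|x-1| + |x+1| = (1-x)+(x+1) = 2 = 2\max\{|x|,1\}$, using $|x| = x < 1$. In both subcases the two sides agree, and combined with the evenness reduction this establishes the identity for all real $x$.

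There is no genuine obstacle here; the only thing to be careful about is bookkeeping, namely that the case boundary $x = 1$ is covered by exactly one case and that the reduction to $x \geq 0$ correctly includes the endpoint $x = 0$. I would present the argument in these three short cases rather than trying to manipulate the absolute values algebraically, since the case split is the cleanest route and leaves no analytic subtlety.
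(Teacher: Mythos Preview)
Your proof is correct and follows essentially the same approach as the paper: a reduction by evenness to $x \geq 0$, followed by the two-case split at $x = 1$. The paper just orders these steps in reverse (cases first, then symmetry), but the content is identical.
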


\begin{proof}
If $x \in [0, 1)$ then 
\[
|x - 1| + |x + 1| = 1 - x + 1 + x = 1 
\]
and if $x \notin [1, \infty)$ then 
\[
|x - 1| + |x + 1| = |x| - 1| + |x| + 1 = 2 |x|
\]
The result follows by symmetry for the case when $x < 0$.
\end{proof}

From this lemma we see that to penalize infeasibility, ie $v$ such that $\|\ms^{-1} \ma v - \ms^{-1} b\|_\infty > 1$, we can simply add a term of the form 
 $\|\ms^{-1} \ma v - \ms^{-1} b - \vones\|_1 + \|\ms^{-1} \ma v - \ms^{-1} b + \vones\|_1$ to the objective. We give the following $\ell_1$ regression problem:

\begin{definition}[DMDP $\ell_1$ Regression]
For a given DMDP and a parameter $\alpha$ (to be set later) we call the following $\ell_1$ regression problem the \emph{DMDP $\ell_1$ problem}\footnote{This equality below is an immediate consequence of Lemma~\ref{lem:l1_trick}}
\begin{align}
\label{eq:new_formulation_lp}
\min_{v} f(v) 
&= \left| \alpha \left(\frac{\S M }{1 - \gamma} + \vones^\top v \right) \right|
+ \left\| \ms^{-1} \ma v - \ms^{-1} b - \vones \right\|_1
+ \left\|\ms^{-1} \ma v - \ms^{-1} b + \vones \right\|_1
\\
&= \left| \alpha \left(\frac{\S M }{1 - \gamma} - \vones^\top v\right) \right| + 2\sum_{i \in \states} \max  \left\{
|
[\ms^{-1} \ma v - \ms^{-1} b]_i
| , 1
\right\}
\end{align}
where $s=\frac{1}{2}(\frac{2M}{1-\gamma}\vones - r)$
and $b=\frac{1}{2} (\frac{2M}{1-\gamma} \vones + r )$, and $\ms = \diag(s)$. We let $\vopt_f$ denote the optimal solution to this $\ell_1$ regression problem and we call $v$ an $\epsilon$-optimal solution to $f$ if $f(v) \leq f(\vopt_f) + \epsilon$.
\end{definition}

Note that our $\epsilon$-approximate solution to \eqref{eq:new_formulation_lp} may be infeasible for the original linear program, \eqref{eq:optimal_lp}, and $ 2\sum_{i} \max \left\{
| 
[\ms^{-1} \ma v - \ms^{-1} b]_i | , 1
\right\}$ characterizes the magnitude of our constraint violation for the original MDP problem, $\eqref{eq:optimal_lp}$. However, by carefully choosing $\alpha$ we can show that we can trade-off violating these constraints with lack of optimality in the objective and therefore $\epsilon$-approximate solutions to \eqref{eq:new_formulation_lp} still give approximate DMDP-LP solutions. We quanitfy this below.

\begin{lemma}
\label{lem:l1_regress_qual}
Suppose that $v$ is an $\epsilon$-approximate solution to the DMDP $\ell_1$ problem then $v$ is an $\epsilon'$-approximate DMDP LP solution for
\[
\epsilon' \leq \max \left\{
\frac{\epsilon}{\alpha}
~ , ~
\frac{2 \alpha |S| M^2}{(1-\gamma)^2} 
+ \frac{\epsilon M}{(1 - \gamma)}
\right\} ~.
\]
\end{lemma}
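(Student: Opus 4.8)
The plan is to evaluate the $\ell_1$ objective at the true LP optimizer $\vopt$, where the penalty terms take an explicit value, and then to read an optimality bound and a feasibility bound off the near-optimality of $v$. Write $g(w) \defeq \frac{\S M}{1-\gamma} + \vones^\top w$ and, for the scaled residual, $y \defeq \ms^{-1}(\ma v - b) \in \R^{\states\times\actions}$, indexed by rows $j$ (one per state-action pair). By Lemma~\ref{lem:l1_trick} the objective is $f(v) = \alpha\,|g(v)| + 2\sum_j \max\{|y_j|,1\} = \alpha\,|g(v)| + 2\S\A + 2\sum_j \max\{|y_j|-1,0\}$, where the final sum is exactly the total (scaled) infeasibility: row $j$ violates $\ma v \geq r$ precisely when $y_j < -1$, and the unscaled violation is $[r - \ma v]_j = s_j\max\{-1-y_j,0\} \le s_j\max\{|y_j|-1,0\}$.

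First I would compute $f(\vopt)$. Since $\|\vopt\|_\infty \le \frac{M}{1-\gamma}$, the calculation preceding the DMDP $\ell_1$ problem gives $r \le \ma\vopt \le \frac{2M}{1-\gamma}\vones$, i.e. $\vopt$ lies in the centered region, so $|y_j(\vopt)| \le 1$ for every $j$; moreover $g(\vopt) = \frac{\S M}{1-\gamma} + \OPT \ge 0$ because $\vones^\top\vopt \ge -\S\frac{M}{1-\gamma}$. Hence $f(\vopt) = \alpha\,g(\vopt) + 2\S\A$. Because $\vopt_f$ minimizes $f$ and $v$ is $\epsilon$-optimal, $f(v) \le f(\vopt_f) + \epsilon \le f(\vopt) + \epsilon$; cancelling the common baseline $2\S\A$ yields the master inequality
\begin{equation}
\alpha\,|g(v)| + 2\sum_j \max\{|y_j|-1,0\} \;\le\; \alpha\,g(\vopt) + \epsilon . \tag{$\star$}
\end{equation}

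From $(\star)$ both halves of the conclusion follow. For optimality I would drop the nonnegative penalty sum and use $g(v) \le |g(v)|$ to get $g(v) \le g(\vopt) + \epsilon/\alpha$, i.e. $\vones^\top v \le \OPT + \epsilon/\alpha$; this is the first term of the claimed maximum. For feasibility I would instead retain the penalty sum, and to bound the right-hand budget I first extract from $(\star)$ the two-sided bound $|g(v)| \le g(\vopt) + \epsilon/\alpha$, hence $g(v) \ge -(g(\vopt)+\epsilon/\alpha)$. Substituting $|g(v)| \ge g(v)$ into $(\star)$ then gives $2\sum_j\max\{|y_j|-1,0\} \le \alpha\,(g(\vopt) - g(v)) + \epsilon \le 2\alpha\,g(\vopt) + 2\epsilon$, and with $\OPT \le \S\frac{M}{1-\gamma}$, so $g(\vopt) \le 2\S\frac{M}{1-\gamma}$, I obtain $\sum_j\max\{|y_j|-1,0\} \le 2\alpha\S\frac{M}{1-\gamma} + \epsilon$. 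Bounding a single row's violation by the whole sum and using $s_j \le \frac{M}{1-\gamma}$ (from $r_j \ge -M$ and $\gamma \in (0,1)$) converts this to $[r - \ma v]_j \le \frac{2\alpha\S M^2}{(1-\gamma)^2} + \frac{\epsilon M}{1-\gamma}$ for every $j$, the second term of the maximum. Taking the larger of the two bounds shows $v$ is $\epsilon'$-approximate with $\epsilon'$ at most the stated maximum.

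The delicate part is the feasibility half. Unlike optimality, where the penalty term is simply discarded, here the penalty budget on the right of $(\star)$ is governed by $g(\vopt) - g(v)$, so I must first control how negative $g(v)$ (equivalently $\vones^\top v$) can be before any infeasibility bound is available --- this is where the two-sided use of the absolute value $|g(v)|$ is essential. The remaining care is bookkeeping: carrying the diagonal rescaling $\ms$ back through $s_j \le \frac{M}{1-\gamma}$ so that the constants collapse to exactly $\frac{2\alpha\S M^2}{(1-\gamma)^2} + \frac{\epsilon M}{1-\gamma}$.
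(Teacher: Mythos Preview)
Your overall strategy is the paper's: evaluate $f$ at $\vopt$ (where the penalty hits its baseline and $g(\vopt)\ge 0$), compare with $f(v)\le f(\vopt)+\epsilon$, then read off an optimality bound and a feasibility bound. The optimality half is exactly the paper's argument, and you are right that the penalty has $\S\A$ rows (the paper's ``$2\S$'' is a typo that cancels anyway).

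Two concrete problems in the feasibility half, which happen to cancel numerically but leave the argument unsound as written:

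\emph{(i) The detour is unnecessary and costs a factor of $2$.} From $(\star)$ you may simply drop the nonnegative term $\alpha|g(v)|$ to obtain
\[
2\sum_j \max\{|y_j|-1,0\}\ \le\ \alpha\,g(\vopt)+\epsilon\ \le\ \frac{2\alpha\S M}{1-\gamma}+\epsilon,
\]
which is precisely what the paper does (it drops the first summand of $f(v)$ and uses the crude bound $g(\vopt)\le 2\S M/(1-\gamma)$). Your route---first extract $|g(v)|\le g(\vopt)+\epsilon/\alpha$, then replace $|g(v)|$ by $g(v)$, then feed back the lower bound on $g(v)$---yields $2\alpha g(\vopt)+2\epsilon$ on the right, twice as large. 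So the step you flag as ``delicate'' (the two-sided control of $|g(v)|$) is not needed at all.

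\emph{(ii) The bound $s_j\le \frac{M}{1-\gamma}$ is false.} From $s_j=\frac{M}{1-\gamma}-\frac{r_j}{2}$ and $r_j\ge -M$ you only get $s_j\le \frac{M}{1-\gamma}+\frac{M}{2}$; the paper effectively uses $s_j\le \frac{2M}{1-\gamma}$. With the correct $s_j$ bound \emph{and} the simpler penalty estimate from (i), the constants collapse exactly to $\frac{2\alpha\S M^2}{(1-\gamma)^2}+\frac{\epsilon M}{1-\gamma}$. As written, your factor-$2$ loss in (i) is being silently compensated by the incorrect $s_j$ bound in (ii).
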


\begin{proof}
Recall that $
|
[\ms^{-1} \ma \vopt - \ms^{-1} b]_i
| \leq 1$ for all $i \in \states$ and  $\vones^\top \vopt \in [\frac{- |S| M}{1 - \gamma}, \frac{|S| M}{1 - \gamma}]$. Consequently,
\[
f(\vopt_f) \leq f(\vopt) = \alpha \left(
\frac{\S M }{1 - \gamma} + \vones^\top v^* \right) + 2\S
\leq \frac{2 \alpha \S M }{1 - \gamma}  + 2\S
\]
Since, $f(v) \leq f(\vopt_f) + \epsilon$ by assumption we have that
\[
\alpha \left(\frac{\S M }{1 - \gamma} + \vones^\top v\right)
+ 2 \S
\leq f(v) \leq \alpha \left(\frac{\S M }{1 - \gamma} + \vones^\top \vopt\right) + 2 \S + \epsilon
\]
and therefore $\vones^\top v \leq \vones^\top \vopt + \frac{\epsilon}{\alpha}$. Furthermore this implies,
\[
2\sum_{i \in \states} \max  \left\{
|
[\ms^{-1} \ma v - \ms^{-1} b]_i
| , 1
\right\} \leq f(v) \leq \frac{2 \alpha \S M}{1 - \gamma} + 2 \S + \epsilon 
\]
Now letting $\delta_1 = \epsilon + \frac{2 \alpha |S| M}{1 - \gamma}$ we see this in turn this implies that $\norm{\ms^{-1} \ma v - \ms^{-1} b}_\infty \leq \frac{\delta_1}{2}+1$. Consequently, $\ms^{-1} \ma v - \ms^{-1} b \geq -(\frac{\delta_1}{2}+1) \vones$ and  
\[
\ma v 
\geq b - s -  \frac{\delta_1}{2}s 
\geq r - \frac{\delta_1}{4}r -  \frac{\delta_1}{2}\left(\frac{M}{1 - \gamma} \vones\right)
\geq r - \frac{\delta_1 M}{(1 - \gamma)} \vones ~.
\]
Considering the maximum of $\frac{\epsilon}{\alpha}$ and $\frac{\delta_1 M}{1 - \gamma}$ yields the result. 
\end{proof}

We now have all the pieces we need to prove our desired result.

\begin{theorem}[Interior Point Based DMDP Running Time]
\label{thm:dmdp_with_ip}
 Given a DMDP we can use \cite{lee2015efficient} to compute an $\epsilon$-approximate policy with high probability in time 
\[
\runtimeipnice
\]
\end{theorem}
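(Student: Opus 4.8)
The plan is to chain the three ingredients already assembled: Lemma~\ref{lem:apx_policy} (approximate LP solution $\Rightarrow$ approximate policy), Lemma~\ref{lem:l1_regress_qual} (approximate $\ell_1$ solution $\Rightarrow$ approximate LP solution), and Corollary~\ref{cor:l1_regress} (the running time of the $\ell_1$ regression solver of \cite{lee2015efficient}). First I would observe that the DMDP $\ell_1$ problem \eqref{eq:new_formulation_lp} is literally an instance of $\min_v \norm{\mM v - c}_1$: stacking the single row $\alpha \vones^\top$ (with target $-\alpha \S M/(1-\gamma)$) on top of the two copies of $\ms^{-1}\ma$ (with targets $\ms^{-1} b + \vones$ and $\ms^{-1} b - \vones$) yields a matrix $\mM$ with $n = 1 + 2\S\A = O(\S\A)$ rows and $d = \S$ columns, and a target $c$ obtained by stacking the corresponding constants. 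Plugging $n = O(\S\A)$ and $d = \S$ into Corollary~\ref{cor:l1_regress} already produces the leading factor $\otilde(n d^{1.5}) = \otilde(\S^{2.5}\A)$; the entire remaining work is to pin down the relative accuracy that must be requested of the solver and to check that its logarithm collapses to $\log(M/((1-\gamma)\epsilon))$.

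I would set parameters by working backwards from the target. To make $\pi$ an $\epsilon$-optimal policy, Lemma~\ref{lem:apx_policy} requires an $\epsilon''$-approximate DMDP LP solution with $\epsilon'' = \Theta(\epsilon(1-\gamma)^2/\S)$. Feeding this into Lemma~\ref{lem:l1_regress_qual}, I would choose $\alpha = \Theta(\epsilon''(1-\gamma)^2/(\S M^2))$ so that the term $2\alpha\S M^2/(1-\gamma)^2$ is at most $\epsilon''/2$, and then demand an absolute $\ell_1$ accuracy $\eta = \Theta\!\left(\min\{\epsilon''(1-\gamma)/M,\ (\epsilon'')^2 (1-\gamma)^2/(\S M^2)\}\right)$ so that both $\eta/\alpha \le \epsilon''$ and $\eta M/(1-\gamma) \le \epsilon''/2$; this makes the $\max$ in Lemma~\ref{lem:l1_regress_qual} at most $\epsilon''$. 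Since Corollary~\ref{cor:l1_regress} controls only the \emph{relative} error $\norm{\mM v - c}_1 \le \OPT_{\ell_1} + \epsilon_{\mathrm{rel}}\norm{c}_1$, I would convert by setting $\epsilon_{\mathrm{rel}} = \eta/\norm{c}_1$, which requires bounding $\norm{c}_1$. Here I would use that each coordinate of $s$ and $b$ lies in $\Theta(M/(1-\gamma))$ (as $r_a(i)\in[-M,M]$), so $\ms^{-1}b$ has all entries $\Theta(1)$, giving $\norm{c}_1 = O(\S\A + \alpha \S M/(1-\gamma)) = O(\S\A)$ once $\alpha$ is substituted.

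It then remains to verify the logarithmic factor. Both $\eta$ and $\norm{c}_1$ are fixed polynomials in $\S,\A,M,(1-\gamma)^{-1},\epsilon^{-1}$, so $1/\epsilon_{\mathrm{rel}} = \norm{c}_1/\eta = \poly(\S,\A,M,(1-\gamma)^{-1},\epsilon^{-1})$ and hence $\log(1/\epsilon_{\mathrm{rel}}) = O(\log \S + \log \A + \log M + \log\frac{1}{1-\gamma} + \log\frac1\epsilon)$. The $\log\S$ and $\log\A$ contributions are polylogarithmic in the leading $\S^{2.5}\A$ factor and are absorbed into $\otilde(\cdot)$, leaving exactly the advertised $\log(M/((1-\gamma)\epsilon))$ and total running time $\runtimeipnice$. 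Finally I would note that forming $\ma,\ms,c$ and extracting $\pi_i = \argmax_a [r_a(i) + \gamma\, p_a(i)^\top v]$ each cost $O(\S^2\A)$, which is dominated by $\otilde(\S^{2.5}\A)$.

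The step I expect to be the main obstacle is the bookkeeping in the second paragraph: making sure the single choice of $\alpha$ and $\eta$ simultaneously satisfies all three constraints coming out of Lemma~\ref{lem:l1_regress_qual} (two from the second argument of its $\max$, one from the first) while keeping $\norm{c}_1 = O(\S\A)$, and confirming that no hidden factor forces $\epsilon_{\mathrm{rel}}$ down to a value whose logarithm exceeds $\log(M/((1-\gamma)\epsilon))$ after $\otilde$ absorbs the polylogarithmic $\S,\A$ terms. The contraction and convergence arguments used earlier play no role here; the whole difficulty is tracking the accuracy through the two reductions.
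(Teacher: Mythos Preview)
Your proposal is correct and follows essentially the same approach as the paper: reduce to $\ell_1$ regression via the DMDP $\ell_1$ problem, choose $\alpha$ and the solver accuracy by working backwards through Lemma~\ref{lem:l1_regress_qual} and Lemma~\ref{lem:apx_policy}, invoke Corollary~\ref{cor:l1_regress}, and observe that all accuracy parameters are polynomial in $\S,\A,M,(1-\gamma)^{-1},\epsilon^{-1}$ so the logarithm collapses as claimed. Your treatment is in fact more careful than the paper's own terse proof---you explicitly stack the three blocks to exhibit the $\ell_1$ instance and compute $\norm{c}_1 = O(\S\A)$ from the fact that $\ms^{-1}b$ has $\Theta(1)$ entries, whereas the paper simply asserts $\mM = \ms^{-1}\ma$ and gives a looser $\norm{b}_1$ bound; neither discrepancy affects the final running time since only $\log(1/\epsilon_{\mathrm{rel}})$ enters.
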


\begin{proof}
The result follows from using Corollary~\ref{cor:l1_regress} to solve the DMDP $\ell_1$ problem.
By Lemma~\ref{lem:apx_policy} it suffices to have an $\frac{\epsilon (1 - \gamma)^2}{8 \S}$ approximate solution to the $\ell_1$ problem. Therefore, we choose $\alpha = \frac{\epsilon (1-\gamma)^4}{32 M^2 \S^2 }$ and $\epsilon' = \max\{\frac{\alpha \epsilon (1 - \gamma)^2}{8 \S}, \frac{\epsilon (1 - \gamma)^3}{16 M \S} \}$ and use Corollary~\ref{cor:l1_regress} to compute an $\epsilon'$-optimal solution to the DMDP $\ell_1$ problem. This can be used to compute an $\epsilon$-approximate policy through Lemma~\ref{lem:apx_policy}. The running time follows from Corollary~\ref{cor:l1_regress}, as $\mM = \ms^{-1} \ma \in \R^{(\states \times \actions) \times \states}$  and we solve to error $O(\frac{\epsilon' (1-\gamma)}{|S|M})$ as for $b$ in the corollary $\|b\|_1 = O(|S| M / (1 - \gamma))$.
\end{proof}

\end{document}